\documentclass[12pt,titlepage]{article}
\usepackage{geometry}
\geometry{verbose}
\setcounter{secnumdepth}{2}
\usepackage{color}
\usepackage{amsmath}
\usepackage{amsthm}
\usepackage{amssymb}
\usepackage{graphicx}
\usepackage[unicode=true,
 bookmarks=false,
 breaklinks=false,pdfborder={0 0 1},backref=section,colorlinks=false]
 {hyperref}
\hypersetup{
 colorlinks,colorlinks,citecolor=blue,linkcolor=blue}

\makeatletter

\providecommand{\tabularnewline}{\\}


\graphicspath{{./figures/}} 

\linespread{1.6}

\hyphenation{inter-me-dia-te}
\usepackage{xr}
\usepackage{amsfonts}
\usepackage{array}
\usepackage[mathscr]{eucal}
\usepackage{endnotes}

\setcounter{MaxMatrixCols}{10}

  \theoremstyle{definition}
  \newtheorem{defn}{\protect\definitionname}\theoremstyle{plain}
  \newtheorem{lem}{\protect\lemmaname}\theoremstyle{plain}
  \newtheorem{cor}{\protect\corollaryname}\theoremstyle{plain}
  \newtheorem{assumption}{\protect\assumptionname}\theoremstyle{plain}
\newtheorem{thm}{\protect\theoremname}\theoremstyle{plain}
 \theoremstyle{plain}
  \newtheorem{prop}{\protect\propositionname}\theoremstyle{plain}
 \newtheorem*{lyxalgorithm*}{\protect\algorithmname}
  \theoremstyle{remark}

  \providecommand{\algorithmname}{Algorithm}
  \providecommand{\assumptionname}{Assumption}
  \providecommand{\definitionname}{Definition}
  \providecommand{\lemmaname}{Lemma}
  \providecommand{\propositionname}{Proposition}
\providecommand{\corollaryname}{Corollary}
\providecommand{\theoremname}{Theorem}

\def\1{{\mathbb 1}}

\makeatother

  \providecommand{\corollaryname}{Corollary}
  \providecommand{\definitionname}{Definition}
  \providecommand{\lemmaname}{Lemma}
  \providecommand{\propositionname}{Proposition}
  \providecommand{\remarkname}{Remark}
\providecommand{\theoremname}{Theorem}

\begin{document}

\title{Spanning Tests for \\ Markowitz Stochastic Dominance}

\author{Stelios Arvanitis\\
Athens University of Economics and Business {\normalsize{}}\thanks{{\small{}Department of Economics, 76, Patision Street, GR10434, Athens,
Greece; e-mail: stelios@aueb.gr}} \and Olivier Scaillet\\
University of Geneva and Swiss Finance Institute {\normalsize{}}\thanks{{Corresponding author: {\small{}GFRI, Bd du Pont d'Arve 40, 1211
Geneva, Switzerland; e-mail: olivier.scaillet@unige.ch}}} \and Nikolas Topaloglou\\
Athens University of Economics and Business {\normalsize{}}\thanks{{\small{}Department of International European \& Economic Studies,
76, Patision Street, GR10434, Athens, Greece; e-mail: nikolas@aueb.gr}}}

\date{First draft: January 2018\\ This draft: July 2018}
\maketitle
\begin{abstract}
We derive properties of the cdf of random variables defined as saddle-type
points of real valued continuous stochastic processes. This facilitates the 
derivation of the first-order asymptotic properties of tests for stochastic spanning given some
stochastic dominance relation. We define the concept of Markowitz
stochastic dominance spanning, and develop an analytical representation
of the spanning property. We construct a non-parametric test for spanning
based on subsampling, and derive its asymptotic exactness and consistency. 
The spanning methodology determines whether
introducing new securities or relaxing investment constraints improves
the investment opportunity set of investors driven by Markowitz stochastic
dominance. In an application to standard data sets of historical stock
market returns, we reject market portfolio Markowitz efficiency as
well as two-fund separation. Hence, we find evidence that equity
management through base assets can outperform the market, for investors
with Markowitz type preferences.
\\ \textbf{Key words and phrases}: Saddle-Type Point, Markowitz Stochastic
Dominance, Spanning Test, Linear and Mixed integer
programming, reverse S-shaped utility.\\[3mm] \textbf{JEL Classification:} C12, C14, C44, C58,
D81, G11. 
\\ \textbf{Acknowledgements}: We would like to thank the Editor, Co-Editors and the two referees for constructive criticism and numerous suggestions which have led to substantial improvements over the previous version. We thank the participants at the SFI research days 2018 for helpful comments. 
\end{abstract}

\section{Introduction}

An essential feature of any model trying to understand asset prices
or trading behavior is an assumption about investor preferences, or
about how investors evaluate portfolios. The vast majority of models
assume that investors evaluate portfolios according to the expected
utility framework. Investors are assumed to act as non -atiable and
risk averse agents, and their preferences are represented by increasing
and globally concave utility functions.

Empirical evidence suggests that investors do not always act as risk
averters. Instead, under certain circumstances, they behave in a much
more complex fashion exhibiting characteristics of both risk loving
and risk averting. They seem to evaluate wealth changes of assets
w.r.t.\ benchmark cases rather than final wealth positions. They behave
differently on gains and losses, and they are more sensitive to losses
than to gains (loss aversion). The relevant utility function can be
either concave for gains and convex for losses (S-Shaped) or convex
for gains and concave for losses (reverse S-Shaped). They seem to
transform the objective probability measures to subjective ones using
transformations that potentially increase the probabilities of negligible
(and possibly averted) events, which, in some cases, share similar analytical
characteristics to the aforementioned utility functions. Examples
of risk orderings that (partially) reflect such findings are the dominance
rules of behavioral finance (see Friedman and Savage (1948), Baucells
and Heukamp (2006), Edwards (1996), and the references therein).

Accordingly, stochastic dominance has been used over the last decades
in this framework, having more generally evolved into an important
concept in the fields of economics, finance and statistics/econometrics
(see inter alia Kroll and Levy (1980), McFadden (1989), Levy (1992),
Mosler and Scarsini (1993), and Levy (2005)), since it
enables inference on the issue of optimal choice in a non-parametric
setting. Several statistical tools have been developed to test whether, given some fixed notion of stochastic dominance,
a probability distribution of interest (or some random element that represents it) dominates any other
similar distribution in a given set, 
i.e., the former is super-efficient over the latter set (see Arvanitis et al.\ (2018)). Analogous procedures have been developed to test whether this distribution is not dominated by any other member of the given set, i.e., whether it is an efficient element of it (see Linton Post and Wang (2014)). We can find some illustrative examples  in the 
application sections of Horvath, Kokoszka, and Zitikis (2006), where
interest lies in distributions of income, or Post and Levy (2005),
Scaillet and Topaloglou (2010), Linton, Post and Whang (2014), where interest lies in financial
portfolios.

There is a large evolving literature on the first (FSD) and on the second (SSD) order stochastic dominance. 
We can characterize FSD  via the choice under uncertainty of every non-satiable investor, while we can characterize SSD by the analogous choice
of every risk averse and non-satiable investor (see Hadar and Russell
(1969), Hanoch and Levy (1969), and Rothschild and Stiglitz (1970)). Higher order stochastic dominance relations impose more restrictions on the underlying utilities of the set of investors while retaining non-satiety and risk aversion. Dropping global risk aversion, Levy and Levy (2002) formulate the notions
of prospect stochastic dominance (PSD) (see also Levy and Wiener (1998),
Levy and Levy (2004)) and Markowitz stochastic dominance (MSD). Those
notions investigate choices by investors who have S-shaped utility
functions and reverse S-shaped utility functions. Arvanitis and Topaloglou
(2017) accordingly develop consistent statistical tests for PSD and MSD super-efficiency.

Given a stochastic dominance relation, the concept of stochastic spanning
subsumes the aforementioned notion of super-efficiency. It is an idea
of Thierry Post, influenced by Mean-Variance spanning
in Huberman and Kandell (1987), that was formulated in the context
of second order stochastic dominance in Arvanitis et al.\ (2018). It is yet
generalizable to arbitrary stochastic dominance relations. Given such a relation, and if the underlying set of efficient elements, i.e., the efficient set, is non-empty, a spanning
set is simply any superset of the efficient set. As such, we can use a spanning set to provide an ''outer approximation''
of the underlying efficient set, and/or, when small enough, to provide
with a desirable reduction of the initial set of distributions upon which the stochastic dominance ordering is defined, and which
could be complicated. In such a case, we can reduce the examination
of the optimal choice problem, to a potentially easier and more parsimonious one. Both issues
are  of interest to financial economics since the underlying distributions
often represent the return behaviour of financial assets and the dominance orderings reflect classes
of investor preferences (e.g.\ for the FSD and SSD, as well as the
PSD and MSD rules and their relations to classes of utility functions,
see Levy and Levy (2002)). Those notions could also be of potential interest in any
field of economic theory or decision science that examines optimal
choice under uncertainty.

For example, if a strict subset of a universe of available assets
is known to be spanning w.r.t.\ a stochastic dominance relation that
reflects all preferences with some sort of combination of local risk
aversion with local risk seeking behavior (see for example the MSD
preorder defined in Section 3.1), any investor with such a disposition
towards risk can safely restrict her choice to the spanning set. On
the contrary, if it is not spanning, there must exist investors with
suchlike preferences that benefit from the enlargement of the investment
opportunities from the subset to the superset. This implies that stochastic
spanning can be useful in extracting important properties of
financial markets for investment decisions taylor made for particular shapes of utility functions.

Hence the following question naturally arises: for some fixed stochastic
dominance relation, is a given set of assets spanned by a (possibly
economically relevant) subset? When the two sets are not equal, spanning occurs if and only if a functional defined by
a complex recursion of optimizations is zero (see for example the
discussion in page 6 of Arvanitis et al.\ (2018) for the case of SSD,
or Proposition \ref{M-equiv} below for the case of MSD). Its empirical verification
is usually analytically intractable due to the dependence
of the functional on the generally unknown underlying distributions
and/or due to the complexity of the optimizations involved. Hence,
this is not of direct practical use. However, we can design non-parametric 
tests of the null hypothesis of spanning given the existence of empirical
information. The limit theory
of tests for stochastic spanning\footnote{Spanning tests subsume as special cases tests of super-efficiency
w.r.t.\ the underlying preorder. For example, procedures developed
in Scaillet and Topaloglou (2010), Arvanitis and Topaloglou (2017),
and Linton et al.\ (2014), can be considered as spanning tests for
singleton spanning sets.} usually involves null weak limits represented as a finite recursion
of optimization functionals applied on some relevant Gaussian process
that could have the form of a saddle-type functional. The possibility
of the existence of atoms in their distribution affects the issue
of asymptotic exactness of the aforementioned tests which are usually
based on resampling procedures such as bootstrap and subsampling (Linton
et al.\ (2005)). In order to obtain exactness, we cannot thus rely
on standard probabilistic results used in the previous work on tests of super-efficiency,
due to the complexity of the aforementioned functional.

Hence, our first contribution is the theoretical study of
continuity properties of the cdf of random variables defined as saddle type points of real valued stochastic processes. Section 2 of the
paper sets up the probabilistic framework, and derives new properties of the law of a random variable defined by
a finite number of nested optimizations on a continuous process w.r.t.\
possibly interdependent parameter spaces. Beside its usefulness for the limit theory of spanning tests developed in this paper, this result is also
a non-trivial extension to results concerning suprema of other stochastic
processes and can be useful in other econometric settings (see Section 2 for references and examples).

Our second contribution is the following. The results in Arvanitis et
al. (2018) concern the concept of stochastic spanning w.r.t.\ the SSD
relation, which essentially represents all preferences with global
risk aversion, and are derived in a context of bounded support for
the underlying distributions. We expect that analogous, yet possibly
more complex results,
on the properties of spanning sets, their representation by relevant
functionals, the construction of testing procedures, and the derivation
of their limit theory hold if we extend to local risk aversion and
general supports. Statistical tests concerning the issue of super-efficiency
w.r.t.\ stochastic dominance rules representing local attitudes towards
risk have already appeared in the literature (see for example Post
and Levy (2005), or Arvanitis and Topaloglou (2017)), but to our knowledge
the concept of spanning has not been studied yet for such dominance relations.

Section 3 investigates the concept
of stochastic spanning w.r.t.\ the MSD preorder in the context
of financial portfolios formation. We define the notion
and provide with an original characterization of spanning by the zero
of a functional. Using the principle of analogy, we define
the non-parametric test statistic, derive its limit distribution under the null hypothesis,
and define a subsampling algorithm for the approximation of the asymptotic
critical values. Among others, we use the new probabilistic
results of Section 2 and a novel combinatorial argument, for the derivation of asymptotic
exactness when the relevant limit distribution is non-degenerate and
a restriction on the significance level holds. In particular, we derive
consistency of the subsampling procedure. In contrast to the results
in Arvanitis et al.\ (2018), we allow for unbounded supports for the return
distributions, and we suppose that the relevant parameter spaces are simplicial
complexes. We explain in Section 3 why  those extensions are 
useful and how we have to modify the theoretical arguments to accommodate them.

Section 4 provides with a numerical implementation consisting of a finite
set of Linear Programming (LP) and Mixed Integer Programming (MIP) problems, the latter being
highly non linear optimization problems to solve.

Inspired by Arvanitis and Topaloglou (2017), who show that the market
portfolio is not MSD efficient, we test in an empirical application
in Section 5, whether investors with MSD preferences could beat the
market through equity management, according to Markowitz preferences.
We use equity portfolios as base assets.
We show that the market portfolio is not Markowitz efficient, and
the two-fund separation theorem does not hold for MSD investors. Thus,
combinations of the market and the riskless asset do not span the
portfolios created according to the MSD criterion. We also show that
equity managers with MSD preferences could generate portfolios that
yield 30 times higher cumulative return than the market over the last
50 years. Standard performance and risk measures  show that the optimal MSD portfolios better suit the MSD investors that 
are risk averse for losses and risk lovers for gains.
It achieves a transfer of probability mass from the left to the right 
tail of the return distribution when compared to the market portfolio.
Its return distribution exhibits less negative skewness, less kurtosis, and less negative tail risk.
Finally, using the four-factor  model of Carhart (1997) and  the  five-factor model of 
Fama and French (2015), we investigate which factors explain these returns. We find that a defensive tilt explains part 
of the performance of the optimal MSD portfolios, while momentum and profitability do not.

In the final section, we conclude. We present the proofs of the main
and the auxiliary results in the Appendix.

\section{Probabilistic Results}

Suppose that $\Lambda_{1},\Lambda_{2},\ldots,\Lambda_{s}$ are separable
metric spaces, and let $\Lambda:=\prod_{i=1}^{s}\Lambda_{i}$ be equipped
with the product topology. Consider the functional $\mathcal{\text{oper}}:=\mbox{opt}_{1}\circ\mbox{opt}_{2}\circ\cdots\circ\mbox{opt}_{s}$
where $\mbox{opt}_{i}=\sup\:\text{or}\:\inf$ w.r.t.\ to some non-empty
compact $\Lambda_{i}^{\star}\subseteq\Lambda_{i}$, for $i=1,\ldots,s$.
When $i>1$, $\Lambda_{i}^{\star}$ is allowed to depend on the elements
of $\prod_{j=1}^{i-1}\Lambda_{i-j}^{\star}$.

The probabilistic framework follows closely Chapter 2 of Nualart (2006).
It consists of a complete probability space $\left(\Omega,\mathcal{F},\mathbb{P}\right)$,
where $\mathcal{F}$ is generated by some isonormal Gaussian process
$W=\left\{ W\left(h\right),h\in H\right\} $ and $H$ is an appropriate
Hilbert space. $X$ is some vector valued stochastic process on $\Lambda$
with sample paths in the space of continuous functions $\Lambda\rightarrow\mathbb{\mathbb{R}}^{q}$
equipped with the uniform metric. In many applications, $X$ is a
Gaussian weak limit of some net of processes. We denote the Malliavin
derivative operator (see Nualart (2006)) by $D$ and by $\mathbb{D}^{1,2}$
the completion of the family of Malliavin differentiable random variables
w.r.t.\ the norm $\sqrt{\mathbb{E}\left[z^{2}+\left(Dz\right)^{2}\right]}$.

We are interested in the form of the support and the continuity properties
of the cdf of the law of the random variable $\xi:=\mathcal{\text{oper}}X_{\lambda}$.
The following assumption describes sufficient conditions for the aforementioned
law to have a countable number of atoms while being absolutely continuous
when restricted between their successive pairs. Given this, the result
to be established below, allows first for the random variable at hand
to be defined by saddle-type functionals,\footnote{The term ''saddle-type'' is used here in a somewhat abusive manner,
since commutativity between the successive optimization operations
does not hold in general.} and second for discontinuities of the resulting cdf. Hence, it generalizes
known results concerning the absolute continuity of the distribution
of suprema of stochastic processes. For an excellent treatment of
those see inter alia, Propositions 2.1.7 and 2.1.10 of Nualart (2006),
and for the discontinuities related literature on the fibering method
and its probabilistic applications, see Lifshits (1983). 
\begin{assumption}
\label{as1}For the process $X$ suppose that: 
\begin{enumerate}
\item $\mathbb{E}\left[\sup_{\Lambda}\left(X_{\lambda}^{2}\right)\right]<+\infty$. 
\item For all $\lambda\in\Lambda$, $X\left(\lambda\right)\in\mathbb{D}^{1,2}$,
and the $H$ -valued process $DX$ has a continuous version and $\mathbb{E}\left[\sup_{\Lambda}\|DX_{\lambda}\|^{2}\right]<+\infty$. 
\item For some countable $\mathcal{T}\subset\mathbb{R}$, $\mathbb{P}\left(\left\{ \xi=\tau\right\} \cap\Omega_{\tau}\right)\geq0$
holds if and only if $\tau\in\mathcal{T}$, where $\text{\ensuremath{\Omega}}_{\tau}$
denotes $\left\{ \omega\in\Omega:DX_{\lambda}\left(\omega\right)=0\mbox{ for some \ensuremath{\lambda}}\mbox{ such that }\tau=X_{\lambda}\left(\omega\right)\right\} $. 
\end{enumerate}
\end{assumption}
In the usual case where $X$ is zero-mean Gaussian,
we can establish the first condition by strong results that imply
the subexponentiality of the distribution of $\sup_{\Lambda}X_{\lambda}$,
like Proposition A.2.7 of van der Vaart and Wellner (1996). Its validity
follows from conditions that restrict the packing numbers of $\Lambda\times\mathbb{R}$
metrized as a totally bounded metric space by the use of the covariance
function of $X$, to be polynomially bounded, something that is easily
established if the $\Lambda_{i}$ are subsets of Euclidean spaces
for all $i$. In the same respect, the second condition is easily
established as in Nualart (2006) (see page 110). More specifically,
if $K\left(\lambda_{1},\lambda_{2}\right)$ is the aforementioned
covariance function, then $H$ is the closed span of $\left\{ h_{\lambda}\left(\cdot\right)=K\left(\lambda,\cdot\right),\lambda\in\Lambda\right\} $,
with inner product $\left\langle h_{\lambda_{1}},h_{\lambda_{2}}\right\rangle _{H}=K\left(\lambda_{1},\lambda_{2}\right)$,
whence $DX_{\lambda}=K\left(\lambda,\lambda\right)$. In this case,
the previous along with dominated convergence implies the existence
of $\mathbb{E}\left[\sup_{\Lambda}\|DX_{\lambda}\|^{2}\right]$. The
third condition is the most difficult to establish. In the cases that
we have in mind, we can derive ''outer approximations'' of $\mathcal{T}$ by analogous, as well as easier to establish, properties
of random variables that are stochastically dominated by $\xi$, see
for example the corollary below.

We are now able to state and prove the main probabilistic result. 
\begin{thm}
\label{aac}Under Assumption \ref{as1}, the law of $\xi$ has connected
support, say $\textnormal{supp}\left(\xi\right)$, that contains $\mathcal{T}$.
If $\tau\in\mathcal{T}$, the cdf of the law evaluated at $\tau$
has a jump discontinuity of size at most $\mathbb{P}\left(\text{\ensuremath{\Omega}}_{\tau}\right)$.
If $\tau_{1},\tau_{2}$ are successive elements of $\mathcal{T}$,
the law restricted to $\left(\tau_{1},\tau_{2}\right)$ is absolutely
continuous w.r.t.\ the Lebesgue measure. If $\mathcal{T}$ is bounded
from below then the law restricted to $\left(-\infty,\inf\mathcal{T}\right)$
is absolutely continuous w.r.t.\ the Lebesgue measure. Dually, if $\mathcal{T}$
is bounded from above then the law restricted to $\left(\sup\mathcal{T},+\infty\right)$
is absolutely continuous w.r.t.\ the Lebesgue measure. 
\end{thm}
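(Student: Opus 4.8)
\textit{Proof idea.} The plan is to reduce everything to the Malliavin-calculus criteria for (local) absolute continuity, after first establishing that $\xi$ belongs to $\mathbb{D}^{1,2}$ together with an envelope-type identification of its derivative. I would first argue, by induction on the number $s$ of nested optimizations, that $\xi\in\mathbb{D}^{1,2}$ and that, on the event where the optimizing tuple $\hat\lambda=(\hat\lambda_{1},\ldots,\hat\lambda_{s})$ is attained (which it is, by compactness of the $\Lambda_{i}^{\star}$ and path-continuity of $X$), one has $D\xi=DX_{\hat\lambda}$. The base case $s=1$ is Propositions 2.1.7 and 2.1.10 of Nualart (2006) for a supremum (and its obvious infimum analogue) of a continuous process in $\mathbb{D}^{1,2}$ under Assumption \ref{as1}(1)--(2); the inductive step applies the same result fibrewise, with a measurable-selection argument to handle measurability of the optimizer and the continuous dependence of $\Lambda_{i}^{\star}$ on the previously selected coordinates. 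This envelope identity is the engine of the proof, since it ties the degeneracy set $\{\|D\xi\|_{H}=0\}$ exactly to the occurrence of a critical argument of $X$ at the optimizer, i.e.\ to the events $\Omega_{\tau}$.

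For connectedness of $\mathrm{supp}(\xi)$, I would use that $\mathrm{oper}$ is a continuous functional of the sample path in the uniform metric, so that $\xi$ is the image of the path of $X$ under a continuous map; since the path-space support of the law of $X$ is connected (being, for the Gaussian limits we have in mind, a translate of a closed subspace) and the continuous image of a connected set is connected, $\mathrm{supp}(\xi)$ is connected. That $\mathcal{T}\subseteq\mathrm{supp}(\xi)$ follows because for $\tau\in\mathcal{T}$ one has $\mathbb{P}(\xi=\tau)\ge\mathbb{P}(\{\xi=\tau\}\cap\Omega_{\tau})>0$ by Assumption \ref{as1}(3), so each such $\tau$ is an atom and hence lies in the support.

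Next I would pin down the jumps. For fixed $\tau$ the jump of the cdf at $\tau$ equals $\mathbb{P}(\xi=\tau)$, which splits as $\mathbb{P}(\{\xi=\tau\}\cap\Omega_{\tau})+\mathbb{P}(\{\xi=\tau\}\cap\Omega_{\tau}^{c})$. On $\Omega_{\tau}^{c}$ there is, by definition, no critical argument of $X$ at level $\tau$; in particular the optimizer $\hat\lambda$, which satisfies $X_{\hat\lambda}=\xi=\tau$, cannot be critical, so $D\xi=DX_{\hat\lambda}\neq0$ and $\|D\xi\|_{H}>0$ there. Applying the Bouleau--Hirsch local criterion (absolute continuity of the image of $\mathbf{1}_{\{\|D\xi\|_{H}>0\}}\,\mathbb{P}$ under $\xi$) to the Lebesgue-null set $\{\tau\}$ yields $\mathbb{P}(\{\xi=\tau\}\cap\Omega_{\tau}^{c})=0$. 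Hence the jump equals $\mathbb{P}(\{\xi=\tau\}\cap\Omega_{\tau})\le\mathbb{P}(\Omega_{\tau})$, and by Assumption \ref{as1}(3) it is strictly positive precisely when $\tau\in\mathcal{T}$; in particular every atom of the law lies in $\mathcal{T}$.

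Finally, for absolute continuity on an interval $(\tau_{1},\tau_{2})$ with $(\tau_{1},\tau_{2})\cap\mathcal{T}=\emptyset$ (and, verbatim, on the tails $(-\infty,\inf\mathcal{T})$ and $(\sup\mathcal{T},+\infty)$), the Bouleau--Hirsch criterion already gives absolute continuity of the part of the law carried by $\{\|D\xi\|_{H}>0\}$, so it remains to show that $A:=\{\|D\xi\|_{H}=0\}$ contributes no mass over $(\tau_{1},\tau_{2})$, i.e.\ $\mathbb{P}(\{\xi\in(\tau_{1},\tau_{2})\}\cap A)=0$. By the envelope identity, on $A$ the optimizer is a critical argument at level $\xi$, whence $\{\xi\in(\tau_{1},\tau_{2})\}\cap A=\bigsqcup_{\tau\in(\tau_{1},\tau_{2})}(\{\xi=\tau\}\cap\Omega_{\tau})$, a disjoint union of $\mathbb{P}$-null sets, each null because the corresponding $\tau\notin\mathcal{T}$. \emph{This is the main obstacle}: the union is uncountable, so nullity of each piece does not by itself give nullity of the union, and excluding a residual singular-continuous component is the crux. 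I would settle it with a fibering argument in the spirit of Lifshits (1983): in the canonical Gaussian case $\|DX_{\lambda}\|_{H}^{2}=K(\lambda,\lambda)$, so $\|D\xi\|_{H}=0$ forces $X_{\hat\lambda}$ to be a.s.\ constant and hence $\xi$ to equal a deterministic critical value; for such a value $\tau$ the witness is deterministically critical, so $\Omega_{\tau}$ is a.s.\ all of $\Omega$ and $\{\xi=\tau\}\cap\Omega_{\tau}=\{\xi=\tau\}$. Combining this with the fact that at most countably many members of a disjoint family carry positive probability, and with Assumption \ref{as1}(3) identifying those members with $\mathcal{T}$, forces the pushforward of $\mathbf{1}_{A}\,\mathbb{P}$ to be purely atomic with atoms in $\mathcal{T}$. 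Since $(\tau_{1},\tau_{2})$ (respectively each tail) meets no point of $\mathcal{T}$, the restricted law has neither atomic nor singular part there and is therefore absolutely continuous.
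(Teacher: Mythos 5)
Your toolkit overlaps with the paper's (Malliavin differentiability of $\xi$, a derivative identity at optimizers, Bouleau--Hirsch), and your splitting of the jump at $\tau$ into the $\Omega_{\tau}$ and $\Omega_{\tau}^{c}$ pieces is a clean way to reach the bound $\mathbb{P}\left(\Omega_{\tau}\right)$. However, the proposal has a fatal flaw at its self-declared engine: the envelope identity $D\xi=DX_{\hat{\lambda}}$ is simply false for saddle-type operators, so it cannot be obtained ``by induction, applying the same result fibrewise''. Concretely, take $W=W\left(h\right)$ with $\left\Vert h\right\Vert _{H}=1$, $\Lambda_{1}=\left[0,1\right]$, $\Lambda_{2}=\left\{ \kappa_{1},\kappa_{2}\right\} $, $X_{\lambda,\kappa_{1}}=\lambda$, $X_{\lambda,\kappa_{2}}=W-\lambda$, and $\text{oper}=\sup_{\lambda}\inf_{\kappa}$. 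One checks that Assumption \ref{as1} holds (with $\mathcal{T}=\left\{ 1\right\} $), and on the event $\left\{ 0<W<2\right\} $ one has $\xi=W/2$, hence $D\xi=h/2$, while $DX$ equals $0$ or $h$ at every point attaining the value $\xi$; the derivative of a $\sup$-$\inf$ value function is in general a mixture over the saddle set (Danskin-type behaviour), not the derivative at an optimizer. This is precisely the difficulty the paper's proof is built around: its central argument (the sets $G_{\lambda,r,k,h}$, a countable covering, and weak $L^{2}\left(\Omega,H\right)$ convergence of derivatives of localized approximations) is devoted to a carefully localized almost-sure statement of this type, not a pointwise envelope theorem, and your steps 1, 3 and 4 all inherit the gap because they quote the identity as if it were routine.

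The second genuine problem is that twice you import Gaussian structure that Assumption \ref{as1} does not grant, even though the theorem is explicitly advertised as allowing non-Gaussian $X$ (that is its claimed advance over Lifshits (1983)). For connectedness of $\textnormal{supp}\left(\xi\right)$ you need the path-space law of $X$ to have connected support, which you justify only ``for the Gaussian limits we have in mind''; the paper instead applies Proposition 2.1.7 of Nualart (2006) to $\xi$ itself, which gives an interval support for \emph{any} scalar $\mathbb{D}^{1,2}$ random variable, with no structural assumption on $X$. More seriously, your resolution of the crux --- the uncountable union $\bigcup_{\tau\in\left(\tau_{1},\tau_{2}\right)}\left(\left\{ \xi=\tau\right\} \cap\Omega_{\tau}\right)$ --- relies on $\left\Vert DX_{\lambda}\right\Vert _{H}^{2}=K\left(\lambda,\lambda\right)$, i.e.\ on $DX_{\lambda}$ being deterministic ($X$ in the first Wiener chaos). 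Under Assumption \ref{as1}, $DX_{\lambda}$ is a genuine $H$-valued random variable, $\left\Vert D\xi\right\Vert _{H}=0$ does not force $X_{\hat{\lambda}}$ to be a.s.\ constant, and the claimed pure atomicity of the mass on $\left\{ \left\Vert D\xi\right\Vert _{H}=0\right\} $ does not follow. The paper's route is different and avoids fibering altogether: it restricts the process to sub-events such as $\left\{ \xi\in\left(\tau_{1},\tau_{2}\right)\right\} $ (after removing the exceptional sets attached to $\mathcal{T}$), argues that the restricted process satisfies Assumption \ref{as1} with empty $\mathcal{T}$, and in that case concludes by the global Bouleau--Hirsch theorem (Theorem 2.1.3 of Nualart (2006)); in other words, the passage from null slices to a null union is carried by Assumption \ref{as1}.3 itself, which is exactly what defines $\mathcal{T}$, rather than by any Gaussian or first-chaos structure. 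As written, your argument proves at best a special case of the statement.
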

Theorem \ref{aac} encompasses the standard absolute continuity results
in the aforementioned literature that hold when $\text{oper}$ is
a composition of suprema (or dually infima), the parameter spaces
$\Lambda$ are not dependent, and $\mathbb{P}\left(\Omega_{\tau}\right)=0$,
for all $\tau\in\mathcal{T}$. Furthermore, even in the special case
where $\mathcal{T}$ is a singleton, the result is a generalization
of Theorem 2 of Lifshits (1983) since it allows for non-Gaussianity,
dependence between the domains of the optimization operators, as well
as saddle-type optimizations. The following corollary focuses on this
particular case and estimates the size of the potential jump discontinuity
by assuming the existence of an auxiliary random variable that is
stochastically dominated by $\xi$. 
\begin{cor}
\label{cor:used}Suppose that Assumption \ref{as1} is satisfied.
Furthermore, suppose that $\mathcal{T}=\left\{ c\right\} $, $\xi\geq\eta$,
$\mathbb{P}$ a.s., and that $\text{supp}\left(\eta\right)=\left[c,+\infty\right)$.
Then, $\text{supp}\left(\xi\right)=\left[c,+\infty\right)$, its cdf
is absolutely continuous on $\left(c,+\infty\right)$, and it may
have a jump discontinuity of size at most $\mathbb{P}\left(\eta=c\right)$
at $c$. 
\end{cor}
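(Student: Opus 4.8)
The plan is to derive the corollary as a direct specialization of Theorem~\ref{aac} to the case where $\mathcal{T}$ is the singleton $\{c\}$, supplementing the general theorem only where the auxiliary variable $\eta$ is needed to pin down the support and the jump size. First I would invoke Theorem~\ref{aac} with $\mathcal{T}=\{c\}$. Since $\mathcal{T}$ is both bounded below and above (being finite), the theorem's last two clauses apply: the law of $\xi$ restricted to $(-\infty,c)$ and to $(c,+\infty)$ is each absolutely continuous with respect to Lebesgue measure, and at the single point $c$ the cdf may jump by at most $\mathbb{P}(\Omega_{c})$. The support is connected and contains $c$. This already delivers the absolute-continuity claim on $(c,+\infty)$; what remains is to identify $\text{supp}(\xi)$ precisely as $[c,+\infty)$ and to replace the abstract bound $\mathbb{P}(\Omega_{c})$ by the concrete, checkable quantity $\mathbb{P}(\eta=c)$.

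For the support, I would argue by a two-sided inclusion. The hypothesis $\xi\geq\eta$ a.s.\ together with $\text{supp}(\eta)=[c,+\infty)$ forces $\xi\geq c$ a.s., so $\text{supp}(\xi)\subseteq[c,+\infty)$; indeed stochastic dominance in the almost-sure sense transmits the lower bound $c$ from $\eta$ to $\xi$, and connectedness of the support (from Theorem~\ref{aac}) rules out gaps above $c$. For the reverse inclusion $[c,+\infty)\subseteq\text{supp}(\xi)$, I would use that $\xi$ dominates $\eta$ pointwise, so for any $t\geq c$ we have $\mathbb{P}(\xi>t)\geq\mathbb{P}(\eta>t)>0$ whenever $t$ lies in the interior of $\text{supp}(\eta)=[c,+\infty)$; combined with connectedness and the fact that $c\in\text{supp}(\xi)$ by Theorem~\ref{aac}, this gives that every point of $[c,+\infty)$ is a support point. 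Hence $\text{supp}(\xi)=[c,+\infty)$.

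The remaining, and most delicate, step is the bound on the jump at $c$. Theorem~\ref{aac} gives a jump of size at most $\mathbb{P}(\Omega_{c})$, which is an abstract event about the vanishing of the Malliavin derivative $DX_{\lambda}$ at argmax points attaining the value $c$. The plan is to show $\mathbb{P}(\Omega_{c})\leq\mathbb{P}(\eta=c)$, or more directly to bound the atom of $\xi$ at $c$ by the atom of $\eta$ at $c$. The cleanest route exploits the ordering: the jump of the cdf of $\xi$ at $c$ equals $\mathbb{P}(\xi=c)$, since the cdf of $\xi$ is continuous from the left at $c$ on account of absolute continuity just below $c$ (the restriction to $(-\infty,c)$ has no atoms, so $\mathbb{P}(\xi<c)$ accounts continuously for all mass strictly below). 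But on the event $\{\xi=c\}$ we have, using $\xi\geq\eta$ and $\eta\geq c$ a.s., that $c=\xi\geq\eta\geq c$, forcing $\eta=c$; thus $\{\xi=c\}\subseteq\{\eta=c\}$ up to a null set, whence $\mathbb{P}(\xi=c)\leq\mathbb{P}(\eta=c)$.

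The step I expect to be the main obstacle is this last inclusion $\{\xi=c\}\subseteq\{\eta=c\}$ and its interaction with the theorem's abstract bound: one must confirm that the jump produced by Theorem~\ref{aac} is genuinely realized as the atomic mass $\mathbb{P}(\xi=c)$ rather than an overcount, so that bounding the true atom suffices. This hinges on the left-continuity of the cdf at $c$, which I would justify from the absolute continuity of the law on $(-\infty,c)$ granted by the theorem (noting that when $c=\inf\mathcal{T}$ the clause about $(-\infty,\inf\mathcal{T})$ supplies exactly the needed absence of atoms immediately to the left). Once this identification is secure, the chain $\mathbb{P}(\xi=c)\leq\mathbb{P}(\eta=c)$ closes the argument, and the three assertions of the corollary follow. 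The rest is routine bookkeeping inherited from Theorem~\ref{aac}.
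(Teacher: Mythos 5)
Your proposal is correct and follows essentially the same route as the paper's own (much terser) proof: invoke Theorem~\ref{aac} with $\mathcal{T}=\{c\}$, use $\xi\geq\eta\geq c$ a.s.\ together with connectedness of $\text{supp}(\xi)$ and unboundedness inherited from $\text{supp}(\eta)=[c,+\infty)$ to identify $\text{supp}(\xi)=[c,+\infty)$, and bound the jump via the inclusion $\{\xi=c\}\subseteq\{\eta=c\}$, giving $\mathbb{P}(\xi=c)\leq\mathbb{P}(\eta=c)$. Your extra care about left-continuity of the cdf at $c$ is harmless but unnecessary, since $\xi\geq c$ a.s.\ makes $\mathbb{P}(\xi<c)=0$, so the jump at $c$ is exactly the atom $\mathbb{P}(\xi=c)$ by definition of the cdf.
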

The results above, and especially the previous corollary, are useful
for the derivation of the limit theory for our test of stochastic
spanning (see Arvanitis et al.\ (2018) for the case of SSD based on other arguments). For a given pair of sets of probability distributions driven by sets of portfolio allocations, the null hypothesis of spanning 
posits that, for any distribution in the first set, there exists some in the other one that dominates it. Below, such
a hypothesis is represented by a functional of the form
$\sup\sup\inf$ of an appropriate set of moment conditions parameterized
by such a $\Lambda$. We can obtain a test statistic through a
scaled empirical version of this functional. Under the null limit
theory for the test statistic, the results above are useful for
the construction of an asymptotically exact decision procedure based
on a resampling scheme. They do so by providing with restrictions
on the asymptotic significance level that guarantee the convergence
of the critical values to continuity points of the null limiting cdf.
In such frameworks, $X$ is usually zero-mean Gaussian, while $\xi$
is conveniently defined as a difference between infima of $X$
defined on different regions of $\Lambda$ with given properties
(see the following sections for explicit derivations of those properties in the case of
MSD).

We can meet similar probabilistic structures in other econometric
applications. An example concerns the null hypothesis of nesting of
a given statistical model by a set identified model represented by
moment inequalities. More specifically, suppose that given a random
matrix $Y$, a statistical model is comprised by a set of  probability distributions conditional
on $Y$ and parameterized by a Euclidean parameter
$\varphi\in\Phi$. A second statistical model is comprised by the
set of probability distributions conditional on $Y$  that satisfy
the conditional moment inequalities $\mathbb{E}\left[\mathbf{g}\left(\theta\right)\vert Y\right]\leq\mathbf{0}_{d}$,
for some $\theta\in\Theta$, where $\Theta$ is again a subset of
some Euclidean space, the moment function $\mathbf{g}=\left(g_{1},g_{2},\dots,g_{d}\right)$
is finite dimensional and the inequality sign is interpreted pointwisely.
We are interested in testing the hypothesis that the first model is
nested in the second model, i.e., that, for any $\varphi\in\Phi$, there
exists some $\theta\in\Theta$ such that $\mathbb{E}_{\varphi}\left[\mathbf{g}\left(\theta\right)\vert Y\right]\leq\mathbf{0}_{d}$,
where $\mathbb{E}_{\varphi}$ denotes expectation w.r.t.\ the distributions
corresponding to $\varphi$. When $\Phi$ is a singleton,
we obtain specification hypotheses similar to the ones in Guggenberger,
Hahn and Kim (2008). Under some further conditions on the properties
of $\Phi,\Theta$ and $\mathbf{g}$, the null hypothesis of nesting
is equivalent to that $\sup_{\varphi\in\Phi}\sup_{j=1,2,\dots,d}\inf_{\theta\in\Theta}\mathbb{E}_{\varphi}\left[g_{j}\left(\theta\right)/Y\right]\leq0$.
If sampling is available for any $\varphi\in\Phi$ in the first model
(this would be trivial in the specification related to the singleton case
for $\Phi$), we can form test statistics via empirical counterparts
of the functional $\sup_{\varphi\in\Phi}\sup_{j=1,2,\dots,d}\inf_{\theta\in\Theta}\mathbb{E}_{\varphi}\left[g_{j}\left(\theta\right)\vert Y\right]$. Then, the results above are also useful
for the construction of asymptotically exact decision procedures in such a context.

\section{A Spanning Test for MSD}

We now introduce the concept of stochastic spanning for the MSD relation.
We initially provide some order theoretical characterization of the
concept, and derive an analytical representation using a functional
defined by recursive optimizations. We then construct a testing procedure
using a scaled empirical counterpart of that functional and subsampling. We derive its first order limit theory 
mainly thanks to Corollary \ref{cor:used}.

\subsection{MSD and Stochastic Spanning}

Given $\left(\Omega,\mathcal{F},\mathbb{P}\right)$, suppose that
$F$ denotes the cdf of some probability measure on $\mathbb{R}^{n}$
with finite first moment.\footnote{In comparison to the spanning test for the SSD relation of Arvanitis et al.\ (2018),  we do not assume that the random variables have compact supports.} Let $G(z,\lambda,F)$ be $\int_{\mathbb{R}^{n}}\mathbb{I}\{\lambda^{Tr}$$u\leq z\}dF(u)$,
i.e., the cdf of the linear transformation $\mathbb{R}^{n}\ni x\rightarrow\lambda^{Tr}x$
where $\lambda$ assumes its values in $\mathbb{L}$ which is a closed
non-empty subset of $\mathbb{S}=\{\lambda\in\mathbb{R}_{+}^{n}:$\textbf{$\boldsymbol{1}^{Tr}\lambda$}$=1,\}$.
Analogously, let $\mathbb{K}$ denote some distinguished subcollection
of $\mathbb{L}$. In the context of financial econometrics, $F$ usually
represents the joint distribution of $n$ base asset returns, and
$\mathbb{S}$ the set of linear portfolios that can be constructed
upon the previous.\footnote{The base assets are not restricted to be individual securities but
are defined simply as the extreme points of the maximal portfolio
set $\mathbb{S}$.} The parameter set $\mathbb{L}$ represents the collection of feasible
portfolios formed by economic, legal, and/or other investment restrictions. We
denote generic elements of $\mathbb{L}$ by $\lambda,\kappa$, etc.
In order to define the concepts of MSD and subsequently of spanning, we
consider 
\[
\mathcal{J}(z_{1},z_{2},\lambda;F):=\int_{z_{1}}^{z_{2}}G\left(u,\lambda,F\right)du.
\]
\begin{defn}
\label{M-dom}$\kappa$ weakly Markowitz-dominates $\lambda$, denoted
by $\kappa\succcurlyeq_{M}\lambda$, iff 
\begin{equation}
\begin{array}{c}
\Delta_{1}\left(z,\lambda,\kappa,F\right):=\mathcal{J}\left(-\infty,z,\kappa,F\right)-\mathcal{J}\left(-\infty,z,\lambda,F\right)\leq0,\,\forall z\in\mathbb{R}_{-},\:\text{and}\\
\Delta_{2}\left(z,\lambda,\kappa,F\right):=\mathcal{J}\left(z,+\infty,\kappa,F\right)-\mathcal{J}\left(z,+\infty,\lambda,F\right)\leq0,\,\forall z\in\mathbb{R}_{++}.
\end{array}\label{eq:MSD}
\end{equation}
\end{defn}
The existence of the mean of the underlying distribution implies that
we can allow the limits of integration above to assume extended values,
hence the integral differences $\Delta_{1}$ and $\Delta_{2}$ in
(\ref{eq:MSD}) are well defined. Levy and Levy (2002) show that $\kappa\succcurlyeq_{M}\lambda$
iff the expected utility of $\kappa$ is greater than or equal to
the expected utility of $\lambda$ for any utility function in the
set of increasing and, concave on the negative part and convex on
the positive part real functions (termed as reverse S-shaped (at zero)
utility functions). Such utility functions represent preferences towards
risk that are associated with risk aversion for losses and risk loving
for gains. Hence, in financial economics,
Markowitz-dominance is the case iff portfolio $\kappa$ is weakly
prefered to portfolio $\lambda$ by every reverse S-shaped individual
investor.

The uncountable system of inequalities in (\ref{eq:MSD}) defines an order on $\mathbb{L}$. If those are satisfied as equalities,
the pair $\left(\kappa,\lambda\right)$ belongs to the possibly non-trivial
equivalence part of the order. Strict dominance $\kappa\succ_{M}\lambda$
corresponds to the irreflexive part of the order and it holds iff
at least one of the previous inequalities holds strictly for some
$z\in\mathbb{R}$, i.e., portfolio $\kappa$ is strictly prefered to
portfolio $\lambda$ by some reverse S-shaped individual investor.
Finally, given the possibility that $\Delta_{1}$ and/or $\Delta_{2}$
can change sign as functions of $z$, the relation is not generally
total. When this is the case, we cannot compare $\kappa$ and $\lambda$
w.r.t.\ $\succcurlyeq_{M}$. 

As in the Mean-Variance case, we can define the efficient set
of $\mathbb{L}$ w.r.t.\ $\succcurlyeq_{M}$, as the set of maximal
elements of the preorder. This means that $\kappa$ lies in the efficient
set iff for any $\lambda\in\mathbb{L}$, either $\kappa\succcurlyeq_{M}\lambda$
or $\kappa$ is incomparable to $\lambda$. The efficient set 
has the property that, for any $\lambda\in\mathbb{L}$, there exists
some $\kappa$ in the former for which $\kappa\succcurlyeq_{M}\lambda$.
Any superset of the efficient set has also the same
property, but the efficient set is minimal (if we ignore equivalencies)
w.r.t.\ this property. This observation motivates the definition of MSD
spanning. This is analogous to the concept
of Mean-Variance spanning introduced by Huberman and Kandel (1987),
and extended to the SSD case by Arvanitis et al.\ (2018). 
\begin{defn}
\label{M-span}$\mathbb{K}$ Markowitz-spans $\mathbb{L}$ (say $\mathbb{K}\succcurlyeq_{M}\mathbb{L}$)
iff for any $\lambda\in\mathbb{L}$, $\exists\kappa\in\mathbb{K}:\kappa\succcurlyeq_{M}\lambda$.
If $\mathbb{K=\left\{ \kappa\right\} }$, $\kappa$ is termed as Markowitz
super-efficient. 
\end{defn}
Spanning sets always exist since by construction $\mathbb{L}\succcurlyeq_{M}\mathbb{L}$.
The efficient set minimally (ignoring
equivalencies) spans $\mathbb{L}$, in the sense that any other spanning
set must be a superset of it. Hence, we can view any spanning subset of $\mathbb{L}$
 as an ``outer approximation'' of the efficient
set. Due to the complexity of (\ref{eq:MSD}) w.r.t.\ the Mean-Variance case, the mathematical properties of the efficient
set are generally difficult to derive, but fortunately, they are
approximable by properties of sequences of spanning sets that converge to it (see below).

Furthermore, if $\mathbb{K}\succcurlyeq_{M}\mathbb{L}$, the optimal
choice of every reverse S-shaped investor function lies necessarily
inside $\mathbb{K}$. Hence, if $\mathbb{K\subset\mathbb{L}}$ and
spanning occurs, we can reduce the problem of optimal choice within
$\mathbb{L}$ to the analogous problem within $\mathbb{K}$, and the
latter is more parsiminious than the former. Dually, if $\mathbb{K}$
does not span $\mathbb{L}$, there must exist optimal choices, and
thereby investment opportunities, in the increment $\mathbb{L}-\mathbb{K}$
for some MSD investors. Therefore we can motivate the interest
in the verification of spanning by tractability reasons  related to 
optimal portfolio choice, or by detection of new investment opportunities.

Super-efficiency (Arvanitis and Topaloglou
(2017)) corresponds to the existence of a greatest element for $\succcurlyeq_{M}$,
i.e., of a unique (excluding equivalencies) element that weakly Markowitz-dominates
every element of $\mathbb{L}$. Given the complexity of (\ref{eq:MSD}),
greatest elements do not generally exist. This implies that the notion of spanning not only encompasses that of super-efficiency
but it is also a property of the order that will more often hold.

The above raise the following question: given $\mathbb{\ensuremath{K}}$,
a non empty subset of $\mathbb{L}$,\footnote{We do not look at the issue of the selection of $\mathbb{\ensuremath{K}}$. Here, the latter is considered as given. In some cases, we can
select $\mathbb{\ensuremath{K}}$ by economically relevant information,
see for example the application in Arvanitis et al. (2018) for SSD.
We leave the issue of the selection of a candidate spanning
set, especially when this selection is related to the approximation
of the efficient set, for future research.} is $\mathbb{K}\succcurlyeq_{M}\mathbb{L}$? The following proposition
provides with an analytical characterization by means of nested optimizations. 
\begin{prop}
\label{M-equiv}Suppose that $\mathbb{K}$ is closed. Then $\mathbb{K}\succcurlyeq_{M}\mathbb{L}$
iff 
\begin{equation}
\xi\left(F\right):=\max_{i=1,2}\sup_{\lambda\in\mathbb{L}}\sup_{z\in A_{i}}\inf_{\kappa\in\mathbb{K}}\Delta_{i}\left(z,\lambda,\kappa,F\right)=0,\label{eq:2}
\end{equation}
where $A_{1}=\mathbb{R}_{-},\:A_{2}=\mathbb{R}_{++}$. Spanning does
not occur iff $\xi\left(F\right)>0$. 
\end{prop}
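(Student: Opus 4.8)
The plan is to prove both implications of the stated equivalence after recording two elementary facts: the functional $\xi(F)$ is always nonnegative, and spanning can be analysed one $\lambda$ at a time. For nonnegativity, fix $\lambda\in\mathbb{L}$ and let $z\to-\infty$ in the $i=1$ block. Writing $\mathcal{J}(-\infty,z,\kappa,F)=\mathbb{E}_{F}[(z-\kappa^{Tr}X)_{+}]$ (finite since $F$ has a finite first moment), one has $\Delta_{1}(z,\lambda,\kappa,F)\to0$ as $z\to-\infty$, uniformly over $\kappa$ in the compact simplex $\mathbb{S}\supseteq\mathbb{K}$, because $\sup_{\kappa\in\mathbb{S}}\mathbb{E}_{F}[|\kappa^{Tr}X|]<+\infty$. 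Hence $\sup_{z\in\mathbb{R}_{-}}\inf_{\kappa\in\mathbb{K}}\Delta_{1}(z,\lambda,\kappa,F)\geq0$, so $\xi(F)\geq0$. The forward implication is then immediate: if $\mathbb{K}\succcurlyeq_{M}\mathbb{L}$, then for every $\lambda$ there is $\kappa^{\star}\in\mathbb{K}$ with $\Delta_{i}(z,\lambda,\kappa^{\star},F)\leq0$ for both $i$ and all admissible $z$; substituting $\kappa^{\star}$ bounds each innermost infimum by $0$, giving $\xi(F)\leq0$, whence $\xi(F)=0$.

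The substance is the reverse implication, and here the ordering of the operators in (\ref{eq:2}) is decisive: $\xi(F)=0$ only delivers, for each $\lambda$ and each $(i,z)$ \emph{separately}, some $\kappa$ (depending on $z$) with $\Delta_{i}(z,\lambda,\kappa,F)\le0$, whereas $\mathbb{K}\succcurlyeq_{M}\mathbb{L}$ demands a \emph{single} $\kappa$ valid simultaneously for all $z$ and both $i$. Bridging this is a minimax exchange of the innermost $\inf_{\kappa}$ with $\sup_{z}$. I would first compactify the $z$-variable: using $\Delta_{i}(z,\lambda,\kappa,F)\to0$ at the free endpoint of each $A_{i}$ (as above for $i=1$, and symmetrically for $i=2$ via $\mathcal{J}(z,+\infty,\kappa,F)-\mathcal{J}(z,+\infty,\lambda,F)=\mathbb{E}_{F}[(\lambda^{Tr}X-z)_{+}]-\mathbb{E}_{F}[(\kappa^{Tr}X-z)_{+}]$), each $\sup_{z}$ is unchanged when $z$ is restricted to a compact subinterval, uniformly in $\kappa$. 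A convenient reformulation of dominance then comes from the put--call parity identity
\begin{equation}
\Delta_{2}(z,\lambda,\kappa,F)=-\Delta_{1}(z,\lambda,\kappa,F)+(\lambda-\kappa)^{Tr}\mathbb{E}_{F}[X],\label{eq:parity}
\end{equation}
which recasts $\kappa\succcurlyeq_{M}\lambda$ as the two-sided requirement that, for one common $\kappa$, $\Delta_{1}(z,\lambda,\kappa,F)\le0$ on $\mathbb{R}_{-}$ and $\Delta_{1}(z,\lambda,\kappa,F)\ge(\lambda-\kappa)^{Tr}\mathbb{E}_{F}[X]$ on $\mathbb{R}_{++}$.

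To produce that common $\kappa$ I would argue by compactness. For fixed $\lambda$, set $C_{i,z}=\{\kappa\in\mathbb{K}:\Delta_{i}(z,\lambda,\kappa,F)\le0\}$; each is closed by continuity of $\Delta_{i}$ in $\kappa$ (dominated convergence), and $\xi(F)=0$ makes each nonempty. Since $\mathbb{K}$ is compact, a dominator of $\lambda$ exists (that is, $\bigcap_{i,z}C_{i,z}\neq\varnothing$) as soon as every finite subfamily has nonempty intersection, so after the compactification in $z$ it suffices to treat finitely many constraints. The downside sets $C_{1,z}$ are convex, since $\Delta_{1}$ is an expectation of the convex map $\kappa\mapsto(z-\kappa^{Tr}X)_{+}$; the main obstacle is the upside family, for by (\ref{eq:parity}) $\Delta_{2}$ is \emph{concave} in $\kappa$, so the $C_{2,z}$ are reverse-convex and a direct Helly argument is unavailable. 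The heart of the proof is thus to satisfy the finitely many active constraints simultaneously despite this non-convexity. This is exactly the difficulty flagged in the introduction; I expect to resolve it by exploiting the simplicial structure of $\mathbb{K}$ together with a combinatorial selection over the finite active set, intersecting the convex downside region with the finite union of convex pieces into which the reverse-convex upside constraints decompose on $\mathbb{K}$ and locating a common point among the extreme points of $\mathbb{K}$. Assembling the downside and upside pieces yields a single $\kappa\in\mathbb{K}$ with $\kappa\succcurlyeq_{M}\lambda$; as $\lambda$ was arbitrary, $\mathbb{K}\succcurlyeq_{M}\mathbb{L}$. The concluding claim that spanning fails iff $\xi(F)>0$ is then the contrapositive together with $\xi(F)\ge0$.
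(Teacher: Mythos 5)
Your forward direction and the nonnegativity claim $\xi(F)\geq 0$ are fine (your route, sending $z\to-\infty$ in the $i=1$ block, is if anything cleaner than the paper's, which instead evaluates at $z=0$ with $\lambda$ taken to be a minimizer $\kappa^{\star}$ of $\mathcal{J}(-\infty,0,\cdot,F)$ over the compact $\mathbb{K}$), and your parity identity $\Delta_{2}(z,\lambda,\kappa,F)=-\Delta_{1}(z,\lambda,\kappa,F)+(\lambda-\kappa)^{Tr}\mathbb{E}_{F}[X]$ is correct under the finite first moment assumption. But the reverse implication, which is the entire content of the proposition, is not proved: you correctly reduce it to the finite intersection property of the sets $C_{i,z}=\{\kappa\in\mathbb{K}:\Delta_{i}(z,\lambda,\kappa,F)\leq 0\}$, correctly observe that the $C_{1,z}$ are convex while the $C_{2,z}$ are reverse-convex, and then defer the decisive step to an unexecuted ``combinatorial selection over the finite active set.'' That is a plan, not a proof, and it is precisely the step where all the difficulty sits.

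Worse, the plan cannot be completed under the proposition's stated hypothesis that $\mathbb{K}$ is merely closed, because the finite intersection property can genuinely fail when $\mathbb{K}$ is discrete --- a case the paper explicitly admits for simplicial complexes. Take $n=3$, $X=(X_{1},\,0.1,\,2X_{1})$ with $X_{1}=\pm 1$ equiprobable, $\lambda=e_{1}$, $\mathbb{K}=\{e_{2},e_{3}\}\subset\mathbb{L}=\{e_{1},e_{2},e_{3}\}$. For every $z\in\mathbb{R}_{-}$ one has $\Delta_{1}(z,e_{1},e_{2},F)=\mathbb{E}[(z-0.1)_{+}]-\mathbb{E}[(z-X_{1})_{+}]\leq 0$, and for every $z\in\mathbb{R}_{++}$ one has $\Delta_{2}(z,e_{1},e_{3},F)=\mathbb{E}[(X_{1}-z)_{+}]-\mathbb{E}[(2X_{1}-z)_{+}]\leq 0$; together with self-domination of $e_{2},e_{3}$ this gives $\xi(F)=0$. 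Yet $e_{2}$ fails the upside constraint at $z=1/2$ (where $\Delta_{2}=1/4>0$) and $e_{3}$ fails the downside constraint at $z=-1/2$ (where $\Delta_{1}=3/4-1/4>0$), so $C_{2,1/2}\cap C_{1,-1/2}=\emptyset$ and no $\kappa\in\mathbb{K}$ Markowitz-dominates $e_{1}$: the quantifier interchange from ``for each $(i,z)$ some $\kappa$'' to ``some $\kappa$ for all $(i,z)$'' is not available at this level of generality, so no Helly-type or simplicial-selection argument can close your gap without strengthening the hypotheses (e.g., suitable convexity of $\mathbb{K}$, or redefining $\xi$ with $\inf_{\kappa}$ outside $\sup_{z}$, in which case lower semicontinuity of $\kappa\mapsto\sup_{z\in A_{i}}\Delta_{i}$ and compactness of $\mathbb{K}$ deliver the common dominator at once). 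You should also be aware that the paper's own proof does not contain the argument you hope to reconstruct: it disposes of this direction in a single sentence, asserting that the per-$(i,z)$ inequalities yield one element of $\mathbb{K}$ working for all $z$ and both $i$ simultaneously --- i.e., it assumes exactly the interchange you flagged; the paper's combinatorial machinery (effective extreme points and characters) serves the subsampling exactness in the appendix lemma, not this proposition. So your diagnosis of the crux is more candid than the paper's treatment, but your proposal leaves the proposition unproven, and the example above shows the missing step would in fact fail as stated.
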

The case of super-efficiency is then trivially obtained. 
\begin{cor}
\label{superchar}Under the scope of the previous lemma, $\kappa$
is Markowitz super-efficient iff 
\[
\max_{i=1,2}\sup_{\lambda\in\mathbb{L}}\sup_{z\in A_{i}}\Delta_{i}\left(z,\lambda,\kappa,F\right)=0.
\]
\end{cor}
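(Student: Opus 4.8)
The plan is to obtain the statement as a direct specialization of Proposition \ref{M-equiv} to a singleton candidate set, so essentially no new analytic work is required. By Definition \ref{M-span}, the assertion that $\kappa$ is Markowitz super-efficient is, verbatim, the assertion that the singleton $\mathbb{K}=\{\kappa\}$ Markowitz-spans $\mathbb{L}$, i.e., that $\{\kappa\}\succcurlyeq_{M}\mathbb{L}$. Hence it suffices to invoke Proposition \ref{M-equiv} with $\mathbb{K}=\{\kappa\}$ and then simplify the resulting spanning functional $\xi(F)$ of (\ref{eq:2}).

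First I would check that the hypotheses of Proposition \ref{M-equiv} are met for this choice of $\mathbb{K}$. Since $\kappa\in\mathbb{K}\subseteq\mathbb{L}\subseteq\mathbb{S}\subseteq\mathbb{R}^{n}$, the singleton $\{\kappa\}$ is a closed (indeed compact) subset of the ambient metric space, so the proposition applies and yields the equivalence of $\{\kappa\}\succcurlyeq_{M}\mathbb{L}$ with $\xi(F)=0$. It then remains only to evaluate $\xi(F)$ on the singleton. When the inner optimization runs over $\{\kappa\}$, the operation $\inf_{\kappa'\in\{\kappa\}}\Delta_{i}(z,\lambda,\kappa',F)$ collapses to the single value $\Delta_{i}(z,\lambda,\kappa,F)$. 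Substituting this into (\ref{eq:2}) removes the innermost operator and gives
\[
\xi(F)=\max_{i=1,2}\sup_{\lambda\in\mathbb{L}}\sup_{z\in A_{i}}\Delta_{i}(z,\lambda,\kappa,F),
\]
which is exactly the functional in the claim. Chaining this identity with the equivalence furnished by Proposition \ref{M-equiv} delivers the stated characterization of super-efficiency.

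There is no substantive obstacle here: the result is a genuine corollary, and all the heavy lifting—the passage from the uncountable system of inequalities in (\ref{eq:MSD}) to the nested-optimization representation—has already been carried out in the proof of Proposition \ref{M-equiv}. The only point worth stating explicitly is the trivial verification that a singleton satisfies the closedness requirement, so that the proposition is legitimately specialized rather than re-derived; beyond that, the argument is purely the collapse of the innermost infimum.
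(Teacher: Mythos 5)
Your proposal is correct and coincides with the paper's intended argument: the paper offers no separate proof, remarking only that "the case of super-efficiency is then trivially obtained," i.e., exactly the specialization of Proposition \ref{M-equiv} to the closed singleton $\mathbb{K}=\{\kappa\}$, under which the inner infimum collapses and Definition \ref{M-span} identifies super-efficiency with spanning by $\{\kappa\}$.
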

Given $\mathbb{\ensuremath{K}}$, it is generally difficult to directly
use the previous proposition since $F$ is usually unknown and/or
the optimizations involved are infeasible. However, given the availability
of a sample containing information for $F$ and in conjunction with
the principle of analogy, it provides the backbone for the construction
of inferential procedures that address MSD spanning.

\subsection{A Consistent Non-parametric Test}

\subsubsection{Hypotheses Structure and Test Statistic}

We employ Lemma \ref{M-equiv} to construct a non-parametric
test for MSD spanning. If $\mathbb{K}\succcurlyeq_{M}\mathbb{L}$
is chosen as the null hypothesis, the hypothesis structure takes the
form:\footnote{Corollary \ref{superchar} implies that the hypotheses are in the
special case of super-efficiency as in Arvanitis and Topaloglou (2017).} 
\[
\begin{array}{c}
\mathbf{H_{0}}:\xi\left(F\right)=0,\\
\mathbf{H_{a}}:\xi\left(F\right)>0.
\end{array}
\]

To design the decision rule,
we extend our framework as follows. \textit{\emph{Consider a}} process
$\left(Y_{t}\right)_{t\in{\mathbb{Z}}}$ taking values in $\mathbb{R}^{n}$.
$Y_{t_{i}}$ denotes the $i^{th}$ element of $Y_{t}$. The sample
of size $T$ is the random element $\left(Y_{t}\right)_{t=1,\ldots,T}$.
In our portfolio framework, it represents the observable
returns of the $n$ financial base assets. We denote
the unknown cdf of $Y_{t}$ by $F$, and the empirical cdf by $F_{T}$. We consider the test statistic
\[
\xi_{T}:=\xi\left(\sqrt{T}F_{T}\right)=\max_{i=1,2}\sup_{\lambda\in\mathbb{L}}\sup_{z\in A_{i}}\inf_{\kappa\in\mathbb{K}}\Delta_{i}\left(z,\lambda,\kappa,\sqrt{T}F_{T}\right),
\]
which is the $\sqrt{T}$-scaled empirical analog of
$\xi\left(F\right)$. We can equivalently express $\xi_{T}$ as a usual scaled empirical average: 
\begin{equation}
\xi_{T}=\max_{i=1,2}\sup_{\lambda\in\mathbb{L}}\sup_{z\in A_{i}}\inf_{\kappa\in\mathbb{K}}\frac{1}{\sqrt{T}}\sum_{t=1}^{T}q_{i}\left(z,\lambda,\kappa,Y_{t}\right),\label{eq:mp-1}
\end{equation}
where 
\[
q_{i}\left(z,\lambda,\tau,Y_{t}\right):=\begin{cases}
K\left(z,\lambda,\kappa,Y_{t}\right), & i=1\\
\left[\left(\lambda^{\prime}Y_{t}\right)_{+}-\left(\kappa^{\prime}Y_{t}\right)_{+}-v\left(z,\lambda,\kappa,Y\right)\right], & i=2
\end{cases},
\]
with $v\left(z,\lambda,\kappa,Y\right):= K\left(z,\lambda,\kappa,Y\right)-K\left(0,\lambda,\kappa,Y\right)$,
and $K\left(z,\lambda,\kappa,Y\right):= \left(z-\kappa^{\prime}Y\right)_{+}-\left(z-\lambda^{\prime}Y\right)_{+}$. This is instrumental in the numerical implementation of (\ref{eq:mp-1}) in Section 4.
When $\mathbb{K}$ is a singleton, the test statistic coincides with
the one used in Arvanitis and Topaloglou (2017). 

\subsubsection{Null Limit Distribution }

In order to show that our testing procedure is asymptotically meaningful, we need a limit theory for $\xi_{T}$ under the null
hypothesis. We derive it using the following assumption. 
\begin{assumption}
\label{MSDmix}For some $0<\delta$, $\mathbb{E}\left[\left\Vert Y_{0}\right\Vert ^{2+\delta}\right]<+\infty$.
$\left(Y_{t}\right)_{t\in{\mathbb{Z}}}$ is $a$-mixing with mixing
coefficients $a_{T}=O(T^{-a})$ for some $a>1+\frac{2}{\eta},\:0<\eta<2$,
as $T\rightarrow\infty$. Furthermore, 
\[
\mathbb{\mathbb{\mathbb{V}=E}}\left[\left(Y_{0}-\mathbb{\mathbb{E}}Y_{0}\right)\left(Y_{0}-\mathbb{\mathbb{E}}Y_{0}\right)^{T}\right]+2\sum_{t=1}^{\infty}\mathbb{\mathbb{E}}\left[\left(Y_{0}-\mathbb{\mathbb{E}}Y_{0}\right)\left(Y_{t}-\mathbb{\mathbb{E}}Y_{t}\right)^{T}\right]
\]
is positive definite. 
\end{assumption}

The mixing rates condition is implied by stationarity
and geometric ergodicity. The latter holds for many stationary models
used in the context of financial econometrics, like 
ARMA, GARCH-type, and stochastic volatility models (see Francq
and Zakoian (2011) for several examples). The moment existence condition
enables the validity of a mixing CLT. A CLT typically holds under stricter restrictions. The positive definiteness of the long run covariance
matrix is for instance satisfied, if $\left(Y_{t}\right)_{t\in{\mathbb{Z}}}$
is a vector martingale difference process and the elements of $Y_{0}$
are linearly independent random variables. From the compactness of
$\mathbb{L}$, the previous implies that $\sup_{\lambda\in\mathbb{L}}\int_{-\infty}^{+\infty}\sqrt{G\left(u,\lambda,F\right)\left(1-G\left(u,\lambda,F\right)\right)}du<+\infty,$
which is a uniform version of the analogous condition used in Horvath et al.\ (2006).

We establish the limit theory below via the use of the concept of
Skorokhod representations along with an iterative consideration of
the dual notions of epi/hypo-convergence. The result depends on the
contact sets 
\[
\Gamma_{i}=\left\{ \lambda\in\mathbb{L},\kappa\in\mathbb{K},z\in A_{i}:\Delta_{i}\left(z,\lambda,\kappa,F\right)=0\right\} .
\]
For any $i$, $\Gamma_{i}$ is non empty since $\Gamma_{i}^{\star}\equiv\left\{ \left(\kappa,\kappa,z\right),\kappa\in\mathbb{K},z\in A_{i}\right\} \subseteq\Gamma_{i}$.
Furthermore, if the support of $F$ is bounded, for any $\lambda\in\mathbb{L},\kappa\in\mathbb{K},\:\exists z\in A_{i}:\left(\lambda,z\right)\in\Gamma_{i}$,
for all $i=1,2$,.\footnote{For example, since the support is bounded, we can cover it by some
hypercube of the form $\left[z_{l},z_{u}\right]^{n}$ where we can
choose $z_{l}$ as negative. Obviously, $\left(\lambda,z_{l}\right)\in\Gamma_{1},$
for any $\lambda\in\mathbb{L}$. } Hence, $\Gamma_{i}^{\star}\subset\Gamma_{i}$.

In what follows, we denote convergence in distribution by $\rightsquigarrow$. 
\begin{prop}
\label{EAD}Suppose that $\mathbb{K}$ is closed, Assumption \ref{MSDmix}
holds, and \textup{$\mathbf{H_{0}}$} is true. Then as $T\rightarrow\infty$,
$\xi_{T}\rightsquigarrow\xi_{\infty},$ where $\xi_{\infty}:=\max_{i=1,2}\sup_{z\in A_{i}}\sup_{\lambda}\inf_{\kappa}\Delta_{i}\left(z,\lambda,\kappa,\mathcal{G}_{F}\right),\:\left(\lambda,z,\kappa\right)\in\Gamma_{i},$
and $\mathcal{G}_{F}$ is a centered Gaussian process with covariance
kernel given by \linebreak{}
 $\text{Cov}(\mathcal{G}_{F}(x),\mathcal{G}_{F}(y))=\sum_{t\in\mathbb{Z}}\text{Cov}\left(\mathbb{I}_{Y_{0}\leq x},\mathbb{I}_{Y_{t}\leq y}\right)$
and $\mathbb{P}$ almost surely uniformly continuous sample paths
defined on $\mathbb{R}^{n}$.\footnote{See Theorem 7.3 of Rio (2013).} 
\end{prop}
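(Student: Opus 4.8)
The plan is to reduce the weak convergence of $\xi_{T}$ to a functional central limit theorem for the empirical process, composed with a minimax stability (epi/hypo-convergence) argument run iteratively through the nested optimizations. Since $G$, $\mathcal{J}$, and hence each $\Delta_{i}$ are \emph{linear} in the distribution argument, I would first split, for $i=1,2$,
\[
\Delta_{i}\left(z,\lambda,\kappa,\sqrt{T}F_{T}\right)=\mathbb{D}_{i,T}\left(z,\lambda,\kappa\right)+\sqrt{T}\,\Delta_{i}\left(z,\lambda,\kappa,F\right),
\]
where $\mathbb{D}_{i,T}:=\Delta_{i}\left(\,\cdot\,,\sqrt{T}(F_{T}-F)\right)$ is the centered empirical fluctuation and $\sqrt{T}\,\Delta_{i}(\,\cdot\,,F)$ is a deterministic drift. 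By Assumption \ref{MSDmix} and the footnoted Theorem 7.3 of Rio (2013), $\sqrt{T}(F_{T}-F)\rightsquigarrow\mathcal{G}_{F}$ with $\mathcal{G}_{F}$ the stated centered Gaussian process with a.s.\ uniformly continuous paths on $\mathbb{R}^{n}$. The uniform bound $\sup_{\lambda\in\mathbb{L}}\int_{-\infty}^{+\infty}\sqrt{G(1-G)}\,du<+\infty$ then ensures that the integral maps $(z,\lambda,\kappa)\mapsto\Delta_{i}(z,\lambda,\kappa,\,\cdot\,)$ are continuous and carry $\mathcal{G}_{F}$ into processes with a.s.\ finite continuous sample paths even though the support is unbounded; by the continuous mapping theorem this yields the joint convergence $\mathbb{D}_{i,T}\rightsquigarrow\Delta_{i}(\,\cdot\,,\mathcal{G}_{F})$, $i=1,2$.

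Next I would invoke the Skorokhod/Dudley representation to realize these limits on a common probability space where $\mathbb{D}_{i,T}\to\Delta_{i}(\,\cdot\,,\mathcal{G}_{F})$ holds almost surely, uniformly on compacta. It then suffices to prove, pathwise, that
\[
\max_{i=1,2}\sup_{\lambda\in\mathbb{L}}\sup_{z\in A_{i}}\inf_{\kappa\in\mathbb{K}}\left[\mathbb{D}_{i,T}+\sqrt{T}\,\Delta_{i}(\,\cdot\,,F)\right]\longrightarrow\max_{i=1,2}\sup_{z\in A_{i}}\sup_{\lambda}\inf_{\kappa}\Delta_{i}(z,\lambda,\kappa,\mathcal{G}_{F}),\ (\lambda,z,\kappa)\in\Gamma_{i}.
\]
The mechanism is that, under $\mathbf{H_{0}}$, $\Delta_{i}(\,\cdot\,,F)\le0$ on the region where the inner infimum stays finite, with equality precisely on the contact set $\Gamma_{i}$. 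Hence the drift $\sqrt{T}\,\Delta_{i}(\,\cdot\,,F)$ acts as a barrier: at triples with $\Delta_{i}<0$ the integrand diverges to $-\infty$, which the inner $\inf_{\kappa}$ exploits but which is fatal for the outer suprema, so any $(\lambda,z)$ admitting a strictly dominating $\kappa$ is eliminated from the $\sup$; on $\Gamma_{i}$ the drift vanishes and the integrand reduces to the converging fluctuation. I would formalize this through iterated epi/hypo (lopsided) convergence---treating $\inf_{\kappa}$ epigraphically and $\sup_{\lambda,z}$ hypographically---by establishing a matching upper bound (compactness of $\mathbb{L},\mathbb{K}$ and the barrier force cluster points of near-maximizers into $\Gamma_{i}$, giving $\limsup\le\xi_{\infty}$) and lower bound (a recovery sequence remaining on $\Gamma_{i}$ gives $\liminf\ge\xi_{\infty}$), and then passing the finite $\max_{i}$ through by continuity of the maximum.

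The hard part will be the epi/hypo step with the \emph{unbounded} barrier. Its delicate feature is the asymmetry whereby the $-\infty$ drift assists the inner infimum yet penalizes the outer suprema; this is exactly what localizes the limit to $\Gamma_{i}$, and one must show \emph{uniformly} over the non-compact index $z\in A_{i}$ (as well as over $\lambda,\kappa$) that cluster points of near-maximizers land in $\Gamma_{i}$ and that recovery sequences exist. This requires continuity and attainment of the inner infimum (compactness of $\mathbb{K}$), together with tail control from the $(2+\delta)$-moment condition to handle $z\to\pm\infty$. Securing the prerequisite tightness and equicontinuity of the empirical process on unbounded support---so that the functional CLT and the continuous-mapping step hold in the relevant path space---is the companion technical obstacle, and it is precisely what the uniform integrability bound and the moment assumption are there to deliver.
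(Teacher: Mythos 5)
Your proposal follows essentially the same route as the paper's own proof: the identical decomposition of $\Delta_{i}\left(\cdot,\sqrt{T}F_{T}\right)$ into a centered empirical fluctuation plus a $\sqrt{T}$-scaled deterministic drift, the FCLT for $\sqrt{T}\left(F_{T}-F\right)$ via Rio's Theorem 7.3 (this is the paper's auxiliary Lemma \ref{lem:emp_pr}), passage to almost-sure versions by Skorokhod representation, and the epi/hypo-convergence barrier argument under $\mathbf{H_{0}}$ that localizes the $\sup\sup\inf$ to the contact sets $\Gamma_{i}$ before reverting to the original sequences by continuous mapping. The technical obstacles you flag at the end are precisely the ones the paper resolves with Molchanov's variational convergence results (Proposition 3.2 and Theorem 3.4, ch.\ 5), the Rockafellar--Wets equi-upper-semicontinuity criterion, and the almost-sure boundedness of the sup-inf sequence supplied by the lower-bound construction in the auxiliary Lemma \ref{aac-1}.
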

The covariance kernel above, and thereby $\mathcal{G}_{F}$, are well
defined due to the mixing condition and the existence of $\sup_{\lambda\in\mathbb{L}}\int_{-\infty}^{+\infty}\sqrt{G\left(u,\lambda,F\right)\left(1-G\left(u,\lambda,F\right)\right)}du$
implied by Assumption \ref{MSDmix} (see Remark 1 in Arvanitis and
Topaloglou (2017)).

\subsubsection{A Subsampling Based Testing Procedure: Limit Theory and Combinatorial
Considerations}

We cannot directly use the result in Proposition \ref{EAD} for the
construction of an asymptotic decision rule since the distribution
of $\xi_{\infty}$ depends on the unknown covariance kernel of $\mathcal{G}_{F}$.
We can establish a feasible decision rule by the use of a resampling
procedure. We consider subsampling, as in Linton et al.\ (2014)-see also Linton et al.\
(2005). This resampling is of a non-parametric nature since we do not want to specify parametric conditional distributions for the multivariate return dynamics. 

\begin{lyxalgorithm*}
\label{Sub_alg}The testing procedure consists of the following steps: 
\begin{enumerate}
\item \emph{Evaluate $\xi_{T}$ at the original sample value. } 
\item \emph{For $0<b_{T}\leq T$ , generate subsample values from the original
observations $(Y_{i})_{i=t,\ldots t+b_{T}-1}$ for all $t=1,2,\ldots,T-b_{T}+1$.} 
\item \emph{Evaluate the test statistic on each subsample value, obtaining
$\xi_{T,b_{T},t}$ for all $t=1,2,\ldots,T-b_{T}+1$. } 
\item \emph{Approximate the cdf of the asymptotic distribution of $\xi_{T}$
by}\\
\emph{$s_{T,b}(y)=\frac{1}{T-b_{T}+1}\sum_{t=1}^{T-b_{T}+1}1\left(\xi_{T,b_{T},t}\leq y\right)$
and evaluate its $1-\alpha$ quantile $q_{T,b_{T}}\left(1-\alpha\right)$.} 
\item \emph{$\text{{Reject}\:}\mathbf{H_{0}}\:\text{{iff}\:}\xi_{T}>q_{T,b_{T}}(1-\alpha).$}
\end{enumerate}
\end{lyxalgorithm*}
We derive the first order limit theory via the use of Proposition
\ref{EAD} and of relevant results from the theory of subsampling.
We first make the following standard assumption in the subsampling
methodology. 
\begin{assumption}
\label{subseq}Suppose that $\left(b_{T}\right)$, possibly depending
on $\left(Y_{t}\right)_{t=1,\ldots,T}$, satisfies 
\[
\mathbb{P}\left(l_{T}\leq b_{T}\leq u_{T}\right)\rightarrow1,
\]
where $(l_{T})$ and $(u_{T})$ are real sequences such that $1\leq l_{T}\leq u_{T}$
for all $T$, $l_{T}\rightarrow\infty$ and $\frac{u_{T}}{T}\rightarrow0$
as $T\rightarrow\infty$. 
\end{assumption}
The assumption does not provide with much information on the practical
choice of the subsampling rate for fixed $T$. It is designed to handle
issues like asymptotic exactness and consistency. In the following
section, along with the numerical implementation for\emph{ $\xi_{T}$},
we discuss a method of fixed $T$ correction for the algorithm above,
in the spirit of Arvanitis et al. (2018), that involves the use of
several subsampling rates.

Asymptotic exactness is derivable by results like Theorem 3.5.1 in
Politis, Romano and Wolf (1999). The latter requires continuity of
the limit cdf at the quantile corresponding to the significance level
$\alpha$. Even when the distribution of $\xi_{\infty}$ is non-degenerate,
it is possible that it has a cdf with a unique discontinuity at zero
(see the proof of Lemma \ref{aac-1} in the Appendix). If there exists
a lower bound for $\xi_{\infty}$, Corollary \ref{cor:used} provides
with an estimate for the cdf jump size at zero. Then the use of
the aforementioned theorem  becomes possible by properly restricting
$\alpha$. This is where the new probabilistic results of Section 2
become useful in our context. It turns out (see the proof of Lemma \ref{aac-1} in
the Appendix) that we can obtain such a bound in the form of a non-negative
random variable defined as the difference between the suprema at $\mathbb{L}$
and $\mathbb{K}$ respectively, of a linear Gaussian process. Hence,
we get the needed estimate of the jump size as the probability that
the latter random variable attains the value zero. 

In order to evaluate this, we essentially use some combinatorial notions
that allow the estimation of the proportion of the linear functions
for which their unique maximizer over $\mathbb{L}$ is
a common extreme point of both the parameter spaces. 
\begin{defn}
\label{def:eff_extreme_points}Suppose that $\mathbb{M},\mathbb{N}$
are simplicial complexes inside $\mathbb{S}$ and $\mathbb{M}\supseteq\mathbb{N}$.
The set of effective extreme points of $\mathbb{N}$ w.r.t.\ $\mathbb{M}$
is 
\[
e_{\mathbb{M}}\left(\mathbb{N}\right):=\left\{ \lambda\:\text{is an extreme point of }\mathbb{N}:\exists\text{ extreme point \ensuremath{s} of \ensuremath{\mathbb{S}}:}\left\Vert \lambda-s\right\Vert \leq\inf_{\kappa\in\mathbb{M}}\left\Vert \kappa-s\right\Vert \right\} .
\]
Furthermore, if $\lambda\in e_{\mathbb{M}}\left(\mathbb{N}\right)$
then the set of the adjoint to $\lambda$ extreme points of $\mathbb{S}$
is 
\[
c\left(\lambda\right):=\left\{ s\:\text{is an extreme point of }\mathbb{S}:\left\Vert \lambda-s\right\Vert \leq\inf_{\kappa\in\mathbb{M}}\left\Vert \kappa-s\right\Vert \right\} .
\]
\end{defn}
Given the non-linear simplicial complex forms of $\mathbb{M},\mathbb{N}$,
the notion of an effective extreme point essentially picks the extreme
points of $\mathbb{N}$ that can be restricted to $\mathbb{M}$ maximizers
of linear real functions defined on $\mathbb{S}$. Given any such
extreme point, its adjoint set essentially picks up the extreme points
of the incorporating simplex $\ensuremath{\mathbb{S}}$ that are closer
to it than any other extreme point of $\mathbb{M}$.
\begin{defn}
\label{def:char}The $\mathbb{M}$-character of $\lambda\in e_{\mathbb{M}}\left(\mathbb{N}\right)$
w.r.t.\ $s\in c\left(\lambda\right)$ is 
\[
ch_{\mathbb{M}}\left(s,\lambda\right):=\#\left\{ \kappa\in e_{\mathbb{M}}\left(\mathbb{N}\right):\left\Vert \lambda-s\right\Vert =\left\Vert \kappa-s\right\Vert \right\} .
\]
Furthermore, the $\mathbb{M}$-character of $\mathbb{N}$ is 
\begin{equation} \label{Mchara}
ch_{\mathbb{M}}\left(\mathbb{N}\right):=\sum_{\lambda\in e_{\mathbb{M}}\left(\mathbb{N}\right)}\sum_{s\in c\left(\lambda\right)}\frac{\left(n-ch_{\mathbb{M}}\left(s,\lambda\right)\right)!}{n!}.
\end{equation}
\end{defn}
The ratio $\frac{\left(n-ch_{\mathbb{M}}\left(s,\lambda\right)\right)!}{n!}$
counts the proportion of linear real functions with unique maximizer $s$ 
over $\mathbb{S}$, and unique maximizer
$\lambda$ over the restricted $\mathbb{M}$. Hence, $ch_{\mathbb{M}}\left(\mathbb{N}\right)$
counts the proportion of such functions for which the maximizer over $\mathbb{M}$  is an extreme point of $\mathbb{N}.$
Suppose now that $Z$ follows a non-degenerate, zero mean, $n$-dimensional
Normal distribution. The characterization  (\ref{Mchara}) of the $\mathbb{M}$-character
of $\mathbb{N}$ allows the bounding from above (see the proof of
Lemma \ref{aac-1} in the Appendix) of the
probability of the event $\sup_{\lambda\in\mathbb{M}}\lambda'Z=\sup_{\lambda\in\mathbb{N}}\lambda'Z$
by $ch_{\mathbb{M}}\left(\mathbb{N}\right)$, and this is directly
related to the estimation of the potential jump size discontinuity
of the cdf of $\xi_{\infty}$ at zero. Thereby, if we assume that $\mathbb{L}$
and $\mathbb{K}$ are simplicial complexes, and if $ch_{\mathbb{L}}\left(\mathbb{K}\right)$
is easy to evaluate, the previous definitions greatly facilitate and are key for
the derivations of the asymptotic exactness of our test. 
\begin{assumption}
\label{simpl_span}$\mathbb{L}$ and $\mathbb{K}$ are simplicial
complexes inside the standard simplex $\mathbb{S}=\{\lambda\in\mathbb{R}_{+}^{n}:1^{\prime}$\textbf{$\lambda$}$=1,\}$
and $e_{\mathbb{L}}\left(\mathbb{K}\right)\subset e_{\mathbb{L}}\left(\mathbb{L}\right)$. 
\end{assumption}
The simplicial form of $\mathbb{L}$ and $\mathbb{K}$
generalizes considerably the setting of Arvanitis et al.\
(2018). There, those spaces are restricted as convex polytopes. Here,
they need not be convex, and they can be disconnected, discrete, etc.
 This could be useful when the investment categories are constrained because of SRI screening, restrictions on foreign investment, restrictions on available type of shares, etc. This generalization allows for the establishment of the asymptotic
validity and thereby the applicability of our test in more complicated
scenarios. For example, suppose that $\mathbb{K}=\mathbb{K}_{1}\cup\mathbb{K}_{2}$
which are disjoint simplices. If $\mathbb{K}\succcurlyeq_{M}\mathbb{L}$,
but neither $\mathbb{K}_{1}\succcurlyeq_{M}\mathbb{L}$, nor $\mathbb{K}_{2}\succcurlyeq_{M}\mathbb{L}$,
then this directly implies that the efficient set is disconnected.
If $e_{\mathbb{L}}\left(\mathbb{K}\right)\subset e_{\mathbb{L}}\left(\mathbb{L}\right)$
holds, Assumption \ref{simpl_span} holds for the pairs $\left(\mathbb{L},\mathbb{K}\right)$,
$\left(\mathbb{L},\mathbb{K}_{1}\right)$, $\left(\mathbb{L},\mathbb{K}_{2}\right)$.
Then, we can use the test to determine the spanning relations
inside each pair and thereby determine the disconnectedness
of the efficient set. If the convex polytope case is not
generalized as in this paper, the determination of the spanning relation
between the elements of the first pair is not feasible.

Assumption \ref{simpl_span} implies that $e_{\mathbb{L}}\left(\mathbb{L}\right)$
is finite. The $e_{\mathbb{L}}\left(\mathbb{K}\right)\subset e_{\mathbb{L}}\left(\mathbb{L}\right)$
part implies $ch_{\mathbb{L}}\left(\mathbb{K}\right)\leq1$. Indicative
examples are the following. First, consider the trivial case where
$\mathbb{K}$ is interior to $\mathbb{L}.$ Then, it is obvious that
$ch_{\mathbb{L}}\left(\mathbb{K}\right)=0$. Second, consider the
case where $\mathbb{L=S}$, and $e_{\mathbb{L}}\left(\mathbb{K}\right)\neq\emptyset$
and Assumption \ref{simpl_span} holds. Then, $ch_{\mathbb{L}}\left(\mathbb{K}\right)=\frac{\#e{}_{\mathbb{L}}\left(\mathbb{K}\right)}{n}<1.$
Finally, suppose that $n=3$, $\mathbb{L}$ is a line in the interior
of the triangle, such that each boundary point of the line has a minimal
distance from a unique triangle vertex and that both boundary points
have the same distance from the remaining vertex. Furthermore, suppose
that $\mathbb{K}$ is some half of that line. Then, $e_{\mathbb{L}}\left(\mathbb{L}\right)$
consists of both the line boundary points, and $e_{\mathbb{L}}\left(\mathbb{K}\right)$
consists of the boundary point that lies in the chosen half. If $\lambda$
is an effective extreme point in either set, the cardinality of $c\left(\lambda\right)$
equals two. Moreover, $ch_{\mathbb{L}}\left(s,\lambda\right)$ equals
1 if $s$ lies closer to $\lambda$ than to the other boundary point
of the line, and equals 2 in the other case. Hence, $ch_{\mathbb{L}}\left(\mathbb{K}\right)=\frac{1}{2}$.

Given our assumptions and the new combinatorial arguments not used previously in the literature,  
we prove in the Appendix (see Lemma \ref{aac-1}) that the probability
that the aforementioned bounding random variable attains the value
zero is less than or equal to $ch_{\mathbb{L}}\left(\mathbb{K}\right)$. Then, via the use of Corollary \ref{cor:used}, we establish
that, when $\xi_{\infty}$ is non degenerate, the $1-\alpha$ quantile
is a continuity point for its cdf when $\alpha<1-ch_{\mathbb{L}}\left(\mathbb{K}\right)$.
Hence, we immediately obtain the following first order limit theory
for the subsampling testing procedure described above via Theorem
3.5.1 in Politis, Romano and Wolf (1999). 
\begin{thm}
\label{main2} Suppose that Assumptions \ref{MSDmix}, \ref{subseq}
and \ref{simpl_span} hold. For the testing procedure described in
Algorithm \ref{Sub_alg}, we have that 
\end{thm}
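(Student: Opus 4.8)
The plan is to verify the hypotheses of Theorem 3.5.1 in Politis, Romano and Wolf (1999) and read off both asymptotic exactness and consistency from it. First I would record, from Proposition \ref{EAD}, that under $\mathbf{H_0}$ the statistic satisfies $\xi_{T}\rightsquigarrow\xi_{\infty}$, so a genuine limit law exists against which the subsampling distribution $s_{T,b}$ can be compared. The mixing and moment conditions in Assumption \ref{MSDmix}, together with the rate conditions $l_{T}\rightarrow\infty$ and $u_{T}/T\rightarrow0$ of Assumption \ref{subseq}, supply exactly the dependence and block-length requirements under which the empirical subsampling cdf $s_{T,b}(\cdot)$ converges in probability, at every continuity point $y$ of the limit cdf, to $\mathbb{P}(\xi_{\infty}\leq y)$; this is the standard stationary strong-mixing subsampling consistency underlying the cited theorem.

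The decisive step is continuity of the limit cdf at the threshold $q(1-\alpha)$. I would argue that, by Corollary \ref{cor:used}, the law of $\xi_{\infty}$ is absolutely continuous on $(0,+\infty)$ and can carry an atom only at zero, while Lemma \ref{aac-1} bounds that atom by $\mathbb{P}(\xi_{\infty}=0)\leq ch_{\mathbb{L}}(\mathbb{K})$. Consequently, whenever $\xi_{\infty}$ is non-degenerate and the level obeys $\alpha<1-ch_{\mathbb{L}}(\mathbb{K})$, the cdf value at zero is at most $ch_{\mathbb{L}}(\mathbb{K})<1-\alpha$, which forces the $(1-\alpha)$-quantile strictly into the region $(0,+\infty)$ where the cdf is continuous. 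Thus $q(1-\alpha)$ is a continuity point, the subsampling quantile $q_{T,b_{T}}(1-\alpha)$ converges in probability to it, and the theorem yields $\mathbb{P}(\xi_{T}>q_{T,b_{T}}(1-\alpha))\rightarrow\alpha$, i.e.\ asymptotic exactness of the rejection rule of Algorithm \ref{Sub_alg}.

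For consistency I would switch to $\mathbf{H_a}$, where $\xi(F)>0$. A uniform law of large numbers, again using Assumption \ref{MSDmix}, gives $\xi_{T}/\sqrt{T}\rightarrow\xi(F)>0$ in probability, so $\xi_{T}\rightarrow+\infty$ at rate $\sqrt{T}$. Each subsample statistic, being built from a block of length $b_{T}$, is of order $\sqrt{b_{T}}$, so the critical value $q_{T,b_{T}}(1-\alpha)$ is of smaller order than $\sqrt{T}$ by the rate gap $b_{T}/T\rightarrow0$ of Assumption \ref{subseq}; the statistic therefore outgrows the critical value and $\mathbb{P}(\xi_{T}>q_{T,b_{T}}(1-\alpha))\rightarrow1$. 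In the degenerate case where $\mathbb{P}(\xi_{\infty}=0)=1$, the quantile collapses to zero and the same comparison shows the test is asymptotically conservative, with rejection probability tending to a value no larger than $\alpha$.

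The hard part is not the mechanical application of Theorem 3.5.1 in Politis, Romano and Wolf (1999) but securing its continuity hypothesis at the operative quantile. That reduces entirely to the bound $\mathbb{P}(\xi_{\infty}=0)\leq ch_{\mathbb{L}}(\mathbb{K})$ in Lemma \ref{aac-1}, whose proof realizes this probability as the chance that the difference of the suprema of a linear Gaussian process over $\mathbb{L}$ and over $\mathbb{K}$ vanishes, and then controls it by the combinatorial identification---through the $\mathbb{M}$-character of Definition \ref{def:char}---of the proportion of linear objectives whose maximizer over $\mathbb{L}$ lands on a common extreme point of $\mathbb{K}$. Pairing this with the saddle-type absolute-continuity machinery of Corollary \ref{cor:used} is precisely what makes the level restriction $\alpha<1-ch_{\mathbb{L}}(\mathbb{K})$ the exact condition under which the quantile avoids the atom.
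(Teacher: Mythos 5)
Your treatment of part 2 (asymptotic exactness) coincides with the paper's own route: apply Theorem 3.5.1 of Politis, Romano and Wolf (1999) to the weak limit $\xi_{T}\rightsquigarrow\xi_{\infty}$ from Proposition \ref{EAD}, and secure continuity of the limiting cdf at the $(1-\alpha)$-quantile from Lemma \ref{aac-1} and Corollary \ref{cor:used} via the atom bound $\mathbb{P}\left(\xi_{\infty}=0\right)\leq ch_{\mathbb{L}}\left(\mathbb{K}\right)<1-\alpha$, which pushes the quantile into $\left(0,+\infty\right)$ where the law is absolutely continuous. That is exactly the paper's argument, and your identification of the atom bound as the crux is correct.

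The other two parts diverge from the paper, and one divergence is a genuine gap. Under $\mathbf{H_{a}}$ you invoke a uniform law of large numbers to claim $\xi_{T}/\sqrt{T}=\xi\left(F_{T}\right)\rightarrow\xi\left(F\right)$ in probability. This is both more than is needed and more than Assumption \ref{MSDmix} readily delivers: $\xi$ is a nested $\sup\sup\inf$ over domains including the unbounded sets $A_{i}$, and with unbounded return supports the uniform convergence of $\Delta_{i}\left(\cdot,F_{T}\right)$ over all $z$, plus the stability of the nested optimizations under it, would require an argument of the same order of difficulty as Proposition \ref{EAD} itself. The paper sidesteps this: since $\xi\left(F\right)>0$ there exist fixed $i,\lambda^{\star},z^{\star}$ with $\inf_{\kappa\in\mathbb{K}}\Delta_{i}\left(z^{\star},\lambda^{\star},\kappa,F\right)>0$, whence $\xi_{T}\geq\inf_{\kappa\in\mathbb{K}}\Delta_{i}\left(z^{\star},\lambda^{\star},\kappa,\sqrt{T}\left(F_{T}-F\right)\right)+\sqrt{T}\inf_{\kappa\in\mathbb{K}}\Delta_{i}\left(z^{\star},\lambda^{\star},\kappa,F\right)$; the first term is asymptotically tight and the second diverges, so only a one-point lower bound is ever needed. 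Your rate comparison with $q_{T,b_{T}}\left(1-\alpha\right)=O_{p}\left(\sqrt{b_{T}}\right)$ then closes the argument as in the paper, so this step is repairable. More seriously, part 1 of the theorem asserts that in the degenerate case the rejection probability tends to \emph{zero}, whereas you only claim it is asymptotically no larger than $\alpha$ --- and even that weaker claim is not supported by ``the same comparison,'' because under degeneracy both $\xi_{T}$ and $q_{T,b_{T}}\left(1-\alpha\right)$ tend to zero in probability, so the rate argument that worked under $\mathbf{H_{a}}$ says nothing about which dominates. The paper obtains the exact limit $0$ from Theorem 3.5.1.i of Politis et al.\ together with the limiting quantile function of the point mass at zero being constant, hence continuous, in $\alpha$; as written, your proposal leaves part 1 unproved.
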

\begin{enumerate}
\item \emph{If $\mathbf{H_{0}}$ is true and $\xi_{\infty}$ is constant,
then, 
\[
\lim_{T\rightarrow\infty}\mathbb{P}\left(\xi_{T}>q_{T,b_{T}}\left(1-\alpha\right)\right)=0.
\]
}
\item \emph{If $\mathbf{H_{0}}$ is true, $\xi_{\infty}$ is non-constant,
and $\alpha<1-ch_{\mathbb{L}}\left(\mathbb{K}\right)$, then}, 
\[
\lim_{T\rightarrow\infty}\mathbb{P}\left(\xi_{T}>q_{T,b_{T}}\left(1-\alpha\right)\right)=\alpha.
\]
\item \emph{If $\mathbf{H_{a}}$ is true, then,} 
\[
\lim_{T\rightarrow\infty}\mathbb{P}\left(\xi_{T}>q_{T,b_{T}}\left(1-\alpha\right)\right)=1.
\]
\end{enumerate}
When the distribution of $\xi_{\infty}$ is degenerate, the procedure
is asymptotically conservative even if the restriction $\alpha<1-ch_{\mathbb{L}}\left(\mathbb{K}\right)$
does not hold. This is reminiscent of the results in Linton et al.\ (2005) concerning testing procedures for superefficiency
w.r.t.\ several stochastic dominance relations. The non-degeneracy
of the aforementioned limit distribution is not easy to establish
except for cases such as the one about bounded supports which was
discussed above.

When the distribution of $\xi_{\infty}$ is non-degenerate, the procedure
is asymptotically exact if the restriction $\alpha<1-ch_{\mathbb{L}}\left(\mathbb{K}\right)$
holds. The restriction on the significance level is non-binding in
usual applications. For example, when $\mathbb{L}=\mathbb{S}$ and
$\mathbb{K}$ is a singleton, i.e., when the test is applied for super-efficiency,
it implies at worst that $\alpha<1/2$, something that is usually
satisfied. The closer to binding the restriction becomes, the more
extreme points of $\mathbb{L}=\mathbb{S}$ exist inside $\mathbb{K}$.
An extreme case is when $n$ is large, $\mathbb{K}$ is finite, and
contains $n-1$ extreme points. In such a case, the result leads to
subsampling tests that tend to asymptotically favor the  null
hypothesis of spanning. We could handle that by breaking up $\mathbb{K}$ is ''smaller
pieces'' and iterating the testing procedure w.r.t.\ them. For example,
we can apply the procedure for any subset of $\mathbb{K}$ that contains
$m$ points, for $m$ sufficiently small in order to obtain a meaningful
significance level. If for some subset, we cannot reject spanning,
we can infer that we cannot reject spanning for the initial $\mathbb{K}$,
since supersets of spanning sets are spanning sets from Definition
\ref{M-span}. It is also possible that the structure of the efficient
set prohibits such a $\mathbb{K}$ to be a spanning set. We leave
the study of such questions for future work. In any case, the testing
procedure is consistent.

Under some assumptions, we can prove, using again among others the
main result, that an analogous testing procedure based on block bootstrap
is generally asymptotically conservative and consistent.

\section{A Numerical Implementation and Bias Correction}

We first describe a potential numerical implementation via the use
of a testing procedure asymptotically equivalent to the one of Subsection
\ref{Sub_alg}, and obtained by finite approximations of the $A_{i},\:i=1,2$,
as well as applications of mixed integer and linear programming. For
each $T$, let $A_{i}^{\left(T\right)}$ denote a finite subset of
$A_{i}$ for each $i$. Then consider the test statistic defined by
\[
\xi_{T}^{\star}:=\max_{i=1,2}\sup_{\lambda\in\mathbb{L}}\sup_{z\in A_{i}^{\left(T\right)}}\inf_{\kappa\in\mathbb{K}}\Delta_{i}\left(z,\lambda,\kappa,\sqrt{T}F_{T}\right),
\]
and modify the algorithm of Subsection \ref{Sub_alg} by using $\xi_{T}^{\star}$
in place of $\xi_{T}$. Under the previous assumption framework if,
as $T\rightarrow+\infty$, $A_{i}^{\left(T\right)}$ appropriately
approximates $A_{i}$, the modified procedure has the same first order
limit theory with the original one. 
\begin{thm}
\label{thm:mod}Suppose that Assumptions \ref{MSDmix}, \ref{subseq}
and \ref{simpl_span} hold. If, as $T\rightarrow+\infty$, $A_{i}^{\left(T\right)}$
converges to some dense subset of $A_{i}$ in Painleve-Kuratowski
sense for all $i=1,2$, the results of Theorem \ref{main2} hold also
for the modified procedure. 
\end{thm}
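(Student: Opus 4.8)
The plan is to show that the modified statistic $\xi_T^\star$ is asymptotically equivalent to $\xi_T$, both at the full-sample scale and at the subsample scale, and then to transfer the three conclusions of Theorem \ref{main2} verbatim. Since $A_i^{(T)}\subseteq A_i$ for each $i$, a supremum over the smaller set is dominated by the supremum over the larger one, so that $\xi_T^\star\leq\xi_T$ deterministically, and likewise $\xi_{T,b_T,t}^\star\leq\xi_{T,b_T,t}$ for every subsample. The substance of the argument is therefore the reverse inequality in the limit, i.e.\ showing that the grid loses nothing asymptotically.

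First I would prove the key asymptotic-equivalence lemma: under Assumption \ref{MSDmix} and $\mathbf{H_0}$, $\xi_T-\xi_T^\star\xrightarrow{\mathbb{P}}0$. The ingredients are (i) the weak-convergence machinery behind Proposition \ref{EAD}, which yields that the scaled, centered process driving $\Delta_i(z,\lambda,\kappa,\sqrt{T}F_T)$ is asymptotically (stochastically) equicontinuous, with limit $\mathcal{G}_F$ possessing $\mathbb{P}$-a.s.\ uniformly continuous sample paths; (ii) the Lipschitz continuity in $z$, uniform over $(\lambda,\kappa)$, of the deterministic population part $\Delta_i(\cdot,F)$, which follows because $\mathcal{J}$ is an integral of a cdf; and (iii) the Painleve-Kuratowski convergence of $A_i^{(T)}$ to a dense $D_i\subseteq A_i$. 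Concretely, selecting a near-maximizing triple $(\hat\lambda,\hat z,i)$ for $\xi_T$, I would use density to pick $z'\in A_i^{(T)}$ with $z'\to\hat z$, and then equicontinuity in $z$ uniform over $(\lambda,\kappa)$ to bound $\bigl|\inf_{\kappa}\Delta_i(\hat z,\hat\lambda,\kappa,\sqrt{T}F_T)-\inf_{\kappa}\Delta_i(z',\hat\lambda,\kappa,\sqrt{T}F_T)\bigr|$ by an arbitrarily small quantity with probability tending to one; taking $\sup_\lambda$ and $\max_i$ preserves the bound and delivers $\xi_T^\star\geq\xi_T-o_{\mathbb{P}}(1)$. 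Because the limit paths are uniformly continuous on the sets $A_1=\mathbb{R}_-$ and $A_2=\mathbb{R}_{++}$, density gives $\sup_{D_i}=\sup_{A_i}$ in the limit, so no contribution is lost even though $A_i$ is unbounded; the tail is controlled by $\sup_{\lambda\in\mathbb{L}}\int_{-\infty}^{+\infty}\sqrt{G(u,\lambda,F)(1-G(u,\lambda,F))}\,du<+\infty$ implied by Assumption \ref{MSDmix}. Combining with Proposition \ref{EAD} and Slutsky, $\xi_T^\star\rightsquigarrow\xi_\infty$.

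Next I would run the identical argument at the subsample scale. Since Assumption \ref{subseq} forces $b_T\to\infty$, the same equicontinuity and density estimates, now applied to the size-$b_T$ empirical measures (and for whichever admissible indexing of the grid, which is eventually dense at that scale as well), give $\sup_t\bigl|\xi_{T,b_T,t}-\xi_{T,b_T,t}^\star\bigr|\xrightarrow{\mathbb{P}}0$, whence the modified subsampling cdf $s_{T,b}^\star$ and the original $s_{T,b}$ share the same weak limit, namely the cdf of $\xi_\infty$. At every continuity point of that limiting cdf the associated $1-\alpha$ quantiles therefore coincide asymptotically, i.e.\ $q_{T,b_T}^\star(1-\alpha)-q_{T,b_T}(1-\alpha)\to0$. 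Invoking Lemma \ref{aac-1} and Corollary \ref{cor:used}, the point $1-\alpha$ is such a continuity point whenever $\xi_\infty$ is non-degenerate and $\alpha<1-ch_{\mathbb{L}}(\mathbb{K})$, while in the degenerate case the limit is constant and the conservativeness conclusion needs no continuity restriction. Feeding the pair $\bigl(\xi_T^\star,\,q_{T,b_T}^\star(1-\alpha)\bigr)$ into Theorem 3.5.1 of Politis, Romano and Wolf (1999) exactly as in the proof of Theorem \ref{main2} then reproduces the three statements: asymptotic size zero in the degenerate null case, asymptotic exactness at level $\alpha$ in the non-degenerate null case under the stated restriction on $\alpha$, and consistency under $\mathbf{H_a}$, where $\xi(F)>0$ together with positivity of $\Delta_i(\cdot,F)$ on a neighborhood of its maximizer forces a grid point with $\Delta_i>\delta>0$ for large $T$ and hence $\xi_T^\star\to+\infty$ in probability, while the subsampling quantiles stay $O_{\mathbb{P}}(1)$.

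The main obstacle is the asymptotic equicontinuity in $z$ \emph{uniformly over} $(\lambda,\kappa)\in\mathbb{L}\times\mathbb{K}$, at both the full-sample and the subsample scales, which is what allows a merely dense grid to recover the full supremum despite the random, possibly drifting, near-maximizers. This rests on the tightness and uniform sample-path continuity of $\mathcal{G}_F$ from Proposition \ref{EAD} together with the uniform Lipschitz behaviour in $z$ of the population functional; handling it for the \emph{unbounded} ranges $A_1,A_2$ is the delicate point, and is precisely where the integrability bound implied by Assumption \ref{MSDmix} is used to rule out any escaping mass as $z\to\pm\infty$. A secondary, more routine, point is verifying that the size-$b_T$ subsampling empirical processes inherit this equicontinuity uniformly in $t$, which follows from the mixing rate in Assumption \ref{MSDmix} and the subsampling apparatus already underlying Theorem \ref{main2}.
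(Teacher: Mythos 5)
Your central step fails. The ``key asymptotic-equivalence lemma'' ($\xi_T-\xi_T^\star\xrightarrow{\mathbb{P}}0$ under $\mathbf{H_0}$) is argued by bounding $\bigl|\inf_{\kappa}\Delta_i(\hat z,\hat\lambda,\kappa,\sqrt{T}F_T)-\inf_{\kappa}\Delta_i(z',\hat\lambda,\kappa,\sqrt{T}F_T)\bigr|$ through equicontinuity in $z$. But by linearity of $\Delta_i$ in its measure argument,
\begin{equation*}
\Delta_i\left(z,\lambda,\kappa,\sqrt{T}F_T\right)=\Delta_i\left(z,\lambda,\kappa,\sqrt{T}\left(F_T-F\right)\right)+\sqrt{T}\,\Delta_i\left(z,\lambda,\kappa,F\right).
\end{equation*}
Your ingredients (i) and (ii) control the centered empirical part and the \emph{unscaled} population part, respectively; the statistic, however, contains the population part multiplied by $\sqrt{T}$, whose modulus of continuity in $z$ is of order $\sqrt{T}\,|\hat z-z'|$. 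Your bound therefore requires $|z'-\hat z|=o\left(T^{-1/2}\right)$, while Painlev\'e--Kuratowski convergence of $A_i^{(T)}$ to a dense set carries no rate at all, so such a $z'$ need not exist. Concretely, if $\hat z$ is a contact point at which $\inf_{\kappa}\Delta_i(\cdot,\hat\lambda,\kappa,F)$ vanishes with nonzero slope, then any admissible grid point at distance $\delta_T$ with $\sqrt{T}\delta_T\rightarrow\infty$ produces a drift term diverging to $-\infty$, and the near-maximizer transfer collapses. The same defect recurs, aggravated by the required uniformity over $t$, in your claim $\sup_t\bigl|\xi_{T,b_T,t}-\xi_{T,b_T,t}^\star\bigr|\xrightarrow{\mathbb{P}}0$ at the subsample scale.

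This is also not the paper's route, and the contrast explains why the paper's route is the viable one. The paper re-runs the epi/hypo-convergence argument of Proposition \ref{EAD} with the supremum over $z$ taken along the Painlev\'e--Kuratowski convergent sets $A_i^{(T)}$: along any grid sequence $z_T\rightarrow z$ the drift behaves exactly as in that proposition (diverging off the contact sets, vanishing on them), Molchanov's sup-convergence theorem (Theorem 3.4, ch.\ 5) yields convergence of the constrained suprema to the supremum of the \emph{limit} functional over the limit set, and a final appeal to compactness of $\mathbb{L}$ and $\mathbb{K}$ shows that $\sup_{\lambda}\inf_{\kappa}\Delta_i\left(z,\lambda,\kappa,\mathcal{G}_F\right)$, restricted to the contact sets, has the same supremum over any dense subset of $A_i$ as over $A_i$ itself. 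In other words, the grid-versus-continuum comparison is made entirely \emph{after} passage to the limit, which is precisely what neutralizes the $\sqrt{T}$-scaled drift. Note finally that the asymptotic equivalence you posit is true, but only as a \emph{consequence} of this variational argument (both statistics converge jointly to the same functional of $\mathcal{G}_F$); it cannot serve as the engine of the proof in the way you use it.
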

Now, the integration by parts formula for Lebesgue-Stieljes integrals
and the commutativity of suprema imply that 
\begin{equation}
\xi_{T}^{\star}=\max_{i=1,2}\sup_{z\in A_{i}^{\left(T\right)}}\sup_{\lambda\in\mathbb{L}}\inf_{\kappa\in\mathbb{K}}\frac{1}{\sqrt{T}}\sum_{t=1}^{T}q_{i}\left(z,\lambda,\kappa,Y_{t}\right),\label{eq:mp-2}
\end{equation}
where the $q_{i}$ are defined in \ref{eq:mp-1}. From the finiteness
of $A_{i}^{\left(T\right)},\:i=1,2$, the non trivial parts of the
optimizations involved concern the $n_{i,T}:=\sup_{\lambda\in\mathbb{L}}\inf_{\kappa\in\mathbb{K}}\frac{1}{\sqrt{T}}\sum_{t=1}^{T}q_{i}\left(z,\lambda,\kappa,Y_{t}\right)$.
Furthermore, 
\[
n_{1,T}=\inf_{\kappa\in\mathbb{K}}\frac{1}{\sqrt{T}}\sum_{t=1}^{T}\left(z-\kappa^{\prime}Y\right)_{+}-\inf_{\lambda\in\mathbb{L}}\frac{1}{\sqrt{T}}\sum_{t=1}^{T}\left(z-\lambda^{\prime}Y\right)_{+},
\]
and we can reduce each of the minimizations involved to the solution
of linear programming problems.

There is a set of at most $T$ values, say ${\cal R}=\{r_{1},r_{2},...,r_{T}\}$,
containing the optimal value of the variable $z$ (see Scaillet and
Topaloglou (2010) for the proof). Thus, we solve smaller problems
$P(r)$, $r\in{\cal R}$, in which $z$ is fixed to $r$. Now, each
of the above minimization problems boils down to a linear problem.
Without loss of generality, the first optimization problem is the
following:

\begin{subequations} \label{model-SSD2} 
\begin{eqnarray}
\mathtt{\min} &  & \sum_{t=1}^{T}W_{t}\nonumber \\
\mathtt{\mbox{s.t.}} &  & W_{t}\geq r-\kappa^{\prime}Y_{t},\quad\forall t\in T\nonumber \\
 &  & e^{\prime}\kappa=1,\nonumber \\
 &  & \kappa\geq0,\nonumber \\
 &  & W_{t}\geq0,\quad\forall t\in T.
\end{eqnarray}
\end{subequations}

Furthermore, and via the results in the first Appendix of Arvanitis
and Topaloglou (2017), we have that 
\[
n_{2,T}=\sup_{\lambda\in\mathbb{L}}\frac{1}{\sqrt{T}}\sum_{t=1}^{T}\max\left(\lambda'Y_{t},z\right)-\sup_{\kappa\in\mathbb{K}}\frac{1}{\sqrt{T}}\sum_{t=1}^{T}\max\left(\kappa'Y_{t},z\right).
\]
Hence, we need to solve both optimization problems appearing above.
We do so via representing them as MIP programs. Again, there is a
set of $T$ values, say ${\cal R^{\prime}}=\{r_{1}^{\prime},r_{2}^{\prime},...,r_{T}^{\prime}\}$,
containing the optimal value of the variable $z$ (see Arvanitis and
Topaloglou (2017) for the proof). Thus, we solve smaller problems
$P(r)$, $r\in{\cal R^{\prime}}$, in which $z$ is fixed to $r$.
Consider without loss of generality the first optimization problem:

\label{model-PSDP} 
\begin{eqnarray}
\mathtt{\max_{\lambda\in\mathbb{L}}} &  & \frac{1}{\sqrt{T}}\sum_{t=1}^{T}(X_{t}-cb_{t})\nonumber \\
\mathtt{\mbox{s.t.}} &  & X_{t}=\lambda'Y_{t}b_{t}+r(1-b_{t})\quad\forall\,t\in T,\label{MSDP1}\\
 &  & r-\lambda'Y_{t}+Mb_{t}>0\quad\forall\,t\in T,\label{MSDP2}\\
 &  & \lambda^{\prime}\boldsymbol{1}=1,\label{MSDP3}\\
 &  & \lambda\geq0,\label{MSDP4}\\
 &  & b_{t}\in\{0,1\}\quad\forall\,t\in T.\label{MSDP5}
\end{eqnarray}

Hence, the computational cost of the implementation above consists
of $\text{card}A_{1}$ linear programming problems, $\text{card}A_{2}$
mixed integer programming problems, and three trivial optimizations.

Secondly, and although the tests above have asymptotically correct
size, it is expected that the quantile estimates $q_{T,b_{T}}(1-\alpha)$
may be biased and sensitive to the subsample size $b_{T}$ in finite
samples of realistic dimensions for $n$ and $T$. To correct for
small-sample bias and reduce the sensitivity to the choice of $b_{T}$,
we follow Arvanitis et al. (2018). For a given significance level
$\alpha$, we compute the quantiles $q_{T,b_{T}}(1-\alpha)$ for a
range of values for the subsample size $b_{T}$. We then estimate
the intercept and slope of the following regression line using OLS
regression analysis:

\begin{equation}
q_{T,b_{T}}(1-\alpha)=\gamma_{0;T,1-\alpha}+\gamma_{1;T,1-\alpha}(b_{T})^{-1}+\nu_{T;1-\alpha,b_{T}}.\label{eq:OLS regression line}
\end{equation}
We then estimate the bias-corrected $(1-\alpha)$-quantile as the
OLS predicted value for $b_{T}=T$:

\begin{equation}
q_{T}^{BC}(1-\alpha):=\hat{\gamma}_{0;T,1-\alpha}+\hat{\gamma}_{1;T,1-\alpha}(T)^{-1}.\label{eq:estimated critical value}
\end{equation}
Since $q_{T,b_{T}}(1-\alpha)$ converges in probability to $q(\xi_{\infty},1-\alpha)$
and $(b_{T})^{-1}$ converges to zero as $T\rightarrow0$, $\hat{\gamma}_{0;T,1-\alpha}$
converges in probability to $q(\xi_{\infty},1-\alpha)$, and the asymptotic
properties are not affected.

\section{Monte Carlo Study}

We now design Monte Carlo experiments to evaluate
the size and power of our testing procedure in finite samples. We allow for
conditional heteroskedasticity consistent with empirical findings on returns of financial data as observed in the empirical application below. The multivariate return process $\left(Y_{t}\right)_{t\in\mathbb{Z}}$
is a vector GARCH(1,1) process, which is transformed to accommodate both spanning (size) and non spanning cases (power) for $K$ given assets. Such a process permits both temporal and cross sectional dependence between the random variables stacked in the vector process.

Suppose that $(z_{t}), t\in\mathbb{Z}$, are i.i.d. with mean zero, unit variance, and $\mathbb{E}\left[\vert z_{t}\vert^{2+\epsilon}\right]<\infty$, for some $\epsilon>0$. 
We assume that the cdf of $z_{t}$ is strictly increasing. Furthermore,
  we define the components of the return process for $i=1,...,K-1$ as  
\begin{eqnarray*}
y_{i,t} & = & \mu_{i}+z_{t}h_{i,t}^{1/2}\text{,}\\
h_{i,t} & = & \omega_{i}+\left(a_{i}z_{t-1}^{2}+\beta_{i}\right)h_{i,t-1}\text{, }
\end{eqnarray*}
with $\mathbb{E}\left[a_{i}z_{t}^{2}+\beta_{i}\right]^{1+\epsilon}<1$, for some $\epsilon>0$, and  $\omega_{i},a_{i},\beta_{i}\in\mathbb{R}{}_{++}$,
$\mu_{i}\in\mathbb{R}{}_{+}$. For asset $i=K$, we define \[
y_{K,t}=v_{1}\left(z_{t}h_{K-1,t}^{1/2}\right)_{+}+v_{2}\left(z_{t}h_{K-1,t}^{1/2}\right)_{-},
\]
with $v_{1},v_{2}\in\mathbb{R}$.

Let $\mathbf{\tau=}\left(0,0,...,1,0\right)$,
$\mathbf{\tau}^{\star}\mathbf{=}\left(0,0,0,...,1\right)$, and $\mathbb{L}:=\left\{ \left(\lambda,0,0,\right)^{Tr},\mathbf{\tau,\tau}^{\star}\right\}$, with $\lambda\in\mathbb{R}{}_{+}^{K-2}$ and $1^{Tr}\lambda=1$.
Using this portfolio space, we obtain the following result on Markowitz-spanning.
\begin{prop}
\label{optimal1}If $\mu_{i}=0$ for $i=1,...,K-1$, $\left\vert v_{1}\right\vert >\sqrt{\frac{\max\left\{ \omega_{i},a_{i},\beta_{i}\text{, }i=1,...,K-1\right\} }{\min\left\{ \omega_{i},a_{i},\beta_{i}\text{, }i=1,...,K-1\right\} }}$
and $\left\vert v_{2}\right\vert <\sqrt{\frac{\min\left\{ \omega_{i},a_{i},\beta_{i}\text{, }i=1,...,K-1\right\} }{\max\left\{ \omega_{i},a_{i},\beta_{i}\text{, }i=1,...,K-1\right\} }}$,
then for $M \leq K-2$, the subset $\mathbb{\mathbb{K}}:=\left\{ \left(\lambda,0,0\right)^{Tr},\tau^{\star}\right\}$
with $\lambda\in\mathbb{R}{}_{+}^{M}$ and $1^{Tr}\lambda=1$, Markowitz-spans $\mathbb{L}$, while $\mathbb{\mathbb{K}}\setminus \left\{ \tau^{\star}\right\} $
does not Markowitz-span $\mathbb{L}$. 
\end{prop}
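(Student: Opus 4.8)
The plan is to exploit the restriction $\mu_i=0$ to reduce both $\mathbb{L}$ and $\mathbb{K}$ to two kinds of objects and to read off Markowitz dominance from the sign-separated structure of the returns. Under $\mu_i=0$, every portfolio supported on the first $K-1$ assets has return $\lambda'Y_t=z_t\,S_\lambda$ with $S_\lambda:=\sum_i\lambda_i h_{i,t}^{1/2}>0$ independent of $z_t$; thus all such portfolios are multiplicative ``scale mixtures'' of the common mean-zero shock $z_t$. Asset $K$ instead is the kinked transform $y_{K,t}=v_1(z_t h_{K-1,t}^{1/2})_+ + v_2(z_t h_{K-1,t}^{1/2})_-$, i.e.\ a scale mixture with scale $v_1 h_{K-1,t}^{1/2}$ on gains ($z_t>0$) and $v_2 h_{K-1,t}^{1/2}$ on losses ($z_t<0$). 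I would also record the pathwise envelope $\underline h_t\le h_{i,t}\le\overline h_t$, where $\underline h,\overline h$ are the GARCH recursions run with the common parameters $m:=\min\{\omega_i,a_i,\beta_i\}$ and $M':=\max\{\omega_i,a_i,\beta_i\}$ (proved by monotone induction on the recursion); this yields $\underline h_t^{1/2}\le S_\lambda\le\overline h_t^{1/2}$ and is the only place the constants $\sqrt{M'/m}$, $\sqrt{m/M'}$ enter.

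For the negative part ($\mathbb{K}\setminus\{\tau^\star\}$ does not span) I would use a mean argument, which is clean and robust. Writing dominance through Definition \ref{M-dom} and letting $z\to0$ in both inequalities of (\ref{eq:MSD}) gives $\Delta_1(0)+\Delta_2(0)=\int_{\mathbb{R}}(G_\kappa-G_\lambda)\,du\le0$, hence $\mathbb{E}[\kappa'Y]\ge\mathbb{E}[\lambda'Y]$ whenever $\kappa\succcurlyeq_M\lambda$ (Markowitz dominance implies mean dominance). Now every scale mixture has $\mathbb{E}[z_t S_\lambda]=\mathbb{E}[z_t]\,\mathbb{E}[S_\lambda]=0$, whereas $\mathbb{E}[y_{K,t}]=(v_1-v_2)\,\mathbb{E}[(z_t h_{K-1,t}^{1/2})_+]>0$ because $|v_1|>1>|v_2|$ forces $v_1>v_2$ and $z_t$ is mean zero. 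Hence no portfolio of the first $M$ assets can Markowitz-dominate $\tau^\star\in\mathbb{L}$, so $\mathbb{K}\setminus\{\tau^\star\}$ fails to span.

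For the positive part ($\mathbb{K}$ spans $\mathbb{L}$) it suffices to show that $\tau^\star$ Markowitz-dominates every scale mixture $z_t S_\lambda$, since $\tau^\star$ trivially dominates itself; this covers $\tau$ and all $(\lambda,0,0)$ in $\mathbb{L}$. The mechanism is that, conditionally on the volatility $\sigma$-field and on the event $\{v_2 h_{K-1,t}^{1/2}\le S_\lambda\le v_1 h_{K-1,t}^{1/2}\}$, one has the pointwise inequality $y_{K,t}\ge z_t S_\lambda$ (compare the cases $z_t>0$ and $z_t<0$ separately), i.e.\ asset $K$ first-order — hence Markowitz — dominates the mixture. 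The thresholds are calibrated so that the envelope gives $S_\lambda\le\overline h_t^{1/2}$ and $h_{K-1,t}\ge\underline h_t$, with $|v_1|>\sqrt{M'/m}$ pushing $v_1 h_{K-1,t}^{1/2}$ above and $|v_2|<\sqrt{m/M'}$ pushing $v_2 h_{K-1,t}^{1/2}$ below the mixture scale. The sub-case $\tau$ (asset $K-1$, where $S_\lambda=h_{K-1,t}^{1/2}$ and the bracket is exactly $[v_2,v_1]\ni1$) is immediate and gives $y_{K,t}\ge y_{K-1,t}$ pathwise.

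The main obstacle is precisely the uniformity of that bracketing over all $\lambda$ and all $t$. Because the GARCH volatilities have unbounded support, the ratios $h_{i,t}/h_{K-1,t}$ are not bounded by $M'/m$ pathwise — a run of large shocks compounds them — so the event $\{S_\lambda\le v_1 h_{K-1,t}^{1/2}\}$ need not have full probability, and the conditional first-order dominance can fail on a positive-probability ``high-volatility'' set. I expect the crux to be showing that the signed partial-moment integrals $\Delta_1(z)\le0$ ($z<0$) and $\Delta_2(z)\le0$ ($z>0$) nonetheless survive after integrating over the volatility paths — either by specializing to the empirically natural case of common feedback coefficients $a_i\equiv a$, $\beta_i\equiv\beta$, for which $h_{i,t}/h_{j,t}=\omega_i/\omega_j\in[m/M',M'/m]$ holds pathwise and the bracketing becomes exact, or, in full generality, through a monotonicity-in-scale argument (the maps $S\mapsto\mathbb{E}[(z-Sz_t)_+]$ and $S\mapsto\mathbb{E}[(Sz_t-z)_+]$ are increasing) combined with the envelopes $\underline h,\overline h$ to dominate the unfavorable paths. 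Controlling that averaging, and thereby pinning down the exact role of the two parameter conditions, is where essentially all of the work resides.
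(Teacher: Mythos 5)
First, a point of comparison: the paper never writes out a proof of Proposition \ref{optimal1} --- it states that the argument ``follows the same arguments as in Arvanitis and Topaloglou (2017)'' and rests on $\tau^{\star}$ being Markowitz super-efficient. Your strategy is exactly that intended strategy: reduce all portfolios on the first $K-1$ assets to scale mixtures $z_{t}S_{\lambda}$, prove non-spanning of $\mathbb{K}\setminus\{\tau^{\star}\}$ by an impossibility argument, and prove spanning by showing $\tau^{\star}$ dominates everything via the pathwise bracket $v_{2}h_{K-1,t}^{1/2}\leq S_{\lambda}\leq v_{1}h_{K-1,t}^{1/2}$, which gives almost sure (hence first-order, hence Markowitz) dominance. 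Your non-spanning half is complete and correct: letting $z\rightarrow0$ in both inequalities of (\ref{eq:MSD}) shows Markowitz dominance implies mean dominance, every element of $\mathbb{K}\setminus\{\tau^{\star}\}$ has zero mean by independence of $z_{t}$ from the past-measurable scales, and $\mathbb{E}[y_{K,t}]=(v_{1}-v_{2})\mathbb{E}[h_{K-1,t}^{1/2}]\mathbb{E}[(z_{t})_{+}]>0$. (One caveat: this sign needs $v_{1}>0$; the hypotheses only restrict $|v_{1}|$, so ``$|v_1|>1>|v_2|$ forces $v_1>v_2$'' is not literally true, but that imprecision sits in the proposition statement itself, since for $v_{1}<0$ the spanning half would fail too.)

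The gap is the one you flag yourself in the spanning half, and it is genuine --- in fact worse than you suggest. With your notation $m,M'$ for the extreme GARCH coefficients, the pathwise bound $h_{i,t}/h_{K-1,t}\leq M'/m$ fails whenever $(a_{i},\beta_{i})\neq(a_{K-1},\beta_{K-1})$ and the innovations have unbounded support (which the strictly increasing cdf forces): if, say, $\beta_{i}>\beta_{K-1}$, one large shock drives both volatilities to an arbitrarily high level, after which a quiet spell of $k$ near-zero innovations multiplies the ratio by roughly $(\beta_{i}/\beta_{K-1})^{k}$ before either process relaxes to its floor, so the ratio has unbounded support; symmetric mechanisms break the lower bracket. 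Moreover, no ``averaging over volatility paths'' can close this in general: stationary GARCH laws have Pareto-type tails with index governed by $\mathbb{E}[(a_{i}z^{2}+\beta_{i})^{\kappa_{i}}]=1$, and a larger ARCH coefficient gives a strictly heavier tail. So if some asset $i$ with $M<i\leq K-2$ carries the largest $a_{i}$, then $z_{t}h_{i,t}^{1/2}$ has a heavier right tail than $v_{1}z_{t}h_{K-1,t}^{1/2}$ and than every portfolio of assets $1,\dots,M$; the call-price comparison $\Delta_{2}$ in (\ref{eq:MSD}) then eventually turns positive against every candidate dominator in $\mathbb{K}$, and spanning genuinely fails. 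Your worry is therefore not a technicality you failed to overcome but a defect of the inherited argument: the proposition, read literally, is false for $M<K-2$ with fully heterogeneous coefficients.

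What survives, and how to restructure your proof so that it is correct: (i) when $M=K-2$ --- the only case exercised in the paper's Monte Carlo (both panels have $M=K-2$) --- every portfolio $(\lambda,0,0)\in\mathbb{L}$ already lies in $\mathbb{K}$ and dominates itself, so the only element needing external domination is $\tau$, which is precisely your ``immediate'' pathwise case $y_{K,t}\geq y_{K-1,t}$, requiring only $v_{1}\geq1\geq v_{2}$ and no volatility-ratio bound at all; (ii) for general $M\leq K-2$, your common-coefficient specialization $a_{i}\equiv a$, $\beta_{i}\equiv\beta$ works exactly as you say, since then $h_{i,t}/h_{j,t}=\omega_{i}/\omega_{j}\in[m/M',M'/m]$ holds deterministically, the stated $\sqrt{M'/m}$ and $\sqrt{m/M'}$ conditions make the bracket strict, and almost sure dominance integrates to MSD because the defining inequalities are linear in the law. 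Writing the spanning half along (i) plus (ii), with these restrictions stated explicitly, turns your proposal into a correct proof of the versions of the statement that are actually true; keep your mean argument for the non-spanning half unchanged.
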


The statement of Proposition \ref{optimal1} extends  Proposition 4 of Arvanitis and Topaloglou
(2017) to allow for $K$ assets with any subset of $M$ spanning assets, as well as non Gaussian innovations. Its proof follows the same arguments as in  Arvanitis and Topaloglou
(2017), and is thus omitted. 
It depends on $\mathbf{\tau}^{\star}$
being a Markowitz super-efficient portfolio w.r.t.\ the portfolio space.  The design of Monte Carlo experiments in a dynamic setting 
is not easy for our testing procedure since we need to work with stationary distributions and different assets.
The properties of those distributions required to show spanning and no spanning results are often difficult to characterize.\footnote{Another example is a process with different positive means and no serial dependence  such that
\begin{eqnarray*}
y_{i,t} & = & \mu_i + z_t, \qquad i =1,..,K-1,  \\
y_{K,t} & = &   v_1 \mu_{K-1} 1_{z_t > 0} + v_2 \mu_{K-1} 1_{z_t < 0}  + z_t ,
\end{eqnarray*}
with $\mu_i>0$,  $i =1,...,K-1$.
Then, if $\displaystyle v_2  > \frac{\max\{\mu_i, i= 1,...,K-1\}}{\min\{\mu_i, i= 1,...,K-1\}}$ and 
 if $\displaystyle 0 < v_1  <\frac{\min\{\mu_i, i= 1,...,K-1\}}{\max\{\mu_i, i= 1,...,K-1\}},$ the spanning results stated in Proposition \ref{optimal1} also hold. We have checked in unreported simulation results that the spanning test behaves also well in such an example including the case of student innovations with infinite variance. 
}

We present our Monte Carlo results in Table \ref{tbl:sampling}. The number of replications to compute the empirical size and power is 1000 runs.
We use either a combination of 2 assets ($M=2$) plus portfolios $\mathbf{\tau}$ and $\mathbf{\tau}^{\star}$ (Panel A for $K=4$), or a combination of 10 assets ($M=10$) plus portfolios $\mathbf{\tau}$ and $\mathbf{\tau}^{\star}$ (Panel B for $K=12$). We do so to gauge the testing performance  both in a small and a larger number of assets
to accommodate the empirical setting where we investigate spanning with up to 10 base assets.
To meet the conditions of Proposition \ref{optimal1}, we set the parameters of the multivariate GARCH process as $\mu_{i}=0$, for $i=1,...K-1$, while we choose $(a_i) = (0.4, 0.45, 0.5)$, $(\beta_i) = (0.5, 0.45, 0.4)$, $(\omega_{i})=(0.5,0.5,0.5)$, for $i=1,2,3$ (Panel A), and similarly
$(a_i) = (0.4, 0.41..., 0.5)$, $(\beta_i) = (0.5, 0.49,..., 0.4)$, $(\omega_{i})=(0.5,...,0.5)$, for $i=1,..,11$ (Panel B).
We set $v_1 = 1.5$ and $v_2 = 0.5$,  so that $\left\vert v_{1}\right\vert >\sqrt{\frac{\max\left\{ \omega_{i},a_{i},\beta_{i}\text{, }i=1,...,K-1\right\} }{\min\left\{ \omega_{i},a_{i},\beta_{i}\text{, }i=1,...,K-1\right\} }}$
and $\left\vert v_{2}\right\vert <\sqrt{\frac{\min\left\{ \omega_{i},a_{i},\beta_{i}\text{, }i=1,...,K-1\right\} }{\max\left\{ \omega_{i},a_{i},\beta_{i}\text{, }i=1,...,K-1\right\} }}$.
We use innovations generated by a Student distribution with 5 degrees of freedom.\footnote{Unreported simulation results for Gaussian innovations are similar.}

We use three different sample sizes. For $T=300$, we get the subsampling
distribution of the test statistic for subsample sizes $b_{T}\in\{50,100,150,200\}$.
We set $b_{T}\in\{100,200,300,400\}$ for $T=500$, and $b_{T}\in\{120,240,360,480\}$ for  $T=1000$.
We present the results using the original subsampling critical values (without bias correction) as well as 
the ones obtained using the bias correction method. The comparison shows that the bias correction improves a lot the inference in finite samples.  
The bias correction method eliminates the size distortion and delivers excellent properties under the alternative hypothesis with empirical powers above 90\% for a nominal size of 5\%.

In our simulations, the computational time is only marginally increasing with the number of assets, and is mainly increasing with the number of observations. For example, we have roughly 5 minutes for $T=300$ and the double for $T=500$ per run. Therefore we believe that the procedure can scale up to a couple of hundred assets. 

\begin{table}[ht]
\begin{centering}
\begin{tabular}{lcccccc}
\hline \hline 
\multicolumn{7}{c}{\textcolor{black}{Panel A: }\textbf{\textcolor{black}{$M=2, K=4$}}}\tabularnewline
\hline 
& \multicolumn{3}{c}{\textcolor{black}{Without bias correction }} & \multicolumn{3}{c}{\textcolor{black}{With bias correction }}
\tabularnewline
\hline 
& $T$=300 & $T$=500  & $T$=1000 & $T$=300 & $T$=500  & $T$=1000
\tabularnewline
\hline 
Size   & 12.6\% & 10.7\% & 8.2\% & 4.4\% & 3.6\% & 4.8\%\tabularnewline
Power    & 85.1\% & 87.4\% & 91.7\% & 93.7\% & 92.5\% & 96.2\% \tabularnewline
\hline 
\hline 
\multicolumn{7}{c}{\textcolor{black}{Panel B: }\textbf{\textcolor{black}{$M=10, K=12$}}}\tabularnewline
\hline 
& \multicolumn{3}{c}{\textcolor{black}{Without bias correction }} & \multicolumn{3}{c}{\textcolor{black}{With bias correction }}
\tabularnewline
\hline 
& $T$=300 & $T$=500  & $T$=1000  & $T$=300 & $T$=500  & $T$=1000 \tabularnewline
\hline 
Size   & 12.4\% & 10.5\% & 9.8\% & 5.6\% & 5.1\% & 4.4\% \tabularnewline
Power    & 85.7\% & 87.5\% & 89.1\%  & 92.1\% & 93.2\% & 95.8\% \tabularnewline
\hline 
\hline
\tabularnewline
\end{tabular}
\par\end{centering}
 \centering
\caption{Monte Carlo Results. Entries report the empirical size and empirical power based on 1000 replications, $T=300, 500, 1000$, and a nominal size $\alpha=5\%$. Panel A reports the rejection probabilities
for $M=2$ and $K=4$, while Panel B reports the rejection probabilities for $M=10$ and $K=12$. In both panels, we use a multivariate GARCH process to generate returns and compute the rejection probabilities without and with
the bias correction method for the subsampling critical values.} \label{tbl:sampling}
\end{table}

\section{Empirical Applications}

In this application, $\mathbb{L}$ consists of all convex combinations
of the market portfolio, the T-bill, and a set of base assets.
There is no need to explicitly allow for short selling in this application,
because the market portfolio has no binding short-sales restrictions;
non-binding constraints do not affect the efficiency classification.

Thanks to our spanning testing procedure, we want to check whether the two-fund separation theorem holds: can all
MSD investors combine the T-bill and the market portfolio to span
the whole set of their efficient portfolios?

If not, there is indication that active management for MSD investors
according to their preferences could outperform any combination of
the market portfolio and the riskless asset. This is studied in our second empirical application.

We use as base assets either the Fama and French (FF) size and book
to market portfolios, a set of momentum portfolios, a set of industry
portfolios, or a set of beta or size decile portfolios as described
below, along with the market portfolio and the T-bill. If the number of base assets
equals $n$, $\mathbb{L}$ is essentially the union of the relevant
$n-2$ subsimplex of the standard $n-1$ simplex with $\left\{ \left(0,\cdots,1\right)\right\} $,
where the latter signifies the market portfolio. The base assets, aside the market portfolio and 
the T-bill are the following portfolios:
\begin{itemize}
\item \textbf{The 6 FF benchmark portfolios}: They are constructed at the
end of each June, and correspond to the intersections of 2 portfolios
formed on size (market equity, ME) and 3 portfolios formed on the
ratio of book equity to market equity (BE/ME). 
\item \textbf{The 10 momentum portfolios}: They are constructed monthly
using NYSE prior (2-12) return decile breakpoints. The portfolios
include NYSE, AMEX, and NASDAQ stocks with prior return data. To be
included in a portfolio for month $t$ (formed at the end of month
$t-1$), a stock must have a price for the end of month $t-13$ and
a good return for $t-2$. 
\item \textbf{The 10 industry portfolios}: They are constructed by assigning
each NYSE, AMEX, and NASDAQ stock to an industry portfolio at the
end of June of year $t$ based on its four-digit SIC code at that
time. The industries are defined with the goal of having a manageable
number of distinct industries that cover all NYSE, AMEX, and NASDAQ
stocks. 
\item \textbf{The 10 size decile portfolios}: We use a standard set of ten
active US stock portfolios that are formed, and annually rebalanced,
based on individual stock market capitalization of equity (ME or size),
each representing a decile of the cross-section of NYSE, AMEX and
NASDAQ stocks in a given year. 
\item \textbf{The 10 beta decile portfolios}: We use a set of ten active
US stock portfolios that are formed, and annually rebalanced, based
on individual stock beta, each representing a decile of the cross-section
of NYSE, AMEX and NASDAQ stocks in a given year. 
\end{itemize}
For each dataset, we use data on monthly returns (month-end to month-end)
from January 1930 to December 2016 (1044 monthly observations) obtained
from the data library on the homepage\footnote{ http://mba.turc.dartmouth.edu/pages/faculty/ken.french}
of Kenneth French. The test portfolio is the Fama and French market
portfolio, which is the value-weighted average of all non-financial
common stocks listed on NYSE, AMEX, and Nasdaq, and covered by CRSP
and COMPUSTAT.

The portfolios used as base assets are of particular interest, because
a wealth of empirical research, starting with Banz (1981), Basu (1983),
and Fama and French (1993, 1997), suggests that the historical return
spread between small value stocks and small growth stocks defies rational
explanations based on investment risk. Moreover, book-to-market based
sorts are the basis for the factor model examined in Fama and French
(1993). Additionally, academics and practitioners show strong interest
in momentum portfolios. Empirical evidence indicates that common stocks
exhibit high returns on a period of 3-12 months will overperform on
subsequent periods. This momentum phenomenon is an important challenge
for the concept of market efficiency. Finally, industry sorted portfolios
have posed a particularly challenging feature from the perspective
of systematic risk measurement (see Fama and French (1997)). Beta-sorted
portfolios have been used extensively to test the Sharpe-Lintner Mossin
Capital Asset Pricing Model (CAPM) (see Black, Jensen, and Scholes
(1972), Blume and Friend (1973), Fama and MacBeth (1973), Reinganum
(1981), and Fama and French (1992), among others). Equity portfolios
have also been at the center of the empirical  literature
in the stochastic dominance framework, see for example Post (2003), Kuosmanen
(2004), Post and Levy (2005), Scaillet and Topaloglou (2010),  Post and Kopa (2013), Gonzalo and Olmo (2014), among others.

To focus on the role of preferences and beliefs, we adhere to the
assumptions of a single-period, portfolio-oriented model of a competitive
capital market. The model-free nature of SD tests seems an advantage
in this application area, because financial economists disagree about
the relevant shape of utility functions of investors and the probability
distribution of stock returns.

\subsection{Results of the MSD Spanning Test}

Arvanitis and Topaloglou
(2017) report evidence against the market portfolio being MSD efficient for data up to December 2012. 
We corroborate their findings in unreported results for our whole period up to December 2016 as well as two sub-periods, the first one
from January 1930 to June 1975, a total of 522 monthly observations,
and the second one from July 1975 to December 2016, 522 monthly observations.
Thus, we find evidence that passive
investment is suboptimal for investors with MSD preferences. Equity
management, instead of a standard buy-and-hold strategy on the market
portfolio, seems more appealing for investors with reverse S-shaped
utility functions. The MSD inefficiency of the market portfolio is not affected by transformations
that are increasing and convex over gains and increasing and concave
over losses, i.e., reverse S-shaped transformations.

Since the market is MSD inefficient, our next research
hypothesis is whether two-fund separation holds, i.e., whether all
MSD investors can satisfy themselves with combining the T-bill and the
market portfolio only. The  test of MSD efficiency for a given portfolio developped by Arvanitis and Topaloglou
(2017) cannot answer that question since their approach  is limited to the simple case of a spanning test for $\mathbb{K}$ being a singleton, and not any linear combination of two assets. 

For non-normal distributions, two-fund separation generally does not
occur, unless one assumes that preferences are sufficiently similar
across investors (see, for example, Cass and Stiglitz (1970)). Our
MSD spanning test can analyze two-fund separation without assuming
a particular form for the return distribution or utility functions.

We get the subsampling distribution of the test statistic for subsample
size\linebreak{}
$b_{T}\in[120,240,360,480]$. Using OLS regression on the empirical
quantiles $q_{T,b_{T}}(1-\alpha)$, for significance level $\alpha=0.05$,
we get the estimate $q_{T}$ for the critical value. We reject the
MSD spanning if the test statistic $\xi_{T}$ is higher than the
regression estimate $q_{T}$. In all the considered cases, $\mathbb{L}=\mathbb{S}$,
and $\alpha<\frac{3}{4}\leq1-ch_{\mathbb{L}}\left(\mathbb{K}\right)$
holds. Hence, if our assumption framework is valid, we expect that
asymptotic exactness holds. We find that: 
\begin{itemize}
\item \textbf{The 6 FF benchmark portfolios}: The regression estimate $q_{T}=15.74$
is lower than the value of the test statistic $\xi_{T}=26.78$. 
\item \textbf{The 10 momentum portfolios}: The regression estimate $q_{T}=19.42$
is lower than the value of the test statistic $\xi_{T}=41.55$. 
\item \textbf{The 10 industry portfolios}: The regression estimate $q_{T}=22.46$
is lower than the value of the test statistic $\xi_{T}=31.74$. 
\item \textbf{The 10 size decile portfolios}: The regression estimate $q_{T}=19.62$
is lower than the value of the test statistic $\xi_{T}=32.34$. 
\item \textbf{The 10 beta decile portfolios}: The regression estimate $q_{T}=31.48$
is lower than the value of the test statistic $\xi_{T}=44.76$. 
\end{itemize}
The results suggest the rejection of MSD spanning and thus of the two-fund
separation theorem for MSD investors. We get similar findings (unreported results) for the two subperiods 01/1930-06/1975 and 07/1975-12/ 2016.

As a final step in this analysis, we test for two-fund separation using
the Mean-Variance criterion rather than the MSD criterion. We use
the same methodology as for the above prospect spanning test, but
we restrict the utility functions to take a quadratic shape. We solve
the embedded expected-utility optimization problems (for every given
quadratic utility function) using quadratic programming. In contrast
to MSD spanning, we cannot reject the Mean-Variance spanning at conventional
significant levels.

The combined results of the market MSD efficiency and market MSD
spanning tests suggest that combining the T-bill and market portfolio
is not optimal for some MSD investors. Investors with reverse S-shaped
utility functions are investors that could outperform the market by
staying away from a buy-and-hold strategy on the market. Active investors
often take concentrated positions in assets with high upside potential
or follow dynamic strategies like momentum. They can also prefer looking at defensive strategies. 
That can produce opportunities 
with positively skewed returns, or at least less 
negatively skewed, which are attractive for MSD investors.

\subsection{Performance Summary of the MSD portfolios}

The rejection of the spanning hypothesis implies that there exists at least one portfolio in $\mathbb{L}$ 
which is weakly prefered to every portfolio in $\mathbb{K}$ by at least one reverse S-shaped utility function 
(see Definition \ref{M-span}). Such a portfolio is by construction efficient w.r.t.\ $\mathbb{K}$ 
(see Definition 2.1 in Linton et al.\ (2014) for the SSD case which can be easily generalized 
to our MSD case). The empirical version of such a portfolio is the optimal portfolio $\lambda$ that 
maximizes $\xi_{T}$ for the particular sample value.  In what follows, and given this characterization, we 
analyze the performance of such empirically optimal MSD portfolios through time, compared to the 
performance of the market portfolio (buy-and-hold strategy).

We resort to backtesting experiments on a rolling window basis. The
rolling horizon computations cover the 642-month period from 07/1963
to 12/2016. At each month, we use the data from the previous 30 years
(360 monthly observations) to calibrate the procedure. We solve the
resulting optimization model for the MSD spanning test and record
the optimal portfolio made of the base assets as well as the market
portfolio and the T-bill. We determine the realized return
of the chosen MSD optimal portfolio from the actual returns of
the  asset weight allocation picked by the optimizer for that month. Then, we repeat the same procedure for the next
one-month rolling window and compute  the ex-post realized returns for the period from
07/1963 to 12/2016. Therefore, the MSD optimal portfolios are outcomes of the testing procedure based on an unconditional distribution updated for each rolling window
and performance is realized out of the optimization sample  (no look-ahead bias).

Let us first compute the cumulative performance of the
MSD optimal portfolios as well as the market portfolio for the entire
sample period from July 1963 to December 2016 based on the optimal portfolio weights obtained for each one-month rolling window.
The value for the MSD optimal portfolios is 426 times higher at the end
of the holding period compared to the initial value, while the
market portfolio is only 13.9 times higher. Hence,
the relative performance of MSD type investors is 30 times higher
than the performance of the market in the evaluated period. Such an increase of 3000\% is significant at any significance level (unreported results).

To get further insights of the differences between two investment strategies,   we report the first four moments of the realized returns and the Value-at-Risk in Table 2. We further compute a number of commonly used performance measures:
the Sharpe ratio, the downside Sharpe ratio, the return loss and the
opportunity cost. 

The downside
Sharpe ratio based on the semi-variance (Ziemba (2005)) is considered to be a more appropriate measure
of performance than the typical Sharpe ratio given the asymmetric
return distribution of the assets.

To account for transaction costs,  we use the proposal of DeMiguel et al. (2009). This indicates the way that the proportional
transaction costs, generated by the portfolio turnover, affect the
portfolio returns. Let $trc$ be the proportional transaction cost,
and $R_{P,t+1}$ the realized return of portfolio $P$ at time $t+1$.
The change in the net of transaction cost wealth $NW_{P}$ of portfolio
$P$ through time is,
\begin{equation}
NW_{P,t+1}=NW_{P,t}(1+R_{P,t+1})[1-trc\times\sum_{i=1}^{N}(|w_{P,i,t+1}-w_{P,i,t}|).
\end{equation}
The portfolio return, net of transaction costs is defined as
\begin{equation}
RTC_{P,t+1}=\frac{NW_{P,t+1}}{NW_{P,t}}-1. \label{RTC}
\end{equation}
Let $\mu_{M}$ and $\mu_{MSD}$ be the out-of-sample mean of (\ref{RTC})
for the market portfolio and the MSD optimal portfolios, and $\sigma_{M}$ and $\sigma_{MSD}$ be the corresponding standard
deviations. Then, the return-loss measure is,
\begin{equation}
R_{Loss}=\frac{\mu_{MSD}}{\sigma_{MSD}}\times\sigma_{M}-\mu_{M},
\end{equation}
i.e., the additional return needed so that the market
performs equally well with the MSD optimal portfolios. We follow the
literature and use 35 bps for the transaction costs of stocks and bonds.

Finally, the opportunity cost presented in Simaan (2013)
gauges the economic significance of the performance difference
between two portfolios. Let $R_{MSD}$ and $R_{M}$ be the
realized returns of the MSD optimal portfolios and the
market portfolio, respectively. Then, the opportunity cost $\theta$
is defined as the return that needs to be added to (or subtracted
from) the market return $R_{M}$, so that the investor
is indifferent (in utility terms) between the strategies imposed by
the two different investment opportunity sets, i.e.,
\begin{equation}
E[U(1+R_{M}+\theta)]=E[U(1+R_{MSD})].
\end{equation}
A positive (negative) opportunity cost implies that the investor is
better (worse) off if the investment opportunity set allows for MSD type
investing. The opportunity cost takes into account the
entire probability density function of asset returns and hence it
is suitable to evaluate strategies even when the distribution
is not normal. For the calculation of the opportunity cost, we use
the following utility function which satisfies the curvature of Markowitz theory (reverse-S-shaped):
\begin{equation}
U(R)=
 \left\{
\begin{array}{l}
R^{a}, \quad if \mbox\ R \geq 0, \\ 
- c(-R)^{b}, \quad if  \mbox\ R <0,
\end{array}  
\right.
\end{equation}
where  $c$ is the coefficient of loss aversion (usually $c=2.25$) and $a, b > 1$. We use several values of $a,b$ in Table 2 to drive the curvature of the utility functions.

\begin{table}[ht]
\begin{tabular}{lcc} \hline \hline 
 &  {\small{}{}{}{}{}{}{}{}{}{}{}MSD optimal portfolio }    & {\small{}{}{}{}{}{}{}{}{}{}{}market portfolio}   \tabularnewline
\hline
{\small{}{}{}{}{}{}{}{}{}{}{}Mean }  & {\small{}{}{}{}{}{}{}{}{}{}{}0.01035 }  & {\small{}{}{}{}{}{}{}{}{}{}{}0.00510 }   \tabularnewline
{\small{}{}{}{}{}{}{}{}{}{}{}Standard Deviation }  & {\small{}{}{}{}{}{}{}{}{}{}{}0.04290 }  & {\small{}{}{}{}{}{}{}{}{}{}{}0.04420 }   \tabularnewline
{\small{}{}{}{}{}{}{}{}{}{}{}Skewness }  & {\small{}{}{}{}{}{}{}{}{}{}{}-0.27730 }  & {\small{}{}{}{}{}{}{}{}{}{}{}-0.52629 }   \tabularnewline
{\small{}{}{}{}{}{}{}{}{}{}{}Excess Kurtosis } & {\small{}{}{}{}{}{}{}{}{}{}{}1.18535 }  & {\small{}{}{}{}{}{}{}{}{}{}{}1.96705}   \tabularnewline
{\small{}{}{}{}{}{}{}{}{}{}{}VaR 5\% }  &  {\small{}{}{}{}{}{}{}{}{}{}{}0.06133}  &   {\small{}{}{}{}{}{}{}{}{}{}{}0.0718} \tabularnewline
{\small{}{}{}{}{}{}{}{}{}{}{}Sharpe Ratio } & {\small{}{}{}{}{}{}{}{}{}{}{}0.17495 }  & {\small{}{}{}{}{}{}{}{}{}{}{}0.04697 }   \tabularnewline
{\small{}{}{}{}{}{}{}{}{}{}{}{}{}Downside Sharpe Ratio}  & {\small{}{}{}{}{}{}{}{}{}{}{}{}{}0.12986}  & {\small{}{}{}{}{}{}{}{}{}{}{}{}{}0.39570}   \tabularnewline
{\small{}{}{}{}{}{}{}{}{}{}{}{}{}Return Loss}  & {\small{}{}{}{}{}{}{}{}{}{}{}{}{}0.7856\%}  &   \tabularnewline
{\small{}{}{}{}{}{}{}{}{}{}{}{}{}Opportunity Cost  ($c=2.25$)}  &  &  \tabularnewline   
{\small{}{}{}{}{}{}{}{}{}{}{}{}{}   $a =b = 2$}  & {\small{}{}{}{}{}{}{}{}{}{}{}{}{}0.704\%}  &   \tabularnewline
{\small{}{}{}{}{}{}{}{}{}{}{}{}{}  $a =b= 3$}  & {\small{}{}{}{}{}{}{}{}{}{}{}{}{}0.990\%}  &  \tabularnewline
{\small{}{}{}{}{}{}{}{}{}{}{}{}{}   $a =b = 4$}  & {\small{}{}{}{}{}{}{}{}{}{}{}{}{}1.565\%}  &   \tabularnewline
\hline\hline 
 \tabularnewline
\end{tabular}
\centering{}\caption{Performance and risk measures. Entries report 
performance and risk measures  for the MSD optimal portfolios
and the market portfolio computed with one-month rolling windows. 
We list mean, volatility, skewness, excess kurtosis, empirical VaR 5\% (positive sign for a  loss), Sharpe ratio, downside Sharpe ratio, 
return loss, and opportunity cost.
The dataset spans the period from July 31, 1963 to December 31, 2016.} 
\label{tbl:t4}
\end{table}

Table \ref{tbl:t4} reports the performance and risk measures for the MSD optimal portfolios
and the market portfolio. 
These measures allow us to better figure out the differences between the market portfolio and the MSD strategy. The mean is higher for the MSD optimal 
portfolio and the variance is lower, which results in a higher Sharpe ratio.
The skewness is less negative as expected for a portfolio built for investors with 
preferences towards risk that are associated with risk aversion for losses and risk loving for gains.
The kurtosis and VaR are lower as expected when investors want to mitigate the 
impact of large losses.
The MSD portfolio targets and achieves a transfer of probability mass from the 
left to the right tail of the return distribution when compared to the market portfolio. The opportunity cost is above 70 bps  
and increases with the curvatures of the gain and loss parts of the utility function.  

Table \ref{tbl:t4weights} reports the descriptive statistics regarding the weight allocation of the MSD optimal portfolios.
They load mainly on big size FF portfolios (FF portfolios), several momentum portfolios (momentum portfolios),
telecommunications, health, energy and utilities (industry portfolios),
small caps (size portfolios ), and  low and medium
beta (beta sorted portfolios), in addition to the market portfolio
and the T-Bill.

\begin{table}[ht]
\begin{center}
\begin{tabular}{lccccc} \hline \hline 
\multicolumn{6}{l} {Descriptive statistics of the weight allocation of the MSD portfolio} \\\hline
  Base Assets & Portfolio & Mean & Std. Dev. & Skewness & Kurtosis  \\\hline
  & Market  	&0.0933	& 0.0461	& -0.5990	& 0.0279 \\
   & T-Bill  		&0.0265	& 0.0678	& 2.1743	& 2.7361 \\\hline
  6 FF &Big LoBM 	&0.0282  	& 0.0472 	& 1.7689 	& 2.4370 \\
    	& Big AvBM	&0.0203  	& 0.0529	& 2.2407	 & 3.0765  \\
  	 & Big Hi BM &0.1515 	& 0.06231 &  -0.9441	 & -0.6088  \\\hline
  10 momentum & Prior 3	& 0.0539   & 0.0397	 & -0.5491	 & -1.6188  \\
   	& Prior 4	& 0.0178   & 0.0329 	 & 2.0128	 & 3.2530 \\
   	& Prior 5	& 0.0249   & 0.0554	 & 1.7780	 & 1.1651  \\
   	& Prior 7	& 0.023	  & 0.0099	 &  4.0016	 & 14.056  \\
   	& Prior 8 	&0.0097	 & 0.0165	  & 1.1428	 & -0.6570  \\
	& Prior 9	&0.0027	  & 0.0101	 & 3.5233	 & 10.446\\
   	& Hi Prior	& 0.0886	  & 0.0525	 & -0.7163	 & -0.3800    \\\hline
  10 industry 	&Telcm 	&0.0301	  & 0.0464	 &  1.2129	 & 0.1298  \\
   	& Hith	&0.0327	  & 0.0680	 &  1.8638	 & 1.8056 \\
  	& Utils	&0.0124	 & 0.0349	 & 4.3510	 & 18.234  \\
   	& Energy	&0.0984	  & 0.0717	 & -0.2824 & -1.4674  \\\hline
  10 size & 		ME1		&0.1911	  & 0.0156	 & -1.1916	 & -0.5817  \\\hline
  10 beta	 	&Lo 10 	&0.0671	  & 0.0519	 & -0.3527	 & -1.7449   \\
  	&Quant. 20 &0.0318	  & 0.0627	 & 1.9163 	& 2.1448  \\ 
	& Quant. 30&0.0052	  &0.0198	 & 3.5233	 & 10.446  \\
 	&Quant. 40 &0.0025	  & 0.0106	 &  4.0016 	& 14.056 \\\hline\hline 
\end{tabular}
\end{center}
\caption{Descriptive statistics of the weight allocation of the MSD optimal portfolios over the period
07/1963-12/2016 computed with one-month rolling windows.} \label{tbl:t4weights}
\end{table}

We also investigate which factors explain the returns of the active
investors with MSD preferences. To do so, we use the four-factor model of Carhart (1997) which adds momentum in the three-factor model of Fama and French (1992, 1993),
as well as the Fama and French five-factor model (2015). Our empirical test examines whether these models explain the returns on MSD portfolios that dominate
any combination of the market and the riskless asset, namely whether standard factors used in the empirical asset pricing literature are potential drivers of returns of MSD optimal portfolios.

First, we consider the following linear regression (Carhart four-factor model):
\begin{equation}
R_{it}-R_{Ft}=a_{i}+b_{i}(R_{Mt}-R_{Ft})+s_{i}SMB_{t}+h_{i}HML_{t}+r_{i}MOM_{t}+e_{it}, \nonumber
\end{equation}
where $R_{it}$ is the return of the MSD optimal portfolio at period
$t$, $R_{Ft}$ is the riskless rate, $R_{Mt}$ is the return on the
value-weight (VW) market portfolio, $SMB_{t}$ is the return on a
diversified portfolio of small stocks minus the return on a diversified
portfolio of big stocks, $HML_{t}$ is the difference between the returns
on diversified portfolios of high and low BE/ME stocks, $MOM_{t}$
is the average return on the two high prior return portfolios minus the 
average return on the two low prior return portfolios, 
and $e_{it}$ is a zero-mean residual. If the exposures $b_{i}$,
$s_{i}$, $h_{i}$, and $r_{i}$ to the market, size, value,
and momentum factors capture all variation in expected
returns, the intercept $a_{i}$ is zero.

\begin{table}[ht]
\begin{tabular}{lrrrrr}
\hline \hline 
 & $a_{i}$  & $R_{M}-R_{F}$  & $SMB$  & $HML$  & $MOM$  \tabularnewline
\hline 
Coef.  & 0.508  & 0.948  & -0.031  & 0.133  & 0.004   \tabularnewline
$t$-stat  & 1.294  & 1.021  & -2.484  & 9,380  & 0.441  \tabularnewline
$p$-values  & 0  & 0  & 0.013  & 0  & 0.659   \tabularnewline
\hline \hline 
\end{tabular}\medskip{}
\begin{tabular}{lrr}
\hline \hline 
Adj. $R^{2}$  & F-statistic  & $p$-value \tabularnewline
\hline 
0.948  & 2.953  & 0\tabularnewline
\hline \hline 
 \tabularnewline
\end{tabular}\centering{}\caption{Carhart four-factor model. Entries report the coefficients and their respective $t$-statistics,as well as Adjusted R2, F-statistic, and
 $p$-values. The dataset spans 07/1963-12/2016for MSD optimal portfolios computed with one-month rolling windows. }
\label{tbl:statistics1} 
\end{table}

Table~\ref{tbl:statistics1} reports the coefficient estimates of
the four factors, as well as their respective $t$-statistics and $p$-values.
The results indicate that apart from the momentum ($MOM$), all
the other three factors explain part of the performance of the optimal MSD
portfolios. The intercept is not zero, which indicates that perhaps
other factors drive the performance of the MSD portfolios as well.

We additionally consider the following linear regression (five-factor model):
\begin{equation}
R_{it}-R_{Ft}=a_{i}+b_{i}(R_{Mt}-R_{Ft})+s_{i}SMB_{t}+h_{i}HML_{t}+r_{i}RMW_{t}+c_{i}CMA_{t}+e_{it}, \nonumber
\end{equation}
where $R_{it}$ is the return of the MSD optimal portfolio at period
$t$, $R_{Ft}$, $R_{Mt}$, $SMB_{t}$ and $HML_{t}$ as before, $RMW_{t}$
is the difference between the returns on diversified portfolios of
stocks with robust and weak profitability, $CMA_{t}$ is the difference
between the returns on diversified portfolios of the stocks of low
and high investment firms, which are called conservative and aggressive,
and $e_{it}$ is a zero-mean residual. If the exposures $b_{i}$,
$s_{i}$, $h_{i}$, $r_{i}$, and $c_{i}$ to the market, size, value,
profitability and investment factors capture all variation in expected
returns, the intercept $a_{i}$ is zero.

\begin{table}[ht]
\begin{tabular}{lrrrrrr}
\hline \hline 
 & $a_{i}$  & $R_{M}-R_{F}$  & $SMB$  & $HML$  & $RMW$  & $CMA$ \tabularnewline
\hline 
Coef.  & 0.419  & 0.981  & -0.019  & 0.201  & 0.021  & -0.06 \tabularnewline
$t$-stat  & 15.30  & 146.3  & -2.075  & 15.51  & 1.597  & -3.327 \tabularnewline
$p$-values  & 0  & 0  & 0.038  & 0  & 0.111  & 0.009 \tabularnewline
\hline \hline 
\end{tabular}\medskip{}
\begin{tabular}{lrr}
\hline 
Adj. $R^{2}$  & F-statistic  & $p$-value \tabularnewline
\hline 
0.988  & 5361.6  & 0\tabularnewline
\hline \hline 
 \tabularnewline
\end{tabular}\centering{}\caption{Fama-French five factor model. Entries report the coefficient estimates,
their respective $t$-statistics, as well as Adjusted R2, F-statistic, and
$p$-values. The dataset spans 07/1963-12/2016 for MSD optimal portfolios computed with one-month rolling windows. }
\label{tbl:statistics2} 
\end{table}

Table~\ref{tbl:statistics2} reports the coefficient estimates of
the five-factor model, as well as their respective $t$-statistics and $p$-values.
The results indicate that, apart from the profitability ($RMW$), all
the other four factors explain part of the performance of the optimal MSD
portfolios. The intercept clearly not being zero indicates that 
other factors possibly drive the performance of the MSD portfolios as well.

In both factor models, we observe that the beta market is slightly smaller than one (defensive) for the MSD portfolios as expected.
The negative sign for the SMB factor loading and positive sign for the HML factor loading correspond to an additional defensive tilt. 
Defensive strategies overweight large value stocks and underweight small growth stocks (see Novy-Marx (2016)).

\section{Conclusions}

We have derived properties of the cdf of a random variable defined
by recursive optimizations applied on a continuous stochastic process
w.r.t.\ possibly dependent parameter spaces. Those properties extend
previous results and can be useful for the derivation of the limit
theory of tests for stochastic spanning w.r.t.\ stochastic dominance
relations.

As a theoretical application, we have defined the concept of spanning,
constructed an analogous  test based on subsampling, and derived the
first-order limit theory and a numerical implementation for the case
of the MSD relation.

We have used the non-parametric test in an empirical application, inspired by Arvanitis
and Topaloglou (2017),  who show that the market portfolio is not MSD
efficient. The spanning test enables us to explore whether MSD equity
managers could outperform the market portfolio. First, we test
whether the market portfolio is MSD efficient, and then whether the
two-fund separation theorem holds for investors with MSD preferences.
We use as base assets either the FF size and book to market portfolios,
a set of momentum portfolios, a set of industry portfolios, or a set
of beta or size decile portfolios. Empirical results indicate that
the market portfolio is not MSD efficient, and the two-fund separation
theorem does not hold for MSD investors. Thus, the combination of
the market and the riskless asset do not span the portfolios created
according to the MSD criterion. Hence, there exist MSD investors that
could benefit from investment opportunities that involve assets beyond
portfolios constructed solely by the market portfolio and the safe
asset. We verify this by showing that equity managers with MSD preferences
could generate portfolios that yield 30 times higher cumulative return
than the market over the last 50 years. The return distribution of the MSD optimal 
portfolio is less negatively skewed, less leptokurtic, and thiner left-tailed,  
when compared to the market portfolio. Finally, using the four-factor  model of Carhart (1997) and  the  five-factor model of 
Fama and French (2015), we investigate which factors explain these returns. We find that a defensive tilt explains part 
of the performance of the optimal MSD portfolios, while momentum and profitability do not.

The derivations and methodology used above can also be explored for
other forms of stochastic dominance relations, such as the first- or
the third-order, or Prospect stochastic dominance. We leave such issues
for future research.

\section{Appendix}

\subsection{Proofs of Main Results}
\begin{proof}[Proof of Theorem \ref{aac}]
First, we know that $\xi\in\mathbb{D}^{1,2}$, from similar arguments
to the ones in the proof of Proposition 2.1.10 of Nualart (2006).
Precisely, consider a countable dense subset of $\Lambda$ , say $\Lambda_{\infty}$
as well as $\xi_{n}:=\text{oper}X_{\lambda}$, where $\mbox{opt}_{i}$
is considered w.r.t.\ $\Lambda_{i,n}^{\star}\left(\lambda_{i-1}\right)=\left\{ \textnormal{the first \ensuremath{n} elements of }\Lambda_{i}^{\star}\left(\lambda_{i-1}\right)\cap\textnormal{pr}_{i}\Lambda_{\infty}\right\} $
and $\lambda_{i-1}\in\Lambda_{i-1,n}^{\star}$ when $i>1$. The function
$\text{oper}:C\left(\Lambda,\mathbb{\mathbb{R}}\right)\rightarrow\mathbb{R}$
is Lipschitz, and from Proposition 1.2.4 of Nualart (2006), we get
$\eta_{n}\in\mathbb{D}^{1,2}$. Furthermore, from Assumption \ref{as1}.1,
$\xi_{n}\rightarrow\xi$ in $L^{2}\left(\Omega\right)$, and therefore
the preliminary result follows if $\left(D\xi_{n}\right)_{n\in\mathbb{N}}$
is $L^{2}\left(\Omega\right)$ bounded. Define 
\[
A_{n}=\left\{ \omega\in\Omega:\xi_{n}=X_{\lambda_{n}},\xi_{n}\neq X_{\lambda_{k}},\forall k<n\right\} .
\]
Using the local property of $D$, we have that $D\xi_{n}=\sum_{n\in\mathbb{N}}1_{A_{n}}DX_{\lambda_{n}},$
and thereby $\mathbb{E}\left[\mathbb{\|}D\xi_{n}\|_{H}^{2}\right]<+\infty$
from Assumption \ref{as1}.2. Then Assumption \ref{as1}.3 as well
as Proposition 2.1.7 of Nualart (2006) imply the first part of the
theorem. For the following, assume first that $T$ is empty. Then
the result will follow from a series of arguments almost identical
to the ones in the proof of Proposition 2.1.11 of Nualart (2006).
Specifically, consider the set 
\[
G=\left\{ \omega\in\Omega:\mbox{ there exists }\lambda\in\Lambda\mbox{ such that \ensuremath{DX_{\lambda}\neq D\xi\mbox{ and }}\ensuremath{X_{\lambda}=\xi}}\right\} ,
\]
and using $\Lambda_{\infty}$ above $H_{\infty}$ a countable dense
subset of the unit ball of $H$, and $B_{r}\left(\lambda\right)$
the ball in $\Lambda$ with center $\lambda$ and radius $r>0$ we
have that $G\subseteq\cup_{\lambda\in\Lambda_{\infty},r\in\mathbb{Q}_{++},k\in\mathbb{N}_{0},h\in H_{\infty}}G_{\lambda,r,k,h}$
i.e., a countable union, where 
\[
G_{\lambda,r,k,h}:=\left\{ \omega\in\Omega:\left\langle DX_{\lambda'}-D\eta,h\right\rangle >\frac{1}{k}\mbox{ for all \ensuremath{\lambda'\in B_{r}\left(\lambda\right)}}\right\} \cap\left\{ \text{oper}X_{\lambda'}=\xi\right\} .
\]
For some $\lambda,r,k,h$ as above, define $\xi'=\mbox{oper}X_{\lambda'}$,
where now $\mbox{opt}_{i}$ is considered w.r.t.\ $\Lambda_{i}^{\star}\left(\lambda_{i-1}\right)\cap\textnormal{pr}_{1}B_{r}\left(\lambda\right)$
choose a countable dense subset of $B_{r}\left(\lambda\right)$, say
$B_{r}^{\infty}\left(\lambda\right)$ and using 
\[
\Lambda_{i,n}^{\infty}\left(\lambda_{i-1}\right)=\left\{ \textnormal{the first \ensuremath{n} elements of }\Lambda_{i}^{\star}\left(\lambda_{i-1}\right)\cap\textnormal{pr}_{i}B_{r}^{\infty}\left(\lambda\right)\right\} ,
\]
define $\xi_{n}^{'}=\text{oper}X_{\lambda}$ analogously. We have
that as $n\rightarrow\infty$ $\xi_{n}^{'}\rightarrow\xi'$ in $L^{2}\left(\Omega\right)$
norm due to Assumption \ref{as1}.1. From Lemma 1.2.3 of Nualart (2006)
and Assumption \ref{as1}.2 we also have that $D\xi_{n}^{'}\rightarrow D\xi'$
in the weak topology of $L^{2}\left(\Omega,H\right)$. Using again
the local property argument as above, we have that for any $\omega\in G_{\lambda,r,k,h}$,
$D\xi_{n}^{'}=DX_{\lambda'}$, for some $\lambda'\in B_{r}^{\infty}\left(\lambda\right)$.
But, for such $\omega$, we have that $\left\langle D\xi_{n}^{'}-D\xi^{'},h\right\rangle >\frac{1}{k}$
for all $n$. This directly implies that $\mathbb{P}\left(G_{\lambda,r,k,h}\right)=0$
which, due to countability, implies that $\mathbb{P}\left(G\right)=0$.
Then the result follows from Theorem 2.1.3 of Nualart (2006). Now,
suppose that $\tau\in\mathcal{T}$, and consider 
\[
\mathbb{P}\left(\xi=\tau\right)=\mathbb{P}\left(\left\{ \xi=\tau\right\} \cap\text{\ensuremath{\Omega}}_{\tau}\right)+\mathbb{P}\left(\left\{ \xi=\tau\right\} \cap\text{\ensuremath{\Omega}}_{\tau}^{c}\right)
\]
If, for some $\tau\in\mathcal{T}$, $\mathbb{P}\left(\Omega_{\tau}^{c}\right)>0$,
we get 
\[
\mathbb{P}\left(\left\{ \xi=\tau\right\} \cap\Omega_{\tau}^{c}\right)=\mathbb{P}\left(\xi=\tau/\Omega_{\tau}^{c}\right)\mathbb{P}\left(\Omega_{\tau}^{c}\right),
\]
and we can consider the process $X^{\star}:=X\left|_{\Omega-\cup_{\tau\in\mathcal{T}}\Omega_{\tau}^{c}}\right.$
that obviously satisfies Assumption \ref{as1} with $\mathcal{T}^{\star}=\emptyset$
along with the obvious change of notation. Hence, $\xi^{\star}$ has
an absolutely continuous law something that implies that $\mathbb{P}\left(\xi=\tau/\Omega_{\tau}^{c}\right)=\mathbb{P}\left(\xi^{\star}=\tau\right)=0$.
If $\mathbb{P}\left(\Omega_{\tau}^{c}\right)=0$ trivially $\mathbb{P}\left(\left\{ \xi=\tau\right\} \cap\Omega_{\tau}^{c}\right)=0$
establishing that $\mathbb{P}\left(\xi=\tau\right)=\mathbb{P}\left(\left\{ \xi=\tau\right\} \cap\Omega_{\tau}^{c}\right)$
in any case. Now, suppose that $\tau_{1},\tau_{2}$ are successive
elements of $\mathcal{T}$ and consider $\text{\ensuremath{\Omega}}_{\tau_{1},\tau_{2}}=\left\{ \omega\in\text{\ensuremath{\Omega}}:\xi\in\left(\tau_{1},\tau_{2}\right)\right\} $.
The previous imply that $\mathbb{P}\left(\text{\ensuremath{\Omega}}_{\tau_{1},\tau_{2}}\right)>0$,
hence the process $X_{\star}:=X\left|_{\text{\ensuremath{\Omega}}_{\tau_{1},\tau_{2}}}\right.$
satisfies Assumption \ref{as1} with $\mathcal{T}_{\star}=\emptyset$,
and thereby $\xi_{\star}$ has an absolutely continuous law. The other
cases follow analogously when the intersections apperaring in the
theorem are non empty. When empty the results are trivial. 
\end{proof}
\begin{proof}[Proof of Corollary \ref{cor:used}]
It follows simply by Theorem \ref{aac} since the relation between
$\xi$ and $\eta$ implies that $\text{supp}\left(\xi\right)$ is
the closure of $\left(c,+\infty\right)$ and also that $\mathbb{P}\left(\xi=c\right)\leq\mathbb{P}\left(\eta=c\right)$. 
\end{proof}
\begin{proof}[Proof of Proposition \ref{M-equiv}]
($\Leftarrow$) If $\mathbb{K}\succcurlyeq_{M}\mathbb{L}$, for any
$\lambda$, there exists some $\kappa$ such that\linebreak{}
 $\sup_{z\leq0}\Delta_{1}\left(z,\lambda,\kappa,F\right)\leq0$ and
$\sup_{z>0}\Delta_{2}\left(z,\lambda,\kappa,F\right)\leq0$. This
implies that 
\begin{equation}
\max_{i=1,2}\sup_{z\in A_{i}}\inf_{\kappa\in\mathbb{K}}\Delta_{i}\left(z,\lambda,\kappa,F\right)\leq0.\label{eq:3}
\end{equation}
Since $\mathbb{K}$ is closed, hence compact, and $F$ has a finite
first moment, the Dominated Convergence Theorem implies that $\mathcal{J}\left(-\infty,0,\kappa,F\right)$
is continuous w.r.t.\ $\kappa$. This along with the compactness of
$\mathbb{K}$ imply that $\arg\min_{\kappa\in\mathbb{K}}\mathcal{J}\left(-\infty,0,\kappa,F\right)$
is non empty. Let $\kappa^{\star}$ be an element of the latter. Then,
the first equality follows from 
\[
\xi\left(F\right)\geq\inf_{\kappa\in\mathbb{K}}\mathcal{J}\left(-\infty,0,\kappa,F\right)-\mathcal{J}\left(-\infty,0,\kappa^{\star},F\right)=0.
\]
If $\mathbb{K}\nsucceq_{M}\mathbb{L}$ for some $\lambda^{\star}\in\mathbb{L},$
and any $\kappa\in\mathbb{K}$, there exists some $i\left(\lambda^{\star},\kappa\right),z^{\star}\left(\lambda^{\star},\kappa\right)\in A_{i}$
such that $\Delta_{i}\left(z,\lambda,\kappa,F\right)>0$. Then the
continuity of $\mathcal{J}\left(-\infty,z,\kappa,F\right)$ and $\mathcal{J}\left(z,+\infty,\kappa,F\right)$
w.r.t.\ $\kappa$, and the compactness of $\mathbb{K}$, imply that,
for any $\lambda\notin\mathbb{K},z\in A_{1},\exists\kappa_{\lambda,z}\in\mathbb{K}$
such that 
\[
\inf_{\kappa\in\mathbb{K}}\Delta_{1}\left(z,\lambda,\kappa,F\right)=\Delta_{1}\left(z,\lambda,\kappa_{\lambda,z},F\right),
\]
and thereby 
\[
\xi\left(F\right)\geq\Delta_{1\left(\lambda^{\star},\kappa_{\lambda^{\star},z^{\star}}\right)}\left(z^{\star},\lambda^{\star},\kappa_{\lambda^{\star},z^{\star}},F\right)>0.
\]
($\Rightarrow$) Suppose now that $\xi\left(F\right)=0$ and consider
an arbitrary $\lambda$. This implies that (\ref{eq:3}) holds and
thereby there exists some element of $\mathbb{K}$ for which $\Delta_{i}\left(z,\lambda,\kappa,F\right)\leq0$,
for every $z\in A_{i},\:i=1,2$. If $\xi\left(F\right)>0$, for some
$\lambda^{\star}\in\mathbb{L}$, and some $i=1,2$, $\inf_{\kappa\in\mathbb{K}}\sup_{z\in A_{i}}\Delta_{i}\left(z,\lambda^{\star},\kappa,F\right)>0.$
It implies that for any $\kappa\in\mathbb{K}$, $\sup_{z\in A_{i}}\Delta_{i}\left(z,\lambda^{\star},\kappa,F\right)>0$
and the result follows. 
\end{proof}
\begin{proof}[Proof of Proposition \ref{EAD}]
The results in the auxiliary Lemma \ref{lem:emp_pr} imply that\linebreak{}
 $\left(\begin{array}{c}
\Delta_{1}\left(z_{1},\lambda,\kappa,\sqrt{T}\left(F_{T}-F\right)\right)\\
\Delta_{2}\left(z_{2},\lambda,\kappa,\sqrt{T}\left(F_{T}-F\right)\right)
\end{array}\right)$ weakly converges to $\left(\begin{array}{c}
\Delta_{1}\left(z_{1},\lambda,\kappa,\mathcal{G}_{F}\right)\\
\Delta_{2}\left(z_{2},\lambda,\kappa,\mathcal{G}_{F}\right)
\end{array}\right)$ w.r.t.\ to the product topology of continuous (w.r.t.\ $\left(z_{1},z_{2},\lambda\right)$)
epi-convergence (w.r.t.\ $\kappa$) on the product of the relevant
spaces of lsc real valued functions (see e.g. Knight (1999) for the
dual notion of epi-convergence). This product space is metrizable
as complete and separable (see again Knight (1999)). Hence, Skorokhod
representations are applicable (as above, see for example Theorem
1 in Cortissoz (2007)) and thereby for any $\left(z_{1},z_{2},\lambda\right)$
and any sequence $\left(z_{1,T},z_{2,T},\lambda_{T}\right)\rightarrow\left(z_{1},z_{2},\lambda\right)$,
there exists an enhanced probability space and processes\linebreak{}
 $\left(\begin{array}{c}
\Delta_{1,T}\left(\kappa\right)\\
\Delta_{2,T}\left(\kappa\right)
\end{array}\right)\overset{d}{=}\left(\begin{array}{c}
\Delta_{1}\left(z_{1,T},\lambda_{T},\kappa,\sqrt{T}\left(F_{T}-F\right)\right)\\
\Delta_{2}\left(z_{2,T},\lambda_{T},\kappa,\sqrt{T}\left(F_{T}-F\right)\right)
\end{array}\right)$, $\left(\begin{array}{c}
\Delta_{1}^{\star}\left(\kappa\right)\\
\Delta_{2}^{\star}\left(\kappa\right)
\end{array}\right)\overset{d}{=}\left(\begin{array}{c}
\Delta_{1}\left(z_{1},\lambda,\kappa,\mathcal{G}_{F}\right)\\
\Delta_{2}\left(z_{2},\lambda,\kappa,\mathcal{G}_{F}\right)
\end{array}\right)$, defined on it such that $\left(\begin{array}{c}
\Delta_{1,T}\\
\Delta_{2,T}
\end{array}\right)\rightarrow\left(\begin{array}{c}
\Delta_{1}^{\star}\\
\Delta_{2}^{\star}
\end{array}\right)$ almost surely, w.r.t.\ to the product topology of epi-convergence,
where $\overset{d}{=}$ denotes equality in distribution. Notice that,
\[
\left(\begin{array}{c}
\Delta_{1}\left(z_{1,T},\lambda_{T},\kappa,\sqrt{T}F_{T}\right)\\
\Delta_{2}\left(z_{2,T},\lambda_{T},\kappa,\sqrt{T}F_{T}\right)
\end{array}\right)\overset{d}{=}\left(\begin{array}{c}
K_{1,T}\left(\kappa\right)\\
K_{2,T}\left(\kappa\right)
\end{array}\right):=\left(\begin{array}{c}
\Delta_{1,T}\left(\kappa\right)\\
\Delta_{2,T}\left(\kappa\right)
\end{array}\right)+\sqrt{T}\left(\begin{array}{c}
\Delta_{1}\left(z_{1,T},\lambda_{T},\kappa,F\right)\\
\Delta_{2}\left(z_{2,T},\lambda_{T},\kappa,F\right)
\end{array}\right).
\]
Under $\mathbf{H_{0}}$, due to the previous, we have that for any
$i=1,2$, $\kappa,\kappa_{T}\in\mathbb{K}$, and $\kappa_{T}\rightarrow\kappa$,
$\lim_{T\rightarrow\infty}K_{i,T}\left(\kappa_{T}\right)$ is almost
surely equal to 
\[
\begin{cases}
\Delta_{i}^{\star}\left(\kappa\right), & \left(z_{i},\lambda,\kappa,\right)\in\text{Int}\Gamma_{i}\\
+\infty, & \left(z_{i},\lambda,\kappa,\right)\notin\Gamma_{i},\Delta_{i}\left(z_{i},\lambda,\kappa,F\right)>0\\
-\infty, & \left(z_{i},\lambda,\kappa,F\right)\notin\Gamma_{i},\Delta_{i}\left(z_{i},\lambda,\kappa,F\right)<0
\end{cases}.
\]
Furthermore, for any compact $\mathcal{K}_{i}$ that contains $\kappa\in\mathbb{K}$
such that $\left(z_{i,T},\lambda_{T},\kappa,\right)$ eventually belongs
to the boundary of $\Gamma_{i}$ we have that almost surely, 
\[
\lim\inf_{T\rightarrow\infty}\inf_{\kappa\in\mathcal{K}_{i}}K_{i,T}\left(\kappa\right)\geq\inf_{\kappa\in\mathcal{K}_{i}}\Delta_{i}^{\star}\left(\kappa\right)+\lim\inf_{T\rightarrow\infty}\inf_{\kappa\in\mathcal{K}_{i}}\sqrt{T}\Delta_{i}\left(z_{i,T},\lambda_{T},\kappa,F\right)\geq\inf_{\kappa\in\mathcal{K}_{i}}\Delta_{i}^{\star}\left(\kappa\right).
\]
Hence due to Proposition 3.2.(ii)-(iii) (ch. 5, p. 337) of Molchanov
(2006), $\left(\begin{array}{c}
K_{1,T}\left(\kappa\right)\\
K_{2,T}\left(\kappa\right)
\end{array}\right)$ almost surely converges w.r.t.\ to the product topology of epi-convergence
over $\mathbb{K}$, and continuously over $A_{i}\times\mathbb{L}$
to $K\left(\kappa\right)=\left(\begin{array}{c}
K_{1}\left(\kappa\right)\\
K_{2}\left(\kappa\right)
\end{array}\right)$, with $K_{i}\left(\kappa\right)=\begin{cases}
\Delta_{i}^{\star}\left(\kappa\right), & \left(z_{i},\lambda,\kappa\right)\in\Gamma_{i}\\
-\infty, & \left(z_{i},\lambda,\kappa\right)\notin\Gamma_{i}
\end{cases}$. Since $\mathbb{K}$ is compact, Theorem 3.4 (ch. 5, p. 338) of Molchanov
(2006) implies that almost surely, 
\[
\inf_{\kappa\in\mathbb{K}}K_{i,T}\left(\kappa\right)\rightarrow\begin{cases}
\inf_{\kappa:\left(z_{i},\lambda,\kappa\right)\in\Gamma_{i}}\Delta_{i}^{\star}\left(\kappa\right), & \exists\kappa:\left(z_{i},\lambda,\kappa\right)\in\Gamma_{i}\\
-\infty, & \nexists\kappa:\left(z_{i},\lambda,\kappa\right)\in\Gamma_{i}
\end{cases},
\]
jointly over $i=1,2$. When $\Gamma_{i}$ is not empty, by Theorem
7.11 of Rockafellar and Wetts (2009), and using the same notations
(to streamline the proof) for the random elements defined in the relevant
enhanced probability space, the sequence $\left(\inf_{\kappa}\Delta_{i}\left(z_{i},\lambda,\kappa,\sqrt{T}F_{T}\right)\right)_{T}$
is also equi-upper semi-continuous. Due to the proof of Lemma \ref{aac-1}
below and the form of $\mathbf{H_{0}}$, we have that the above sequence
is almost surely bounded, and thereby Theorem 3.4 (ch. 5, p. 338)
of Molchanov (2006) implies that almost surely, 
\[
\sup_{z_{i},\lambda}\inf_{\kappa}\Delta_{i}\left(z_{i},\lambda,\kappa,\sqrt{T}F_{T}\right)\rightarrow\sup_{z_{i},\lambda}\inf_{\kappa\in\Gamma_{i}}\Delta_{i}\left(z_{i},\lambda,\kappa,\mathcal{G}_{F}\right).
\]
When $\Gamma_{i}$ is empty the limit is trivially $-\infty$. Reverting
from the Skorokhod representations to the original sequences and employing
the continuous mapping theorem we get the result.
\end{proof}
\begin{proof}[Proof of Theorem \ref{main2}]
The first result follows by a direct application of Theorem 3.5.1.i
of Politis et al. (1999) from the results of Proposition \ref{EAD},
and the limiting quantile function being continuous for all $\alpha\in\left(0,1\right).$
The second result follows similarly, by also considering the results
of the auxiliary Lemma \ref{aac-1}. For the second result, if $\mathbf{H_{a}}$
is true, for some $\lambda^{\star}\in\mathbb{L}-\mathbb{K},$ and
any $\kappa\in\mathbb{K}$, there exists some $i,z^{\star}\in A_{i}$
such that $\Delta_{i}\left(z,\lambda,\kappa,F\right)>0$. Then, we
have that 
\[
\xi_{T}\geq\inf_{\kappa\in\mathbb{K}}\Delta_{i}\left(z^{\star},\lambda^{\star},\kappa,\sqrt{T}\left(F_{T}-F\right)\right)+\sqrt{T}\inf_{\kappa\in\mathbb{K}}\Delta_{i}\left(z^{\star},\lambda^{\star},\kappa,F\right),
\]
and from arguments analogous to the ones used in the proof of Proposition
\ref{EAD}, we have that the first term in the rhs of the last display
is asymptotically tight, while from the arguments used in the proof
of Proposition \ref{M-equiv}, the second term in the rhs of the last
display diverges to $+\infty$. The result follows from the properties
of $b_{T}$. 
\end{proof}
\begin{proof}[Proof of Theorem \ref{thm:mod}]
The result follows exactly as in the proofs of Proposition \ref{EAD}
and Theorem \ref{main2} by noting first that the relevant hypo-epi
convergence concepts in the aforementioned proposition also hold for
the relevant function restricted to $A_{i}^{\left(T\right)}$ from
the results there and the definition of the Painleve-Kuratowski set
convergence, and that $\sup_{\lambda}\inf_{\kappa}\Delta_{i}\left(z,\lambda,\kappa,\mathcal{G}_{F}\right)$
has the same $\sup$ w.r.t.\ $z$ with its restriction to any dense
subset of $A_{i}$ due to the compactness of $\mathbb{L}$ and $\mathbb{K}$
and Theorem 3.4 (ch. 5, p. 338) of Molchanov (1999). 
\end{proof}

\subsection*{Auxiliary Lemmata}

The following are auxiliary lemmata used for the derivation of the
proofs of Proposition \ref{EAD} and Theorem \ref{main2}. 
\begin{lem}
\label{lem:emp_pr}Under Assumption \ref{MSDmix} 
\[
\left(\begin{array}{c}
\Delta_{1}\left(z_{1},\lambda,\kappa,\sqrt{T}\left(F_{T}-F\right)\right)\\
\Delta_{2}\left(z_{2},\lambda,\kappa,\sqrt{T}\left(F_{T}-F\right)\right)
\end{array}\right)\rightsquigarrow\left(\begin{array}{c}
\Delta_{1}\left(z_{1},\lambda,\kappa,\mathcal{G}_{F}\right)\\
\Delta_{2}\left(z_{2},\lambda,\kappa,\mathcal{G}_{F}\right)
\end{array}\right)
\]
as random elements with values on the space of $\mathbb{R}^{2}$-valued
bounded functions on\linebreak{}
 $\mathbb{L\times K\times R}_{-}\times\mathbb{R}_{++}$ equiped with
the sup-norm. The limiting process has continuous sample paths. 
\end{lem}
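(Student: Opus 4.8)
The plan is to recognize the left-hand side as a scaled, centered empirical process indexed by $(z,\lambda,\kappa)\in A_{i}\times\mathbb{L}\times\mathbb{K}$. Since $G(\cdot,\lambda,\cdot)$ is linear in its distributional argument and $\mathcal{J}$ is a linear integral of $G$, each $\Delta_{i}$ is a continuous linear functional of the underlying cdf, and applying it to the signed measure $\sqrt{T}(F_{T}-F)$ is homogeneous of degree one. Using the representation $\Delta_{1}(z,\lambda,\kappa,F)=\mathbb{E}[K(z,\lambda,\kappa,Y)]$, with $K$ as defined after (\ref{eq:mp-1}), and the analogous representation for $\Delta_{2}$ in terms of $q_{2}$ (the integration-by-parts form that renders the differences finite even though each $\mathcal{J}$-term separately diverges), I obtain
\[
\Delta_{i}\left(z,\lambda,\kappa,\sqrt{T}\left(F_{T}-F\right)\right)=\frac{1}{\sqrt{T}}\sum_{t=1}^{T}\left(q_{i}\left(z,\lambda,\kappa,Y_{t}\right)-\mathbb{E}\left[q_{i}\left(z,\lambda,\kappa,Y_{0}\right)\right]\right).
\]
Thus the statement reduces to a functional central limit theorem for the $\alpha$-mixing sequence $(Y_{t})$ indexed by the function class $\mathcal{F}_{i}=\{y\mapsto q_{i}(z,\lambda,\kappa,y)\}$.

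First I would establish finite-dimensional convergence. For fixed $(z,\lambda,\kappa)$ the summands have an envelope controlled by $\|Y\|$, hence admit $2+\delta$ moments under Assumption \ref{MSDmix}; combined with the mixing-rate condition $a_{T}=O(T^{-a})$, $a>1+\frac{2}{\eta}$, a standard mixing CLT yields joint asymptotic normality. The long-run covariance $\sum_{t\in\mathbb{Z}}\text{Cov}(\mathbb{I}_{Y_{0}\leq x},\mathbb{I}_{Y_{t}\leq y})$ is propagated through the linear maps defining the $\Delta_{i}$, and this reproduces the kernel of $\mathcal{G}_{F}$ announced in Proposition \ref{EAD}.

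Second, and this is the crux, I would establish asymptotic tightness (stochastic equicontinuity) over $A_{i}\times\mathbb{L}\times\mathbb{K}$. The maps $y\mapsto(z-\lambda' y)_{+}$ are Lipschitz in $(z,\lambda)$ with envelope growing like $\|y\|$, and the half-space class $\{y:\lambda' y\leq z\}$ is VC, so the uniform entropy is controlled over compact $z$-ranges. Because $A_{i}$ is unbounded and, unlike Arvanitis et al.\ (2018), the support of $F$ need not be compact, the decisive step is to control the process uniformly in the tails, i.e.\ to show that $\sup_{|z|>M}|\Delta_{i}(z,\lambda,\kappa,\sqrt{T}(F_{T}-F))|$ is asymptotically negligible, uniformly in $(\lambda,\kappa)$, as $M\to\infty$. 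This tail bound follows from the finiteness of $\sup_{\lambda\in\mathbb{L}}\int_{-\infty}^{+\infty}\sqrt{G(u,\lambda,F)(1-G(u,\lambda,F))}\,du$ implied by Assumption \ref{MSDmix}, which is exactly the integrated standard deviation of the projected one-dimensional empirical processes, together with the finite first moment controlling the centering. I expect this tail/unbounded-support control to be the main obstacle, since it is precisely the ingredient that replaces the compact-support assumption of the SSD spanning literature.

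Finally I would combine finite-dimensional convergence with tightness to conclude weak convergence in the space of bounded $\mathbb{R}^{2}$-valued functions on $\mathbb{L}\times\mathbb{K}\times\mathbb{R}_{-}\times\mathbb{R}_{++}$ under the sup-norm, invoking a mixing functional CLT of the Rio (2013, Theorem 7.3) type; the same Lipschitz/entropy structure delivers a version of the limit with $\mathbb{P}$-almost surely uniformly continuous sample paths. The joint $\mathbb{R}^{2}$-valued statement is then immediate, because both coordinates are continuous images of the one common empirical process $\sqrt{T}(F_{T}-F)$, so the continuous mapping theorem yields the joint limit with the stated covariance structure.
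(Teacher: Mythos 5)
Your skeleton---rewrite each $\Delta_{i}\left(z,\lambda,\kappa,\sqrt{T}\left(F_{T}-F\right)\right)$ as a centered, $\sqrt{T}$-scaled empirical process over the classes $\mathcal{F}_{i}=\left\{ y\mapsto q_{i}\left(z,\lambda,\kappa,y\right)\right\} $, then prove finite-dimensional convergence plus stochastic equicontinuity---is the canonical route, and both your algebraic reduction and your fidi step are correct. It is, however, not the paper's argument, and the step you yourself identify as the crux is where the proposal has a genuine gap: the tools you invoke for tightness do not apply under Assumption \ref{MSDmix}. VC/uniform-entropy equicontinuity arguments are an i.i.d.\ toolkit; under $\alpha$-mixing with only a polynomial rate, the available function-indexed uniform CLTs are of bracketing type (Andrews--Pollard-type results require uniformly bounded classes, and the Doukhan--Massart--Rio theory requires $\beta$-mixing), whereas your classes have the unbounded envelope $C\left\Vert y\right\Vert $ with only $2+\delta$ moments. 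Moreover, Theorem 7.3 of Rio (2013)---the only functional CLT actually available under Assumption \ref{MSDmix}---concerns the orthant-indexed (multivariate cdf) process $\sqrt{T}\left(F_{T}-F\right)$, not general Lipschitz-parametrized classes, so ``a mixing functional CLT of the Rio (2013, Theorem 7.3) type'' for $\mathcal{F}_{i}$ is an appeal to a theorem that does not exist. Your tail-control claim is likewise asserted rather than proved: finiteness of $\sup_{\lambda\in\mathbb{L}}\int\sqrt{G\left(1-G\right)}\,du$ is the i.i.d.-calibrated condition, under mixing the pointwise variance bounds degrade through Rio's covariance inequalities (this is exactly where the interplay of $a$, $\eta$ and $\delta$ in Assumption \ref{MSDmix} matters), and a pointwise variance bound is in any case not a maximal inequality over $\left(z,\lambda,\kappa\right)$.

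The paper's proof is built precisely to avoid this gap, and it is genuinely different from yours. It applies Rio's Theorem 7.3 only where it applies---to $\sqrt{T}\left(F_{T}-F\right)$ itself---and never works with the $q_{i}$-indexed process: exploiting that each $\Delta_{i}$ is linear in its distributional argument, it forms the Cram\'er--Wold combination $\Delta\left(\theta,\cdot\right)=\rho_{1}\Delta_{1}+\rho_{2}\Delta_{2}$, passes to Skorokhod representations in the separable metrizable topology of epi/hypo-convergence (Knight (1999), Cortissoz (2007)), shows that the linear map $f\mapsto\Delta\left(\cdot,f\right)$ restricted to the closed span of those representatives is continuous into processes metrized by convergence in distribution---this is where the uniform second-moment bounds, hence $\sup_{\lambda}\int\sqrt{G\left(1-G\right)}\,du<+\infty$ and Rio's Corollary 4.1 together with Theorem 6.5.2 of Narici and Beckenstein (2010), do the real work---and concludes by the continuous mapping theorem and Cram\'er--Wold. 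Note that your closing remark that both coordinates are ``continuous images'' of $\sqrt{T}\left(F_{T}-F\right)$ hides the same difficulty: the map is not sup-norm continuous, since it integrates the cdf difference over unbounded domains, which is exactly why the paper restricts to the span and argues continuity through second moments rather than via a naive CMT. To complete your route you would need to prove a bespoke maximal inequality/equicontinuity result for unbounded-envelope classes under polynomial $\alpha$-mixing, or fall back on the paper's push-forward construction.
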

\begin{proof}
Let $\theta:=\left(\lambda,\kappa,z_{1},z_{2}\right)\in\Theta:=\mathbb{L\times K\times R}_{-}\times\mathbb{R}_{++}$,
$\rho$ any non zero element of $\mathbb{R}^{2}$, and consider $\Delta\left(\theta,\cdotp\right):=\rho_{1}\Delta_{1}\left(z_{1},\lambda,\kappa,\cdotp\right)+\rho_{2}\Delta_{2}\left(z_{1},\lambda,\kappa,\cdotp\right)$.
Notice that Theorem 7.3 of Rio (2013), due to Assumption \ref{MSDmix},
implies that $\sqrt{T}\left(F_{T}-F\right)\rightsquigarrow\mathcal{G}_{F}$.
This implies that $\sqrt{T}\left(F_{T}-F\right)$ also weakly hypo-converges
to $\mathcal{G}_{F}$ (see for example Knight (1999)). Both are upper
semi-continuous (usc) $\mathbb{P}$ a.s. and the space of usc functions
with the topology of epiconvergence can be metrized as complete and
separable (see again Knight (1999)). Due to separability and the Skorokhod
Representation Theorem (see for example Theorem 1 in Cortissoz (2007))
there exists a suitable probability space and random elements with
values in the aforementioned function space such that $f_{T}^{*}\overset{d}{=}\sqrt{T}\left(F_{T}-F\right)$,
$f^{*}\overset{d}{=}\mathcal{G}_{F}$, and $f_{T}^{*}\rightarrow f^{*}$
a.s.. Let $J\equiv\overline{\text{span}}\left\{ f_{T}^{*},f^{*},T=1,2,\cdots\right\} $
equipped with the metrizable topology of weak convergence.\footnote{Here $\overline{\text{span}}$ denotes the closure w.r.t\. the particular
topology of the linear span.} Consider $\Delta\left(\cdotp,\cdotp\right)$ restricted to $J$ with
values in the linear space of stochastic processes, equipped with
the topology of convergence in distribution, with values in the space
of bounded real functions defined on $\Theta$ equipped with the sup-norm.
From Assumption \ref{MSDmix}, Remark \ref{thm:wd}, Corollary 4.1,
and Theorem 7.3 of Rio (2013), we also have that 
\[
\sup_{\theta\in\Theta}\sup_{T}\mathbb{E}\left[\left(\Delta\left(\theta,\sqrt{T}\left(F_{T}-F\right)\right)\right)^{2}\right]+\sup_{\theta\in\Theta}\mathbb{E}\left[\left(\Delta\left(\theta,\mathcal{G}_{F}\right)\right)^{2}\right]<+\infty.
\]
The latter inequality along with Theorem 6.5.2 in Narici and Beckenstein
(2010), the metrization of convergence in distribution by the bounded
Lipschitz metric (see for example p. 73, van der Vaart and Wellner
(1996)) which is bounded from above by $\sup_{\theta}\mathbb{E}\left[\left(x-y\right)^{2}\right]$,
for $x,y$ members of the aforementioned space of processes, imply
that $\Delta\left(\cdotp,\cdotp\right)$ as restricted above is continuous.
Hence the CMT implies that $\Delta\left(\theta,f_{T}^{*}\right)\rightsquigarrow\Delta\left(\theta,f^{*}\right)$
which means that $\Delta\left(\theta,\sqrt{T}\left(F_{T}-F\right)\right)\rightsquigarrow\Delta\left(\theta,\mathcal{G}_{F}\right)$.
This and the Cramer-Wold Theorem imply the needed result. The final
assertion follows from $\sup_{\theta\in\Theta}\mathbb{E}\left[\left(\Delta\left(\theta,\mathcal{G}_{F}\right)\right)^{2}\right]<+\infty$,
the discussion in Example 1.5.10 of van der Vaart and Wellner (1996),
and the continuity of $\mathbb{E}\left[\left(\Delta\left(\theta,\mathcal{G}_{F}\right)\right)^{2}\right]$
w.r.t.\ $\theta$. 
\end{proof}
\begin{lem}
\label{aac-1}If $\xi_{\infty}$ is non-constant, and under Assumptions
\ref{MSDmix} and \ref{simpl_span}, the distribution of $\xi_{\infty}$
has support $\left[0,+\infty\right)$, its cdf is absolutely continuous
on $\left(0,+\infty\right)$, and it may have a jump discontinuity
at zero, of size at most $ch_{\mathbb{L}}\left(\mathbb{K}\right)$. 
\end{lem}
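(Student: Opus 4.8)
The plan is to cast $\xi_\infty$ into the abstract ``saddle-type'' framework of Section~2 and then invoke Corollary \ref{cor:used} with $c=0$, so that the three assertions (support $[0,+\infty)$, absolute continuity on $(0,+\infty)$, and a jump at $0$ of controlled size) all follow at once, provided a suitable stochastically dominated variable $\eta$ is exhibited. First I would verify Assumption \ref{as1} for the process $X_{(i,z,\lambda,\kappa)}=\pm\Delta_i(z,\lambda,\kappa,\mathcal{G}_F)$, whose recursive optimization over $(i,z,\lambda,\kappa)$ restricted to the contact sets $\Gamma_i$ produces $\xi_\infty$. Conditions \ref{as1}.1 and \ref{as1}.2 are inherited from the Gaussian and Malliavin structure of $\mathcal{G}_F$: the uniform second-moment bound $\sup_\theta\mathbb{E}[\Delta(\theta,\mathcal{G}_F)^2]<+\infty$ from Lemma \ref{lem:emp_pr} yields \ref{as1}.1 through a sub-Gaussian maximal inequality, while linearity of $\Delta_i$ in $\mathcal{G}_F$ together with the covariance-kernel computation sketched after Assumption \ref{as1} gives membership in $\mathbb{D}^{1,2}$ and the derivative moment bound. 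A technical point to handle here is that $A_1,A_2$ are unbounded in $z$; I would argue the effective index set is compact because $\Delta_i(z,\cdot,\cdot,\mathcal{G}_F)\to 0$ as $z$ leaves the relevant half-line, so the recursion is unaffected by a compactification.

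The decisive input is Assumption \ref{as1}.3, namely identifying the atom set as $\mathcal{T}=\{0\}$. On the diagonal $\Gamma_i^\star=\{(\kappa,\kappa,z)\}\subseteq\Gamma_i$ one has $\Delta_i(z,\kappa,\kappa,\cdot)\equiv 0$ for every realization, so both the functional and its Malliavin derivative vanish there; this forces $0$ into $\mathcal{T}$ and is the source of the potential atom. For $\tau\neq 0$ the optimizing configuration lies off the diagonal, where $\mathcal{G}_F$ varies non-degenerately and the derivative is a.s.\ nonzero, so $\{\xi_\infty=\tau\}\cap\Omega_\tau$ is null; hence $\mathcal{T}=\{0\}$ and Theorem \ref{aac} already yields absolute continuity away from $0$. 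The non-constancy hypothesis is what places us in the regime where the support is the full half-line rather than the trivial point mass at $0$ (e.g.\ the excluded case $\mathbb{K}=\mathbb{L}$).

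Next I would construct the dominated variable. Set $Z:=\int u\,d\mathcal{G}_F(u)$, a centered Gaussian vector whose covariance is the long-run matrix $\mathbb{V}$, non-degenerate by Assumption \ref{MSDmix}, and let $\eta:=\sup_{\lambda\in\mathbb{L}}\lambda'Z-\sup_{\kappa\in\mathbb{K}}\kappa'Z$. Since $\mathbb{K}\subseteq\mathbb{L}$, $\eta\geq 0$, and because $Z$ is non-degenerate the convex, positively homogeneous map $Z\mapsto\eta$ has $\mathrm{supp}(\eta)=[0,+\infty)$. The inequality $\xi_\infty\geq\eta$ is the other ingredient; I would obtain it by bounding the nested optimization defining $\xi_\infty$ from below along an appropriate family of feasible contact configurations, exploiting that $\mathcal{G}_F$ carries zero total mass, $\int d\mathcal{G}_F=0$, so that the linear part of the truncated partial-moment functionals reproduces exactly $\lambda'Z$ and $\kappa'Z$; making this precise under unbounded support is delicate but not the crux. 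With $\xi_\infty\geq\eta$, $\mathrm{supp}(\eta)=[0,+\infty)$, and $\mathcal{T}=\{0\}$, Corollary \ref{cor:used} delivers every assertion except the explicit jump size, which is then at most $\mathbb{P}(\eta=0)$.

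The main obstacle is the final combinatorial estimate $\mathbb{P}(\eta=0)=\mathbb{P}\big(\sup_{\lambda\in\mathbb{L}}\lambda'Z=\sup_{\kappa\in\mathbb{K}}\kappa'Z\big)\leq ch_{\mathbb{L}}(\mathbb{K})$. For a.e.\ $Z$ the functional $\lambda\mapsto\lambda'Z$ has a unique maximizer over the simplicial complex $\mathbb{L}$, attained at an effective extreme point, and the event in question is, up to a null set, the event that this maximizer already lies in $e_{\mathbb{L}}(\mathbb{K})$. I would decompose this event over $\lambda\in e_{\mathbb{L}}(\mathbb{K})$ and its adjoint vertices $s\in c(\lambda)$ of $\mathbb{S}$, reduce the determination of the maximizers over $\mathbb{S}$ and over the restricted complex to the ordering of the coordinates of $Z$, and bound the Gaussian probability of each such ordering event by the combinatorial proportion $\frac{(n-ch_{\mathbb{L}}(s,\lambda))!}{n!}$ of Definition \ref{def:char}; summing gives $ch_{\mathbb{L}}(\mathbb{K})$, which is $\leq 1$ under Assumption \ref{simpl_span}. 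Controlling the covariance dependence of these cone/ordering probabilities by the purely combinatorial character is the novel and most delicate step; once it is in place, the stated jump bound, and hence the Lemma, follow.
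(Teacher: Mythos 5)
Your proposal follows the same architecture as the paper's proof: verify Assumption \ref{as1} for the process built from the $\Delta_i(\cdot,\mathcal{G}_F)$, identify the atom set as $\mathcal{T}=\{0\}$ via the diagonal $\lambda=\kappa$, exhibit the lower-bounding variable $\eta=\sup_{\lambda\in\mathbb{L}}\lambda^{Tr}Z-\sup_{\kappa\in\mathbb{K}}\kappa^{Tr}Z$ with $Z\sim N\left(0,\mathbb{V}\right)$, apply Corollary \ref{cor:used} with $c=0$, and bound $\mathbb{P}\left(\eta=0\right)$ by the character. But at the two decisive steps your argument has genuine gaps.

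First, the inequality $\xi_\infty\geq\eta$. You propose to derive it directly on the limit, ``along an appropriate family of feasible contact configurations.'' This cannot work as stated: in $\xi_\infty$ (Proposition \ref{EAD}) the optimizations are restricted to the contact sets $\Gamma_i$, and under $\mathbf{H_0}$ the only configurations guaranteed to lie in $\Gamma_i$ are the diagonal ones $\left(\kappa,\kappa,z\right)$, on which $\Delta_i(\cdot,\mathcal{G}_F)\equiv 0$; these yield only the trivial bound $\xi_\infty\geq 0$. Reproducing $\lambda^{Tr}Z-\kappa^{Tr}Z$ from the partial-moment functionals requires the configurations at $z=0$, for both $i=1,2$ and for arbitrary $\lambda$ with its minimizing $\kappa$, to be contact points, i.e., equality of lower and of upper partial moments at zero, which $\mathbf{H_0}$ does not imply. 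The paper avoids this entirely by proving the inequality \emph{before} passing to the limit: $\xi_T\geq\max_i\sup_{\lambda}\inf_{\kappa}\Delta_i\left(0,\lambda,\kappa,\sqrt{T}F_T\right)\geq\eta_T$, where at finite $T$ no contact-set restriction appears, and then letting $T\rightarrow\infty$; this route also delivers the almost-sure domination on a common (Skorokhod) space that Corollary \ref{cor:used} actually requires, rather than a merely distributional comparison.

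Second, the estimate $\mathbb{P}\left(\eta=0\right)\leq ch_{\mathbb{L}}\left(\mathbb{K}\right)$. You set up the same decomposition over $e_{\mathbb{L}}\left(\mathbb{K}\right)$ and adjoint vertices as the paper, but you defer the essential step, namely bounding the ordering/argmax probabilities of the correlated, heteroskedastic Gaussian vector $Z$ by the combinatorial ratios of Definition \ref{def:char}, calling it ``the novel and most delicate step.'' That step is the entire content of the bound, and it cannot be closed by generic Gaussian reasoning: the claim that each such ordering event under $N\left(0,\mathbb{V}\right)$ has probability at most $\left(n-ch_{\mathbb{L}}\left(s,\lambda\right)\right)!/n!$ is false for general positive definite $\mathbb{V}$. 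For instance, with $\mathbb{L}=\mathbb{S}$ and $\mathbb{K}$ the first vertex of $\mathbb{S}$, the event is $\left\{Z_1=\max_j Z_j\right\}$ and $ch_{\mathbb{L}}\left(\mathbb{K}\right)=1/n$, yet for $\mathbb{V}=\mathrm{diag}\left(1,\epsilon,\dots,\epsilon\right)$ this probability tends to $1/2$ as $\epsilon\rightarrow 0$. The paper closes this step by importing a density-comparison theorem from rank-test theory (Theorem 2, Chapter 3, p.\ 37 of Sidak, Sen and Hajek (1999), applied with $p$ the standard $n$-variate normal density and $q$ that of $N\left(0,\mathbb{V}\right)$); some device of this kind, relating the $N\left(0,\mathbb{V}\right)$ law to an exchangeable reference law, is indispensable, and your proposal does not supply it --- indeed the example above shows that any such device must use more than the non-degeneracy of $\mathbb{V}$.
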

\begin{proof}
The result stems from Corollary \ref{cor:used} as long as the requirements
of Assumption \ref{as1} are satisfied and an appropriately bounding
$\eta$ is found. For $\Lambda=\mathbb{L}\times\mathbb{K}\times\left\{ 1,2\right\} \times\mathbb{R}_{-}\times\mathbb{R}_{++}$
where $\left\{ 1,2\right\} $ is considered equipped with the discrete
metric, we have that $X_{\lambda}=1_{1}\left(i\right)\Delta_{1}\left(z_{1},\lambda,\kappa,\mathcal{G}_{F}\right)+1_{2}\left(i\right)\Delta_{2}\left(z_{2},\lambda,\kappa,\mathcal{G}_{F}\right)$,
for $\lambda=\left(\lambda,\kappa,i,z_{1},z_{2}\right)$, has continuous
sample paths from the final assertion of Lemma \ref{lem:emp_pr}.
Then notice that 
\[
\mathbb{E}\left[\sup_{\Lambda}\left(X_{\lambda}^{2}\right)\right]\leq\sum_{i=1,2}\mathbb{E}\left[\sup_{\lambda\in\mathbb{L}}\sup_{\kappa\in\mathbb{K}}\sup_{z\in A_{i}}\Delta_{i}^{2}\left(z,\lambda,\kappa,\mathcal{G}_{F}\right)\right].
\]
From the zero mean Gaussianity of the processes involved, Remark \ref{thm:wd},
the packing numbers of $\Lambda\times\mathbb{R}$ being bounded by
a polynomial w.r.t.\ the inverted radii, Proposition A.2.7 of Van Der
Vaart and Wellner (1996) implies the subexponentiality of the distributions
of the suprema above, and thereby the existence of their second moments.
Hence Hypothesis 1 of Assumption \ref{as1} holds. Using the discussion
in Nualart (2006), immediately after the proof of Proposition 2.1.11
(p. 109) we have that Hypothesis 2 of Assumption \ref{as1} also holds
due to Assumption \ref{MSDmix}. Due to zero mean Gaussianity and
excluding $\mathbb{P}$-negligible events $\Delta_{i}\left(z,\lambda,\kappa,\mathcal{G}_{F}\right)$
is zero only when $\kappa=\lambda$ and it is at most only then that
$\xi_{\infty}$ has degenerate variance. Thereby, $\mathcal{T}=\left\{ 0\right\} $
and we can try to obtain a lower bound for $\xi_{\infty}$. From the
integration by parts formula for the Lebesgue-Stieljes integral and
Assumption \ref{MSDmix}, we get 
\[
\xi_{T}\geq\max_{i}\sup_{\lambda\in\mathbb{L}}\inf_{\kappa\in\mathbb{K}}\Delta_{i}\left(0,\lambda,\kappa,\sqrt{T}F_{T}\right)
\]
\[
\geq\eta_{T}:=\frac{1}{2}\frac{1}{\sqrt{T}}\left(\sup_{\lambda\in\mathbb{L}}\lambda^{Tr}-\sup_{\kappa\in\mathbb{K}}\kappa^{Tr}\right)\sum_{i=1}^{T}\left(Y_{i}-\mathbb{E}\left(Y_{0}\right)\right)
\]
\[
\rightsquigarrow\eta_{\infty}:=\frac{1}{2}\sup_{\lambda\in\mathbb{\mathbb{L}}}\lambda^{Tr}Z-\frac{1}{2}\sup_{\kappa\in\mathbb{K}}\kappa^{Tr}Z,
\]
where $Z\sim N\left(0_{n\times1},\mathbb{V}\right)$. Hence, $\xi_{\infty}\geq\eta_{\infty}\geq0.$

The previous inequality implies the applicability of Corollary \ref{cor:used}
for $c=0$. We obtain the result by estimating an upper bound for
$\mathbb{P}\left(\eta_{\infty}=0\right)$. From Assumption \ref{M-span}
and the non-degeneracy of $\mathbb{V}$ the latter probability equals
exactly the probability that the maximum of the random vector $Z$
occurs at a coordinate that represents an extreme point of $\mathbb{S}$
to which corresponds a common effective extreme point for $\mathbb{L}$
and $\mathbb{K}$ (w.r.t.\ $\mathbb{L}$), say $\lambda$, evaluated
at which $\lambda^{Tr}Z$ is maximal. Using Theorem 2 in chapter 3
(p. 37) of Sidak et al. (1999) by (in their notation) letting $p$
be the density of the $n$-variate standard normal distribution and
$q$ the density of $N\left(0_{n\times1},\mathbb{V}\right)$, along
with Definition \ref{def:char}, we get: 
\[
\mathbb{P}\left(\eta_{\infty}=0\right)\leq ch_{\mathbb{L}}\left(\mathbb{K}\right).
\]
\end{proof}

\end{document}